\def\cR{\mathcal{R}}
\def\cS{\mathcal{S}}
\def\cC{\mathcal{C}}
\def\cH{\mathcal{H}}
\newcommand\level[1]{%
  \ifcase#1\relax\expandafter\chapter\or
    \expandafter\section\or
    \expandafter\subsection\or
    \expandafter\subsubsection\else
    \def\next{\@level{#1}}\expandafter\next
  \fi}
\newcommand{\@level}[1]{%
  \@startsection{level#1}
    {#1}
    {\z@}%
    {-3.25ex\@plus -1ex \@minus -.2ex}%
    {1.5ex \@plus .2ex}%
    {\normalfont\normalsize\bfseries}}
\newcounter{level4}[subsubsection]
\edef\x{\endgroup
    \noexpand\newcounter{level\number\numexpr\count@+1\relax}[level\number\count@]
    \noexpand\@namedef{thelevel\number\numexpr\count@+1\relax}{%
      \noexpand\@nameuse{thelevel\number\count@}.\noexpand\arabic{level\number\numexpr\count@+1\relax}}
    \noexpand\@namedef{level\number\numexpr\count@+1\relax mark}####1{}}
\newtheorem{theorem}{Theorem}[section]
\newtheorem{lemma}[theorem]{Lemma}
\newtheorem{corollary}[theorem]{Corollary}
\newtheorem{proposition}[theorem]{Proposition}
\theoremstyle{definition}
\newtheorem{definition}{Definition}[section]
\newtheorem{example}[theorem]{Example}
\theoremstyle{remark}
\newtheorem{remark}[theorem]{Remark}
\def\la{\lambda}
\def\1{\mathbf{1}}
\def\la{\lambda}
\def\cR{\mathcal{R}}
\def\cS{\mathcal{S}}
\def\cC{\mathcal{C}}
\def\E{\mathbb{E}}
\def\R{\mathbb{R}}
\def\N{\mathbb{N}}
\def\Z{\mathbb{Z}}
\numberwithin{equation}{section}
\begin{document}

\title[On the Abelian Structure of Noncompetitive Chemical Reaction Networks]{\vspace*{-1.4cm}On the Abelian Structure of Noncompetitive Chemical Reaction Networks}

\author{
Louis Faul\textsuperscript{1}, 
Xavier Richard\textsuperscript{2}, 
Marie B\'etrisey\textsuperscript{1},
Christian Mazza\textsuperscript{1}
}

\thanks{\textsuperscript{1} Department of Mathematics, University of Fribourg, Fribourg, Switzerland}
\thanks{\textsuperscript{2} University of Applied Sciences of Western
Switzerland (HES-SO), Sion, Switzerland}

\thanks{\textbf{Corresponding author:} Christian Mazza, Email: \href{mailto:christian.mazza@unifr.ch}{christian.mazza@unifr.ch}}

\date{}
\begin{abstract}
Chemical reaction networks (CRNs) are foundational models for describing complex biochemical processes.
We study noncompetitive CRNs, a class of networks whose static states—where the CRN is inactive—are rate-independent, and that can implement ReLU neural networks. 
CRNs of interest in biochemistry and systems biology are embedded in complex networks so that  CRNs have to respond to internal and environmental cues. We describe the network's response to such perturbations  using a new Markov chain that we call CRN sandpile Markov chain, whose state space is the set of static states. The transition mechanism of the CRN sandpile Markov chain is   defined by adding a molecule of a randomly chosen species  to  a static state,  and then letting the CRN state evolve toward a new static state.
A central contribution of the present  work is  the observation that 
one can associate a natural Abelian Network (AN) to each noncompetitive CRN, and use AN theory to get new mathematical results on noncompetitive CRNs.
For  noncompetitive CRNs on a finite state space, we use AN theory to get that only a fraction of the static states are recurrent for the CRN sandpile Markov chain. We obtain furthermore that the set of recurrent states   is in one to one correspondence with the critical group of the AN, which plays a major role in AN theory.
 Overall, this work establishes a unified algebraic and probabilistic framework for analyzing the long-term behavior of noncompetitive CRNs.  We focus on a special class of noncompetitive CRNs called generalized toppling networks, and obtain new mathematical results both for the CRN and AN settings.
 \end{abstract}

\clearpage\maketitle
\thispagestyle{empty}
\vspace*{-0.6cm}

\color{black}
\section{Introduction}\label{summary_1}
Biochemical networks perform complex tasks in small volume and noisy environments, like e.g. transducing signals through signaling networks or controlling gene regulatory networks. 
Recent research in molecular programming has explored mathematical and computational models of such complex biochemical pathways that can be mapped onto architectures which are compatible with the kind of computations that cells perform in natural environments, like for example signaling cascades,  see, e.g., \cite{Hellingwerf,Samaniego}  where the related biochemical net can be expressed as feedforward phosphorylation-based perceptron.

On the other hand, natural biochemical network dynamics are modeled using chemical reaction network theory (CRN). Chemical implementation of neural network (NN) architectures as well as the associated learning processes have been proposed \cite{Hellingwerf,Bray,Hwa,manoj,AndersonNN,Samaniego}. We focus here on noncompetitive CRNs \cite{Vasic} that are rate-independent, whose equilibria are robust to reaction rates and kinetic laws, and depend solely on the stoichiometric structure of the CRN. Noncompetitive CRNs can furthermore implement ReLU NN, with continuous input variables which are seen as continuous species concentrations in the CRN implementation. To address the complex computational tasks that cell must perform in small volumes and with low molecular counts, some works have proposed to use discrete NN models like chemical Boltzmann machines \cite{manoj,Hwa}. We consider here robust discrete noncompetive CRNs.\\

CRNs of interest in biochemistry and systems biology are embedded in complex networks so that  CRNs have to respond to internal and environmental cues. 
We describe the network's response to such perturbations  using a CRN sandpile Markov chain on the set of static states where the CRN is inactive.
The transition mechanism of the Markov chain is obtained 
     by adding a molecule of a randomly chosen species  to static state, and then letting reactions occur until a new static state is reached. 
     
A central contribution of the present  work is  the observation that 
one can associate a natural Abelian Network (AN) to each
rate-independent noncompetitive CRN, and use AN theory to get new mathematical results on noncompetitive CRNs. 
AN theory is a well-established framework for self-organized criticality, see e.g. \cite{jensen1998self} for an introduction to this concept, and
     AN models are generalizations of  abelian sandpile models (ASM) that were originally introduced in physics as  models for avalanches \cite{Bak,Dhar}. Such models have since been extensively studied in mathematics  and  physics
 \cite{Biggs2,meester2003abelian}.

 For  noncompetitive CRNs of finite state space, we use AN theory to
 study the set of recurrent states of the CRN sandpile  Markov chain. 
 Such states
      will appear infinitely often along the orbits of the dynamics, and it seems plausible that such states play some particular biological role,  while the other static states will not persist in the long run.
We get that the fraction of static states that are recurrent  is in one to one correspondence with the critical group of the AN, leading to exact formulas for the number of recurrent states.

For the purpose of an example, consider the following network of reactions among species of $\mathcal{S}=\{1,2\}$

 \begin{equation}\label{e.intro}   
    2X_1 \longrightarrow X_2 , \ \
    2X_2 \longrightarrow X_1,
\end{equation}
where $X_i$ is a symbol for species $i$, $i=1,2$.
In the discrete setting, CRN states $c=(c_1,c_2)\in\N^2$ are such that $c_i\in\N$ gives the number of molecules of species $i$, $i=1,2$ that are present in state $c$.
The CRN sandpile Markov chain adds a molecule from a randomly chosen species to static state $q$ and lets reactions occur until a new static state $q'$ is reached. 
The static states are 
$q_1 = (0,0),q_2 = (0,1),q_3 = (1,0)$ and $q_4 =(1,1)$. 
Figure \ref{f.intro_example} shows that  $q_2$, $q_3$ and $q_4$ are recurrent. We observe that once the Markov chain has reached one of the recurrent states (shaded brown), it will never reach $q_1$ again. This result will be proved later in Example \ref{exemple1}.
\begin{figure}[h!]
    \centering
    \includegraphics[width=0.3\linewidth]{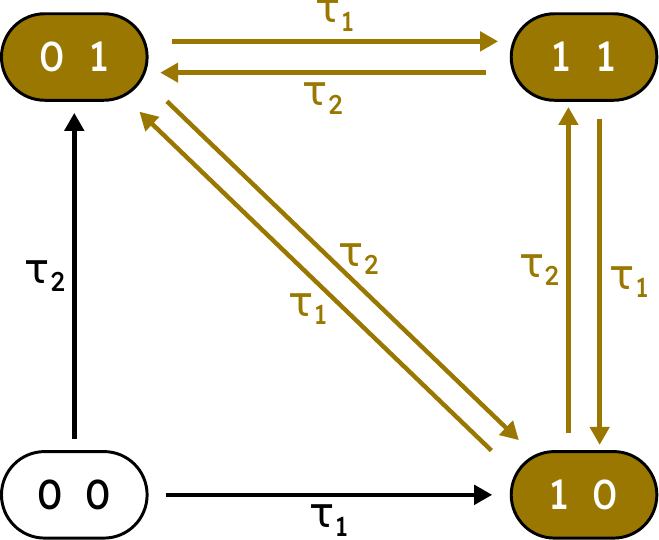}
    \caption{Transitions of the CRN sandpile Markov chain associated to the CRN: $2X_1 \longrightarrow X_2,  2X_2 \longrightarrow X_1$. Recurrent states are shaded brown.The operation of adding one molecule of $X_1$ (resp. $X_2$) and letting the system stabilize is denoted $\tau_1$ (resp. $\tau_2$)} 
    \label{f.intro_example}
\end{figure}

     We further identify structural conditions under which noncompetitive CRNs are execution bounded, meaning that the number of   reactions needed to move $q$ to $q'$ is finite.
     We show that these conditions are equivalent to the halting property of the associated AN. AN theory considers production matrices $P$ that are defined algebraically. Basically, the $(i,j)^{th}$ entry of the production matrix represents the average number of molecules of species $i$ created while processing one molecule of species $j$.
    The production matrix associated with the noncompetitive CRN (\ref{e.intro}) is  given by
 (see  Proposition \ref{p.production} for more details)   
$$P = \begin{pmatrix}
    0 & 1/2 \\
    1/2 & 0
\end{pmatrix}.$$

We show  how a special class of CRNs called generalized toppling networks   can be studied 
using results from AN  theory.
A central result given by Lemma \ref{l.halting}  states that  CRNs that correspond to  generalized toppling networks are execution bounded if and only if the spectral radius of the production matrix
of the associated AN is strictly smaller than one, a simple example being given by the CRN (\ref{e.intro}).
One then obtains a formula giving the number of recurrent states of the CRN sandpile Markov chain as the determinant of a reduced Laplacian matrix from AN theory. In the other direction, results from CRN theory related to execution boundedness, based on the existence of linear potential functions, lead to new results within the AN framework.
Overall, this work establishes a unified algebraic and probabilistic framework for analyzing the long-term behavior of noncompetitive CRNs.

Section \ref{Preliminaries} introduces the relevant notions from CRN theory. Section \ref{s.noncompetitive} provides basic results on noncompetitive CRNs by specifying e.g. the nature of the graph of complexes and recalling that ReLU NN can be embedded in the framework of noncompetitive CRNs. It is shown that the well known Abelian sandpile model (ASM) can be seen as noncompetitive CRN. 
 Section \ref{s.markov_chains} introduces the notion of CRN sandpile Markov chain for noncompetitive CRNs.
Section \ref{s.AbelianNet} presents basic definitions and notions from Abelian network theory.
Section \ref{s.NonCompetAbelianNet} shows how noncompetitive CRNs can be seen as AN. 
Section \ref{s.sandpile-generalized} provides new mathematical results linking CRN and AN sandpiles Markov chains for generalized toppling networks. Lemma \ref{l.map} relates CRN and AN executions, and shows that CRN and AN sandpile Markov chains are identical, while Lemma \ref{l.halting} shows that generalized toppling networks are execution bounded if and only if the spectral radius of the production matrix is smaller than 1. 
Section \ref{s.examples} gives examples of noncompetitive CRNs and shows how our results permit to find recurrent states. Section \ref{s.discussion} discusses the limitations and possible extensions of our work. We also provide an Appendix in Section \ref{BasicsAbelianNetwork} giving the main notions and mathematical results from AN theory.
Figure \ref{ModelsStructure} shows the hierarchical structure of the considered models.

    \begin{figure}[h]
    \centering
    \includegraphics [width=7cm]{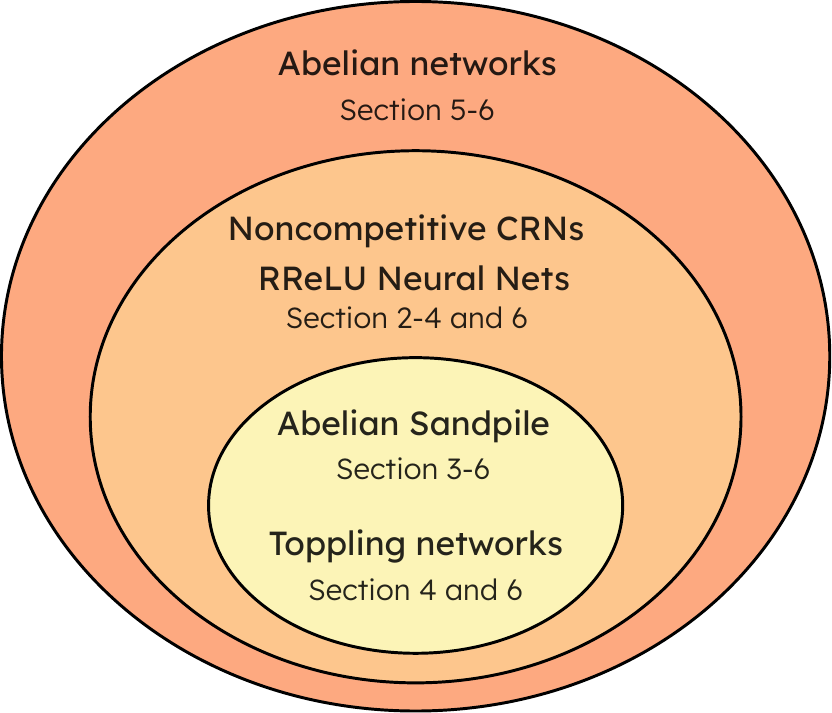}
    \caption{Graphical representation indicating the hierarchy among models\label{ModelsStructure}}
\end{figure}

\section{Preliminaries on CRNs \label{Preliminaries}
}
\subsection{Notations}
For any integers $m$ and $n$, we denote by $\Z^m$ (respectively  $\N^m$) the m-dimensional lattice with (respectively  non-negative) integer entries. Similarly, $\mathbb{R}_{\geq 0}^m$ (respectively $\mathbb{R}_{> 0}^m$) denotes the set of $m$-dimensional real vectors with non-negative (respectively positive) coordinates. The set of $n \times m$ matrices with real entries is denoted by $\mathbb{R}^{n\times m}$, and we denote by $I_n$ the identity matrix of size $n$. For a matrix $M$ in $\mathbb{R}^{n\times m}$ and integers $i,j \in \{1,...,n\}\times \{1,...,m\}$, we denote by $M_{i,.} \in \mathbb{R}^m $ (resp. $M_{.,j} \in \mathbb{R}^n$) the $i^{th}$ line (resp. $j^{th}$ column) of $M$. For two vectors $a,b \in\mathbb{R}^n$, we write $a \leq b$ if $a$ is less or equal to $b$ component-wise i.e. if $ \forall i \in \{1,...n\}, a_i \leq b_i$. Similarly, we use the notations $a < b$, $a\geq b$ and $a>b$. We denote by $0_n$ the vector of $\mathbb{R}^{n}$ with all components being $0$. 
For $k \in \{1,...,n\}$, we denote by $e_k \in \R^n$, the unit vector with a one in the $k^{th}$ component and $0$ everywhere else. Let $A$ be a subset of $\{1,..,n\}$, we denote $1_A = e_A = \sum\limits_{k\in A} e_k$. 

\subsection{Chemical Reaction Networks}\label{CRN}
In this subsection, we provide a brief introduction to the formalism of chemical reaction networks.  For a more detailed study, we refer to \cite{feinberg2019foundations}.
\begin{definition}
    A \emph{Chemical Reaction Network }(CRN) is a triple $\cH=(\cS,\cC,\cR)$ where $\cS$ is the set of \emph{species}, $\cS=\{1,\cdots,N\}$, $\cC$ is the set of \emph{complexes} and $\cR$ is the set of \emph{reactions}, $\cR=\{R_1,\cdots,R_d\}$. 
\end{definition}

The set of \emph{species} is regarded as an orthogonal basis of the Euclidean space $\R^N$, which means that each species $i$ corresponds to the canonical basis vector $e_i$ of $\R^N$.
 A \emph{complex} $\nu$ is a non-negative integer linear combination of species, represented as a vector in $\N^N$, where each component specifies the number of molecules of the corresponding species.
 
The \emph{reaction vector} associated with the $r^{th}$ reaction $\nu_r \to \nu'_r$ where $\nu_r$ is called the \emph{source complex} and $\nu'_r$ the \emph{product complex}, is the integer-valued vector $l_r\in\Z^N$, defined by
 $$l_r =\nu'_r -\nu_r,$$
 This vector indicates, for each species, the net number of molecules of consumed (reactants) or created (products) during the $r^{th}$ reaction. We define the \emph{reactant matrix} $\Gamma^r \in \mathbb{R}^{N \times d}$ and the \emph{product matrix} $\Gamma^p \in \mathbb{R}^{N \times d}$ as follows: $\Gamma^r_{ir}$ denotes the number of molecules of species $i$ appearing as reactants (on the left-hand side) in reaction $r$, while $\Gamma^p_{ir}$ denotes the number of molecules of species $i$ appearing as products (on the right-hand side) in reaction $r$. Notice that $\nu_r = \Gamma^r_{.,r}$ and $\nu'_r = \Gamma^p_{.,r}$.
 The \emph{stoichiometric matrix} $\Gamma \in \mathbb{R}^{N \times d}$ is defined as the difference between the product matrix and the reactant matrix, and has the reaction vectors as its columns, that is,
\[
\Gamma = \Gamma^p - \Gamma^r =  [l_1 \;\; l_2 \;\; \dots \;\; l_d],
\]
where for $r \in \{1,...,d\}$, $l_r = \nu_r' - \nu_r$ corresponds to the $r^{\text{th}}$ reaction vector.
\begin{definition}\label{r.sink}
For a given CRN, a species that appears in some product complex but not
in any source complex is called a sink species.
\end{definition}

\begin{definition}
    A complex $\nu \in \mathcal{C}$ is called \emph{single-species} or \emph{multi-species} if it contains one species or more than one species, respectively.
\end{definition}

 The next section introduces basic notions related to the graphical description of the transition graph of a CRN.
 \subsection{The graph of complexes}\label{s.graph_of_complexes}
 
 The set ${\mathcal C}$ of chemical complexes\index{chemical complex}
 associated with ${\mathcal S}$ and ${\mathcal R}$ is defined as ${\mathcal C}=\cup_{r \in \mathcal{R}} \{\nu_r,\nu'_r\}$. We define ${\mathcal C}_S$ and ${\mathcal C}_P$ to be the sets of source and product complexes, respectively. Note that ${\mathcal C}_S \cup {\mathcal C}_P = {\mathcal C}$, although they are not necessarily disjoint. The set of reactions $ \{\nu_r \rightarrow\nu'_r;\ r\in {\mathcal R}\}$ can then be described using directed edges by writing ${\mathcal E}= \{(\nu_r ,\nu'_r);\ r\in {\mathcal R}\}$. This defines a {\bf directed graph}  ${\mathcal G}=({\mathcal C},{\mathcal E})$,  with node set ${\mathcal C}$ and edge set ${\mathcal E}$. Let $m= \vert {\mathcal C}\vert$ and $d=\vert {\mathcal E}\vert$. 
The support of a complex $\nu= \sum_{s=1}^N \lambda_s e_s$ is defined by
$${\rm supp}(\nu)= \{s\in {\mathcal S}:\ \lambda_s \ne 0\}.$$

 \begin{figure}[h!]
\centering 
   \includegraphics[width=0.6\linewidth]{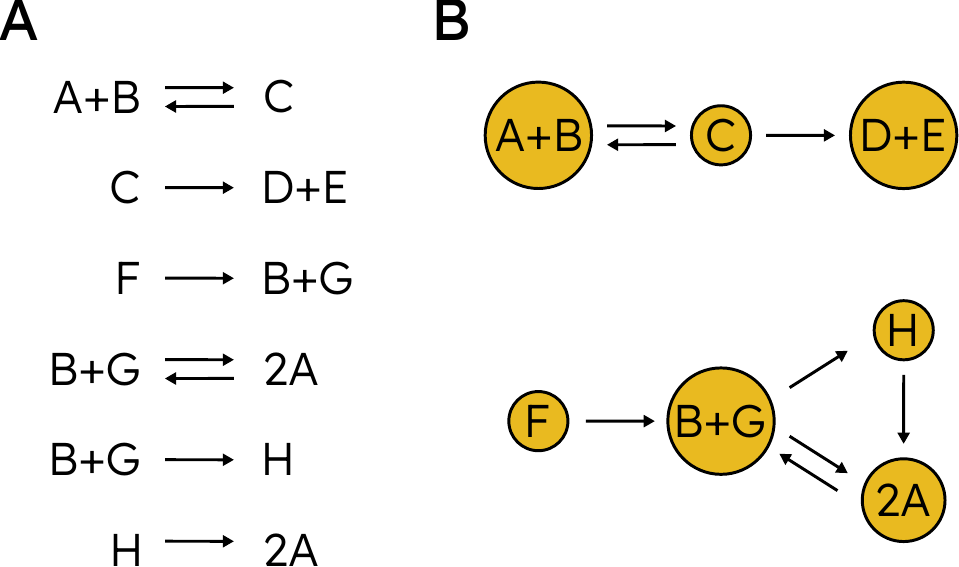}
   \caption{(A) A chemical reaction network. (B) The corresponding graph of complexes.\label{F1}}
\end{figure}

 \subsection{CRN executions and static states}\label{s.static-state}

A {\bf state} of a CRN is a vector $c \in \mathbb{N}^N$. 
A reaction $r$ given by $\nu_r \to \nu'_r$ is said to be applicable in state $c$ when $\nu_r \le c$.
If reaction $r$ is applicable at $c$, its application leads to the new state $c'=c+\nu'_r -\nu_r$, and one writes
$c \to  c'$.

\begin{definition}\label{ExecutionCRN}
    A \emph{CRN} execution $\mathcal{F}$ is a finite or infinite sequence of states $\mathcal{F} = (c^0,c^1,...)$, where $c^n \in \mathbb{N}^N$, such that $c^{n-1} \to  c^n$ and $c^{n-1} \neq c^n$ for all $n$. A state $c' \in \mathbb{N}^N$ is said to be {\bf reachable} from another state $c \in \mathbb{N}^N$, if there exists a finite CRN execution  $\mathcal{F} = (c^0,c^1,\ldots,c^n)$ of length $\ell_{\mathcal{F}}=n$ with $c^0=c$ and $c^n =c'$, and we write $c\hookrightarrow c' $.
\end{definition}

\begin{definition}\label{static}
Given a state $c \in \mathbb{N}^N$; the CRN is active at $c$ as long as $c \geq \nu_r$ for some $r \in \{1,...,d\}$, otherwise we say that the CRN is inactive at $c$.
 A state $c \in \mathbb{N}^N$ at which the CRN is inactive is called a {\bf static state}. We denote by $\operatorname{Stat}\cH$ the set of static states of a CRN $\cH$.
 \end{definition}

 \subsection{Stoichiometric subspace}
 
 Observing that chemical reactions are obtained by adding reaction vectors $l_r$, $r=1,\cdots,d$, one naturally defines
 the \emph{stoichiometric subspace}\index{stoichiometric subspace} ${\mathcal T}$  as the span of the reaction vectors $l_r=\nu'_r -\nu_r$.  For $x\in\R^N_{\geq 0}$ 
 $x+{\mathcal T}$ is called a translation of the stoichiometric space, and
  the sets $(x+{\mathcal T})\cap \R_{\geq 0}^N$ and $(x+{\mathcal T})\cap \R_{> 0}^N$ are referred to as
 the \emph{stoichiometric compatibility classes}  and the \emph{positive stoichiometric compatibility classes} associated with $x$, respectively. Similarly, we define the \emph{discrete stoichiometric subspace} ${\mathcal T'}$ as the set of all the linear combinations of the reaction vectors with integer coefficients, and for $x\in\Z^N_{\geq 0}$, $(x+{\mathcal T'})\cap \Z_{\geq 0}^N$ and $(x+{\mathcal T'})\cap \Z_{> 0}^N$  are referred as the \emph{discrete stoichiometric compatibility classes} and the \emph{discrete positive stoichiometric compatibility classes} associated with x, respectively.


\section{Non competitive CRNs\label{s.noncompetitive}}

We focus on a particular class of chemical reaction networks known as noncompetitive CRNs \cite{Vasic}. This class exhibits favorable properties both in terms of convergence and in their ability to approximate neural networks. These models are motivated by the experimentally observed computational robustness of biochemical CRNs, that are embedded in complex and fluctuating biochemical environments in systems biology.

\begin{definition}\label{NonCompetitive}
A CRN is said to be {\bf noncompetitive} if every species whose molecular count is decreased in a reaction appears as a reactant in that reaction only.
\end{definition}

\begin{remark}
    We will see in Section \ref{s.markov_chains} that static states are absorbing for the usual CTMC associated to CRNs but not for the new  CRN sandpile Markov chain that we will introduce. 
\end{remark}
\begin{remark}\label{r.unique-static}
     Noncompetitive CRNs have the property that each  stoichiometric compatibility class contains exactly one static state for continuous and discrete versions of the CRNs
  
  \end{remark}
  
 \begin{theorem}\label{ThVasic}[Theorem 4 of \cite{Vasic}]
  Assume a   CRN is noncompetitive. If
$c  \hookrightarrow c' $ via a CRN execution $\mathcal{F}$ with length $\ell_{\mathcal{F}}$ and $c'$  is static, then there is no
path from $c$ with length longer than $\ell_{\mathcal{F}}$, any path with length $\ell_{\mathcal{F}}$
also ends in $c'$, and any path with length shorter than $\ell_{\mathcal{F}}$ ends in a
state which is not static.   
 \end{theorem}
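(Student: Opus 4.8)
The plan is to reduce the entire statement to a single local commutation (``diamond'') property and then run a short induction on the path length. The crux is the following lemma: if two \emph{distinct} reactions $r\neq r'$ are both enabled at a state $c$ (that is, $c\geq \nu_r=\Gamma^r_{.,r}$ and $c\geq \nu_{r'}$), then $r'$ is still enabled after firing $r$, $r$ is still enabled after firing $r'$, and both orders reach the common state $c+l_r+l_{r'}$. The equality of endpoints is free, since reaction vectors add commutatively; the real content is preservation of enabledness. To prove $c+l_r\geq \nu_{r'}$ I would argue species by species: if $r$ decreases species $i$ (i.e.\ $l_{r,i}<0$), then noncompetitiveness (Definition \ref{NonCompetitive}) forces $i$ to appear as a reactant of $r$ \emph{only}, so $i$ is not a reactant of $r'$, whence $\nu_{r',i}=0\leq (c+l_r)_i$; and if $l_{r,i}\geq 0$, then $(c+l_r)_i\geq c_i\geq \nu_{r',i}$. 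Thus $c+l_r\geq \nu_{r'}$, and by symmetry the diamond closes. This is precisely the step where noncompetitiveness is indispensable: without it, firing $r$ could deplete a reactant shared with $r'$ and destroy enabledness.

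With the diamond lemma in hand I would prove, by strong induction on $\ell_p=n$, the sharper statement that \emph{every} path issued from $\mathbf{a}$ has length at most $n$, while every \emph{maximal} path has length exactly $n$ and terminates at $\mathbf{b}$. The base case $n=0$ is immediate, as $\mathbf{a}=\mathbf{b}$ is then static. For the inductive step, write the terminating path as $p=(r_1,\dots,r_n)$ and let $q=(s_1,\dots,s_k)$ be an arbitrary path from $\mathbf{a}$. If $s_1=r_1$, both descend to $\mathbf{a}+l_{r_1}$, which carries a terminating path of length $n-1$, and the induction hypothesis applies verbatim. If $s_1\neq r_1$, I would commute the two first moves and migrate the terminating path: applying the hypothesis at $\mathbf{a}+l_{r_1}$ shows the maximal path there beginning with $s_1$ reaches $\mathbf{b}$ in $n-1$ steps, so $\mathbf{c}:=\mathbf{a}+l_{r_1}+l_{s_1}$ reaches $\mathbf{b}$ in $n-2$ steps; passing through the other corner, $\mathbf{a}+l_{s_1}\xrightarrow{r_1}\mathbf{c}\to\cdots\to\mathbf{b}$ exhibits a terminating path of length $n-1$ from $\mathbf{a}+l_{s_1}$, and the hypothesis applied there controls the tail $(s_2,\dots,s_k)$, giving $k\leq n$, with $k=n$ and endpoint $\mathbf{b}$ whenever $q$ is maximal. (When $n=1$ the case $s_1\neq r_1$ is automatically vacuous, since $\mathbf{a}+l_{r_1}=\mathbf{b}$ is static and admits no enabled reaction.)

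The three assertions then fall out of this proposition: no path exceeds length $\ell_p=n$; a path of length exactly $n$ cannot be prolonged without violating the bound, hence is maximal and so ends at $\mathbf{b}$; and a path of length $k<n$ must terminate at a non-static state, since otherwise it would be a maximal path of length strictly below $n$. I expect the principal obstacle to be organizational rather than conceptual—namely the careful bookkeeping of the ``commute-and-migrate'' step in the $s_1\neq r_1$ case, ensuring that the length of the reference terminating path genuinely drops so that the induction is legitimate. Everything else is driven by the diamond lemma, whose only subtlety is the case split on whether species $i$ is consumed by $r$, and there noncompetitiveness does all the work.
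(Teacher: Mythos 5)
The paper does not prove Theorem \ref{ThVasic}: it is imported verbatim as Theorem~4 of \cite{Vasic}, so there is no in-paper argument to compare against. Judged on its own, your proof is correct and is essentially the standard confluence argument underlying the cited result: the diamond lemma is exactly where noncompetitiveness enters (if $l_{r,i}<0$ then $i$ is a reactant of no other reaction, so firing $r$ cannot disable $r'$; the implicit non-negativity $(c+l_r)_i\ge 0$ follows from $c+l_r\ge \nu_r'$), and the commute-and-migrate induction correctly transports a terminating path of length $n-1$ to $\mathbf{a}+l_{s_1}$, with the $n=1$ subcase handled by noting that $s_1$ would otherwise be enabled at the static state $\mathbf{b}$. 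The only step worth making explicit is that maximal extensions exist (greedy firing terminates because the induction hypothesis already bounds all path lengths from $\mathbf{a}+l_{r_1}$); with that remark the argument is complete and matches the spirit of the original proof in \cite{Vasic}.
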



\subsection{Execution boundedness}


We describe the conditions under which a  CRN execution is finite.

\begin{definition}
    A CRN is said to be \emph{execution bounded} at $c \in \mathbb{N}^N$ if all CRN executions starting at $c$ are finite, and a CRN is \emph{entirely execution bounded} if it is execution bounded from every initial state.
\end{definition}
Intuitively, a CRN will be execution bounded if the total "energy" of the system decreases with each reaction. We now introduce some tools and give an explicit criterion for deciding whether a CRN is execution bounded.

\begin{definition}\label{d.linear_potential}
    A \emph{linear potential function} $\psi: \mathbb{R}^N_{\geq 0} \to \mathbb{R}_{\geq 0}$ is a non-negative linear function such that, for each reaction $ \nu \to\nu'\in \mathcal{E}$, $\psi(\nu')-\psi(\nu) <0$.
\end{definition}
By definition, 
$$\psi(a) = \sum_{i= 1}^N v_i a_i \geq 0,\ \forall  a \in \mathbb{R}^N_{\geq 0},$$
and therefore $v_i \geq 0$ for all $i\in \{1,...,N\}$.

\begin{theorem}(\cite{doty2024computational},Theorem 6.5, and \cite{czerner})\label{condition_bounded}
    A CRN has a linear potential function if and only if it is entirely execution bounded.
\end{theorem}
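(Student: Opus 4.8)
The plan is to split the equivalence into its two directions and to recognize that, once the dynamics are stripped away, the whole statement is a geometric fact about the cone generated by the reaction vectors $l_1,\dots,l_d$, which I reach through a theorem of the alternative. I will use the standing convention that reactions are nontrivial, i.e. $l_r\neq 0$ for every $r$.

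The forward implication is routine. Suppose $\psi(a)=\sum_i v_i a_i$ is a linear potential function, so $v\ge 0$ and $\psi(l_r)=v\cdot l_r<0$ for every reaction $r$. Put $\epsilon:=\min_{1\le r\le d}(-v\cdot l_r)>0$, which is positive since there are finitely many reactions. For any execution step $c_{n-1}\hookrightarrow c_n$ we have $c_n=c_{n-1}+\Gamma u_n$ with $u_n\in\mathbb N^d$ and $u_n\ne 0$ (because $c_{n-1}\ne c_n$), hence by linearity
$$\psi(c_n)-\psi(c_{n-1})=\sum_{r}(u_n)_r\,\psi(l_r)\le-\epsilon\sum_r(u_n)_r\le-\epsilon.$$
As $\psi\ge 0$ everywhere, any execution starting at $c_0$ has at most $\psi(c_0)/\epsilon$ steps, so the CRN is entirely execution bounded.

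For the converse I would first prove a purely linear-algebraic equivalence: a linear potential function exists if and only if the only $u\in\mathbb N^d$ with $\Gamma u\ge 0$ is $u=0$. Existence of $\psi$ is exactly feasibility of the system $v\ge 0$, $\Gamma^\top v\le-\mathbf 1$ (the strict inequalities $v\cdot l_r<0$ can be rescaled to $\le-\mathbf 1$ since $d<\infty$). Applying Farkas' lemma to this system, infeasibility is equivalent to the existence of nonnegative multipliers $(w,u)$ with $-w+\Gamma u=0$ and $\mathbf 1^\top u>0$, i.e. to a nonzero $u\in\mathbb R^d_{\ge 0}$ with $\Gamma u=w\ge 0$. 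Because $\Gamma$ is integral the cone $\{u\ge0:\Gamma u\ge0\}$ is rational, so it contains a nonzero real point iff it contains a nonzero integer point, giving the claimed equivalence. It then remains to match this combinatorial condition with the dynamics, i.e. to show the CRN fails to be entirely execution bounded iff there is a nonzero $u\in\mathbb N^d$ with $\Gamma u\ge 0$. If such $u$ exists, start from the configuration $c$ with $c_i=\sum_r u_r\,\Gamma^r_{ir}$; any ordering of the multiset of reactions encoded by $u$ is then a legal firing sequence (the total reactant demand is exactly $c$, so every prefix leaves enough molecules to fire the next reaction), and it ends at $c+\Gamma u\ge c$. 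Since $c+\Gamma u\ge c$ the same sequence is enabled again, and iterating produces an infinite execution with distinct consecutive states (as each $l_r\ne 0$). Conversely, an infinite execution $c_0,c_1,\dots$ lives in $\mathbb N^N$, which is well-quasi-ordered by Dickson's lemma; I extract an infinite nondecreasing subsequence $c_{n_0}\le c_{n_1}\le\cdots$ and set $u:=U_{n_1}-U_{n_0}$, where $U_n$ is the cumulative firing vector with $c_n=c_0+\Gamma U_n$. Then $u\in\mathbb N^d$, $\Gamma u=c_{n_1}-c_{n_0}\ge 0$, and $u\ne 0$ because at least one reaction fires between the two indices. Combining the two equivalences closes the proof.

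The main obstacle is the converse, and within it the places where real and combinatorial structure must be reconciled: getting the exact form of the theorem of the alternative right, so that the strict inequalities and the sign constraint $v\ge 0$ are carried correctly through Farkas duality, and the passage between real and integral solutions of $\Gamma u\ge 0$, which relies on rationality of $\Gamma$. The Dickson's lemma extraction and the large-initial-state realization are standard once one views the model as a vector addition system, but the nontriviality assumption $l_r\ne 0$ is genuinely needed there to ensure the periodic repetition is an execution in the sense of Definition \ref{ExecutionCRN}, with distinct consecutive configurations.
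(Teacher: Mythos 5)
The paper states this theorem as an imported result from \cite{doty2024computational} and \cite{czerner} and supplies no proof of its own, so there is no internal argument to compare against; your proof is correct and reconstructs exactly the strategy those references (and the paper's later proof of Lemma \ref{l.halting}) rely on, namely Farkas duality to reduce existence of a linear potential to the nonexistence of a nonzero $u\ge 0$ with $\Gamma u\ge 0$, together with the pumping construction and Dickson's lemma to tie that condition to execution boundedness. Your standing convention $l_r\ne 0$ is genuinely needed (a reaction $\nu\to\nu$ would falsify the statement as written) and is the standard tacit assumption in the cited works, so flagging it is appropriate rather than a gap.
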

To decide whether a CRN has a linear potential function, one has to determine whether a given system of linear inequalities has a solution using linear programming (see \cite{doty2024computational}).


\subsection{Structural description of graphs of complexes for noncompetitive CRNs\label{s.structural}}


     Let $U_0$ be the set of species that never decrease along any reaction, such as species that are catalysts or enzymes.  The noncompetitivity of the CRN then implies that one can partition the set ${\mathcal{S}}\setminus U_0$ which are decreased in some reaction into a disjoint union of subsets $U_k\subset {\mathcal S}$,  where $k=1,\cdots, p$, $p\le d$.

\begin{definition}\label{ComplexTypes}
Let $\nu \in\mathcal{C} $. $\nu$ is a type 0 complex when ${\rm supp}(\nu)\subset U_0$. $\nu$ is a type 1a complex when
$${\rm supp}(\nu)\cap ({\mathcal S}\setminus U_0) = U_k,$$ for a unique subset $U_k$, $k\ne 0$. $\nu$ is a complex of type 1b when
${\rm supp}(\nu)\cap ({\mathcal S}\setminus U_0)$ is strictly included in some $U_k$, $k\ne 0$.
$\nu$ is a type 2 complex when
$${\rm supp}(\nu)\cap U_k\ne \emptyset,\ \  {\rm supp}(\nu)\cap U_{k'}\ne \emptyset,
$$
for some $k\ne k'$ and $k,k'\ne 0$.

\end{definition}

\begin{lemma}\label{StructuralComplexGraph}
Assume that the CRN is noncompetitive. Complexes of types 1b and 2 cannot be source complexes.
Let $\nu\in\mathcal{C}_S$ be a type 1a complex such that $(\nu,\nu')\in\mathcal{E}$ for a unique reaction. Then $\nu$ and $\nu'$
are such that
\begin{equation}\label{Comb1}
 \nu = \sum_{i\in U_0}\alpha_i e_i + \sum_{i\in U_k} \beta_{ik} e_i,
 \end{equation}
 where $e_i$ is the canonical basis vector of $\R^N$ associated with species $i$, and
 \begin{equation}\label{Comb2}
 \nu' = \sum_{i\in U_0}\alpha'_i e_i + \sum_{l=1}^p\sum_{i\in U_l} \beta'_{il} e_i,
 \end{equation}
   for stoichiometric coefficients $\alpha_i$, $\alpha'_i$, $\alpha'_i$, $\beta_{ik}$
 and $\beta'_{il}$, that satisfy the constraints  $\alpha'_i\ge \alpha_i$, for all $i\in U_0$, and
 $\beta'_{il}<\beta_{il}$, for all $i\in U_l$, $l=1,\ldots,p$.

\end{lemma}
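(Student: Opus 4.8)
The plan is to reduce everything to one structural fact about how the blocks $U_k$ sit inside the reaction structure, and then read off both assertions from it. For each $k\in\{1,\dots,p\}$ the block $U_k$ consists of species that are decreased in some reaction; by Definition \ref{NonCompetitive} a decreased species appears as a reactant in that reaction \emph{only}, so all species of $U_k$ share one and the same decreasing reaction, which I denote $\rho(k)$, and $U_k$ is exactly the set of species whose count drops in $\rho(k)$. The pivotal claim I would isolate first is: a source complex $\nu_r$ meets a block $U_k$ (that is, $\mathrm{supp}(\nu_r)\cap U_k\neq\emptyset$) if and only if $r=\rho(k)$, and in that case $\mathrm{supp}(\nu_r)\supseteq U_k$. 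The forward implication holds because a species $i\in\mathrm{supp}(\nu_r)\cap U_k$ is a reactant of $r$ yet, being decreased, is a reactant only of $\rho(k)$, forcing $r=\rho(k)$; the inclusion holds because every species of $U_k$ is decreased in $\rho(k)=r$, hence has positive reactant coefficient there, hence lies in $\mathrm{supp}(\nu_r)$.

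From this the first assertion is immediate. A type $1$ complex meets two distinct blocks $U_k,U_{k'}$; if it were a source complex $\nu_r$, the claim would give $r=\rho(k)=\rho(k')$ and hence $k=k'$, a contradiction. A type $4$ source complex would satisfy $\mathrm{supp}(\nu_r)\cap(\mathcal{S}\setminus U_0)\subsetneq U_k$ while meeting $U_k$, contradicting the inclusion $\mathrm{supp}(\nu_r)\supseteq U_k$ just obtained. Hence neither type can be a source complex.

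For the second assertion, take $\nu=\nu_r$ a type $2$ source complex with outgoing reaction $\nu\to\nu'$ (the uniqueness of this reaction is in fact automatic, since the $U_k$-species of $\nu$ are reactants only of $\rho(k)$). Type $2$ means $\mathrm{supp}(\nu)\cap(\mathcal{S}\setminus U_0)=U_k$, so $\nu$ is supported on $U_0\cup U_k$; setting $\alpha_i:=\nu_i$ for $i\in U_0$ and $\beta_{ik}:=\nu_i>0$ for $i\in U_k$ yields \eqref{Comb1}. Decomposing $\mathcal{S}=U_0\cup U_1\cup\cdots\cup U_p$ and reading off the coordinates of $\nu'$ as $\alpha'_i$ and $\beta'_{il}$ gives \eqref{Comb2} with all coefficients non-negative. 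The sign constraints then follow from the structural fact $r=\rho(k)$: each $i\in U_0$ is never decreased, so $\alpha'_i=\nu'_i\ge\nu_i=\alpha_i$; each $i\in U_k$ is decreased precisely in $\rho(k)=r$, so $\beta'_{ik}=\nu'_i<\nu_i=\beta_{ik}$. This is the asserted strict inequality for the firing block $l=k$; for $l\neq k$ one has $\beta_{il}=0$ and only $\beta'_{il}\ge0$, as $\nu'$ may create species of other blocks.

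The routine parts are the two sign checks and the coordinate bookkeeping. The one step demanding care—and the real content of the lemma—is the opening structural fact, namely that the block $U_k$ coincides with the full set of species decreased in a single well-defined reaction $\rho(k)$; this is exactly where noncompetitivity (Definition \ref{NonCompetitive}) enters, and it is what forces the block decomposition of Definition \ref{ComplexTypes} to be compatible with the reaction arrows. I would also flag explicitly that the strict decrease $\beta'<\beta$ is meaningful only for the block attached to the firing reaction, so that the statement should be read with $l=k$.
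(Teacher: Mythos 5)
Your proof is correct. Note, however, that the paper states Lemma \ref{StructuralComplexGraph} without giving any proof at all, so there is no argument of the authors' to compare yours against; what you have written is the natural (and, as far as I can tell, the intended) justification. Your "pivotal claim" — that under noncompetitivity each block $U_k$ is precisely the fiber of the map sending a decreased species to the unique reaction $\rho(k)$ in which it is a reactant, whence a source complex meeting $U_k$ must be the source of $\rho(k)$ and must contain all of $U_k$ in its support — is exactly the structural content behind the partition $\mathcal{S}\setminus U_0=\bigcup_k U_k$ that the paper introduces just before the lemma, and both assertions do follow from it as you describe. You are also right to flag that the strict inequality $\beta'_{il}<\beta_{il}$ can only be meant for the block $l=k$ attached to the firing reaction (for $l\neq k$ one has $\beta_{il}=0$ and $\beta'_{il}\ge 0$, so the inequality as literally written cannot hold); this is an imprecision in the statement rather than a gap in your argument, and your reading is the one consistent with the rest of the section.
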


\begin{figure}[h!]
\centering 
   \includegraphics[width=0.8\linewidth]{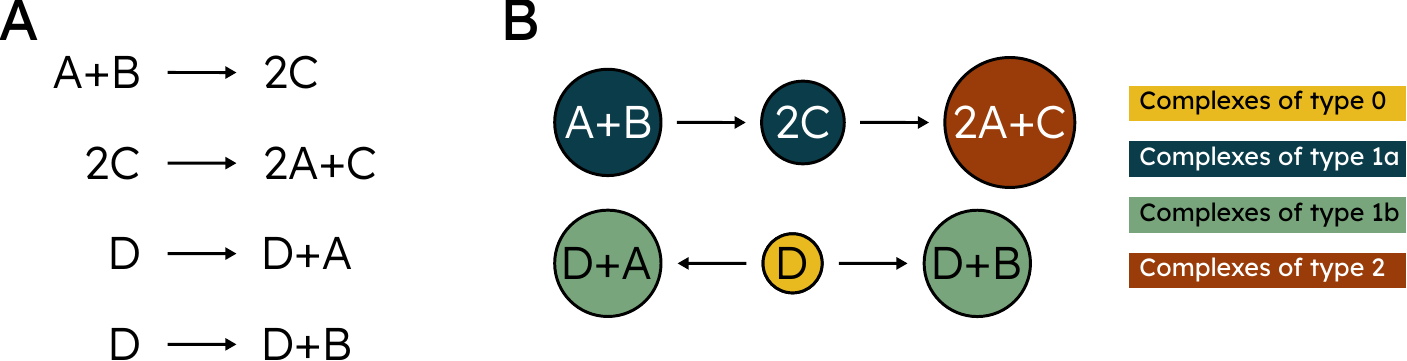}
   \caption{Example of noncompetitive CRN. A: The set of involved reactions. B: The related graph of complexes with the four possible complex types.\label{F2}}
   \end{figure}

Lemma \ref{StructuralComplexGraph} implies that the graphs of complexes of noncompetitive CRNs possess the following generic properties:
\begin{itemize}
\item{} Complexes of type 0 with ${\rm supp}(\nu)\subset U_0$ can be the origin of possibly multiple edges of ${\mathcal E}$.
\item{} Complexes of type 1a with ${\rm supp}(\nu) \cap ({\mathcal S}\setminus U_0)   = U_k$ for some $k$ can be the origin of a unique edge of ${\mathcal E}$.
\item{} Complexes of types 1b and 2 cannot be the origin of a directed edge of ${\mathcal E}$.

\end{itemize}

\begin{lemma}\label{no3}
    In an execution bounded noncompetitive CRN, complexes of type 0 cannot be the origin of a directed edge of ${\mathcal E}$.
\end{lemma}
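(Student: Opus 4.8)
The plan is to argue by contradiction, exploiting the linear potential function guaranteed by Theorem \ref{condition_bounded}. Since the CRN is (entirely) execution bounded, there exists a linear potential function $\psi(a) = \sum_{i=1}^N v_i a_i$ with all $v_i \geq 0$, satisfying $\psi(\nu') - \psi(\nu) < 0$ for every reaction $(\nu \to \nu') \in \mathcal{E}$. I would suppose, for contradiction, that some type 3 complex $\nu$ --- so that ${\rm supp}(\nu) \subset U_0$ --- is the origin of a directed edge $(\nu \to \nu') \in \mathcal{E}$.

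The key step is to show that the associated reaction vector $l = \nu' - \nu$ is componentwise nonnegative. To this end I would split the coordinates into two groups. For $i \in U_0$, the defining property of $U_0$ (species never decreased in any reaction) gives directly $\nu'_i \geq \nu_i$. For $i \notin U_0$, the type 3 condition ${\rm supp}(\nu) \subset U_0$ forces $\nu_i = 0$, whence $\nu'_i - \nu_i = \nu'_i \geq 0$. Combining the two cases yields $\nu' \geq \nu$ componentwise, i.e.\ $l \geq 0_N$.

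Evaluating the potential then produces the contradiction: $\psi(\nu') - \psi(\nu) = \sum_{i=1}^N v_i (\nu'_i - \nu_i) \geq 0$, since each summand is a product of two nonnegative numbers, and this contradicts the strict inequality $\psi(\nu') - \psi(\nu) < 0$ demanded of a linear potential function. Hence no type 3 complex can be a source complex, which is the claim.

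The argument is short once the setup is in place, so I do not anticipate a genuine obstacle; the proof is really just the interaction of two hypotheses. The only point requiring care is the clean case-split that merges the combinatorial input (the support of a type 3 complex lies entirely in $U_0$) with the monotonicity input (species in $U_0$ are never consumed on net), so that together they pin down the sign of the \emph{whole} reaction vector. This is precisely the place where noncompetitivity --- through the distinguished set $U_0$ --- and execution boundedness --- through the existence of $\psi$ --- come together.
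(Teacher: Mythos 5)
Your proof is correct and follows essentially the same route as the paper: both derive $\nu' \geq \nu$ componentwise from ${\rm supp}(\nu)\subset U_0$ and then contradict the strict decrease required of the linear potential function supplied by Theorem~\ref{condition_bounded}. Your explicit case split ($i\in U_0$ versus $i\notin U_0$) merely spells out a step the paper states without detail.
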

\begin{proof}
    Suppose that $(\nu, \nu')\in \mathcal{E}$, with $\nu=\sum_{s=1}^N \lambda_s e_s$, $\nu'=\sum_{s=1}^N \lambda'_s e_s$ and ${\rm supp}(\nu)\subset U_0$. Then $\nu' \geq \nu$ with a strict inequality for at least one species. For any non-negative linear function $\psi : \mathbb{R}^N_{\geq 0} \to \mathbb{R}_{\geq 0}$, 
    with $\psi(a)=\sum_{i=1}^N v_i a_i$, $v_i\ge 0$ $\forall i$, one gets that
    $$\psi(\nu') - \psi(\nu) = \sum_{s=1}^N v_s \left(\lambda'_s-\lambda_s\right) \geq 0.  $$
    So there is no linear potential function, and the CRN cannot be execution bounded.
\end{proof}

\begin{corollary}\label{Structure}
Assume that the CRN is execution bounded and noncompetitive. $\mathcal{C}_S$ cannot contain type 0, 1b and 2 complexes.
Let $\mathcal{C}_f = {\mathcal C}_P\cap \mathcal{C}_S^c$. 
The graph of complexes is a directed pseudoforest or a functional graph for the mapping $F:\ \mathcal{C}\to\mathcal{C}$ such that
 $F(\nu)=\nu'$ for edges $(\nu\to\nu')\in\mathcal{E}$ where $\nu$ is a type 1a source complex,  and with fixed points from $\mathcal{C}_f$. Each connected component of 
 $(\mathcal{C},\mathcal{E})$ is therefore a pseudotree, that is, a directed tree rooted on either a complex of type 0, 1a, 1b or 2, or  on a cycle composed of type 1a complexes.
\end{corollary}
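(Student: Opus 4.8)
The plan is to combine the two preceding lemmas with a short analysis of the functional structure that noncompetitivity imposes on the graph of complexes. First I would settle the opening assertion: by Lemma \ref{StructuralComplexGraph} complexes of types $1$ and $4$ can never be source complexes, and by Lemma \ref{no3}---which is exactly where execution boundedness is used---complexes of type $3$ cannot be the origin of an edge either. Hence $\mathcal{C}_s$ consists solely of type $2$ complexes, which is the first sentence.

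Next I would show that every node of $(\mathcal{C},\mathcal{E})$ has out-degree at most one. A complex that is not a source has out-degree $0$ by definition, while a source complex is necessarily of type $2$ by the previous step, and a type $2$ source complex is the origin of a \emph{unique} edge by the structural consequences of Lemma \ref{StructuralComplexGraph}. Since $\mathcal{C} = \mathcal{C}_s \cup \mathcal{C}_P$, any complex that is not a source must be a product, so $\mathcal{C}_s^c = \mathcal{C}_P \cap \mathcal{C}_s^c = \mathcal{C}_f$. This lets me define $F \colon \mathcal{C} \to \mathcal{C}$ by $F(\nu) = \nu'$ when $\nu$ is a type $2$ source complex with unique outgoing edge $(\nu \to \nu')$, and $F(\nu) = \nu$ for $\nu \in \mathcal{C}_f$; the map is total and well defined precisely because of the uniqueness just recorded. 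Adjoining a self-loop at each fixed point turns $(\mathcal{C},\mathcal{E})$ into the functional graph of $F$, so $(\mathcal{C},\mathcal{E})$ is a directed pseudoforest.

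It then remains to read off the component structure. A functional graph decomposes into weakly connected components, each of which is a pseudotree: trees whose edges are oriented towards a single cycle. A cycle of length at least two consists of nodes of out-degree one, hence of type $2$ source complexes, which is the ``cycle composed of type $2$ complexes'' alternative; a cycle of length one is a fixed point of $F$, i.e. an element of $\mathcal{C}_f$, which may be of type $1$, $2$, $3$ or $4$. To pin the roots down to fixed points I would finally invoke a linear potential function $\psi$, which exists by Theorem \ref{condition_bounded}: if $\nu_0 \to \nu_1 \to \cdots \to \nu_{k-1} \to \nu_0$ were a directed cycle with $k \geq 2$, chaining the strict inequalities $\psi(\nu_{j+1}) < \psi(\nu_j)$ around it would give $\psi(\nu_0) < \psi(\nu_0)$, a contradiction. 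Hence every cycle is a single fixed point and every component is a directed tree rooted at a complex of $\mathcal{C}_f$.

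The routine parts are the out-degree bookkeeping and the identification $\mathcal{C}_s^c = \mathcal{C}_f$. The step needing most care is reconciling the purely graph-theoretic pseudotree description---which on its own permits a type $2$ cycle as a root---with the execution boundedness hypothesis, which through the potential function forbids any nontrivial cycle and therefore forces every root to be a fixed point of $F$. I would state explicitly that the ``cycle composed of type $2$ complexes'' case is structurally possible but, in the execution-bounded regime, vacuous, so that the two descriptions offered by the statement are mutually consistent.
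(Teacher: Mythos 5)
Your proposal is correct, and for the main structural claims it follows exactly the route the paper intends (the corollary is stated without an explicit proof, as an immediate consequence of Lemma \ref{StructuralComplexGraph}, the bulleted list after it, and Lemma \ref{no3}): sources can only be type~2 complexes, each type~2 source has a unique outgoing edge, hence every node has out-degree at most one, the identification $\mathcal{C}_f=\mathcal{C}_s^c$ follows from $\mathcal{C}=\mathcal{C}_S\cup\mathcal{C}_P$, and the pseudoforest/pseudotree decomposition is then standard. Where you genuinely depart from the paper is the last step: you invoke the linear potential function $\psi$ from Theorem \ref{condition_bounded} to chain the strict inequalities $\psi(\nu_{j+1})<\psi(\nu_j)$ around any putative directed cycle and conclude that no cycle of length $\ge 2$ can exist, so that every component is in fact a tree rooted at a fixed point of $F$ in $\mathcal{C}_f$. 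This is a correct and strictly sharper conclusion than the corollary states (the disjunction still holds, with the first branch always realized), and it is worth emphasizing because the paper appears not to have noticed it: Figure \ref{F3}C presents a component rooted on a cycle of type~2 complexes as an example of an \emph{execution bounded} noncompetitive CRN, whereas such a cycle $\nu_0\to\nu_1\to\cdots\to\nu_0$ yields the infinite legal execution $\nu_0,\nu_1,\ldots,\nu_0,\nu_1,\ldots$ and hence is incompatible with execution boundedness. Your explicit remark that the cycle alternative is structurally possible for general noncompetitive CRNs but vacuous in the execution-bounded regime reconciles the two descriptions and would be a useful clarification to add to the text.
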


 \begin{figure}[h!]
\centering 
   \includegraphics[width=0.8\linewidth]{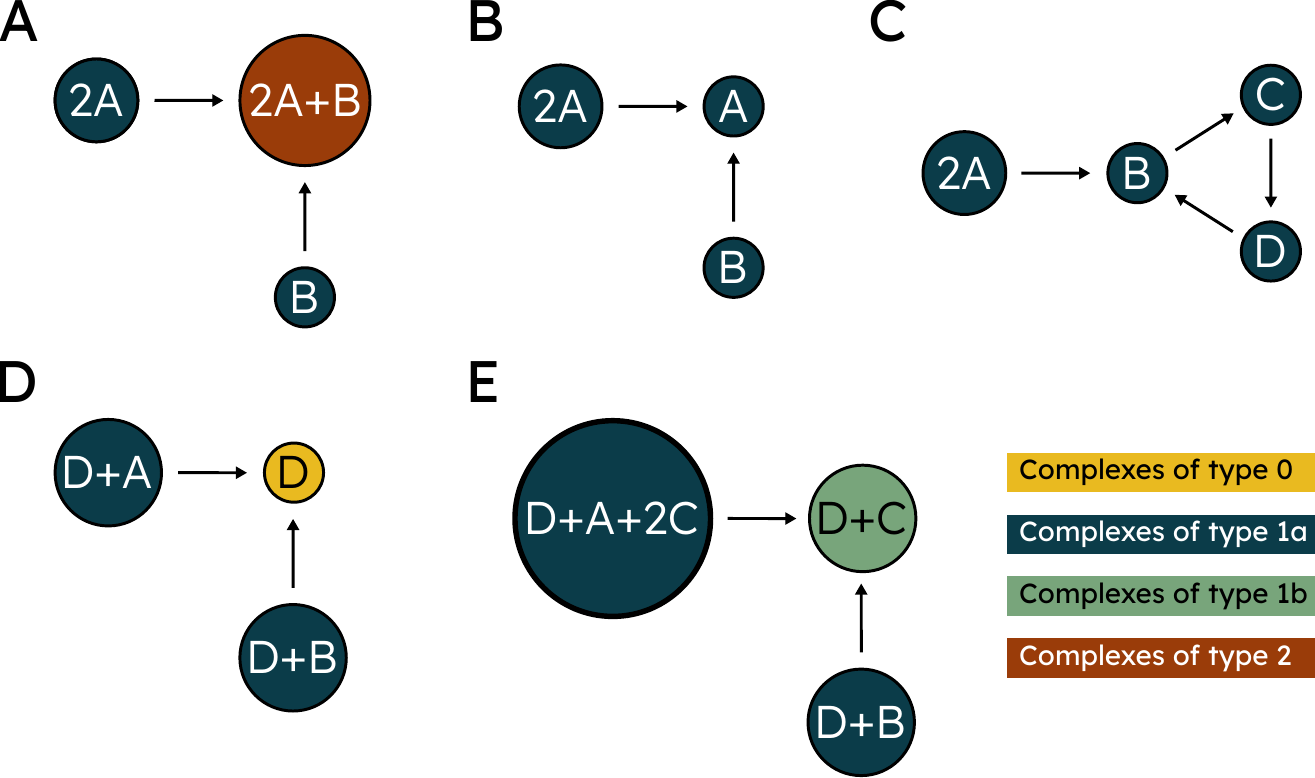}
   \caption{Examples of graph of complexes for noncompetitive execution bounded CRNs, as described in Corollary \ref{Structure}.A: Directed tree rooted at a type $2$ complex. B: Directed tree rooted at a type $1a$ complex. C: Directed tree rooted on a cycle of type $1a$ complexes. D: Directed tree rooted at a type $0$ complex. E: Directed tree rooted at a type $1b$ complex.}\label{F3}
   \end{figure}
 \subsection{RReLU Neural Networks\label{ReLUCRN}}

\label{SingleLayer}
    \begin{figure}[ht!]
    \centering
    \includegraphics[width=0.8\linewidth]{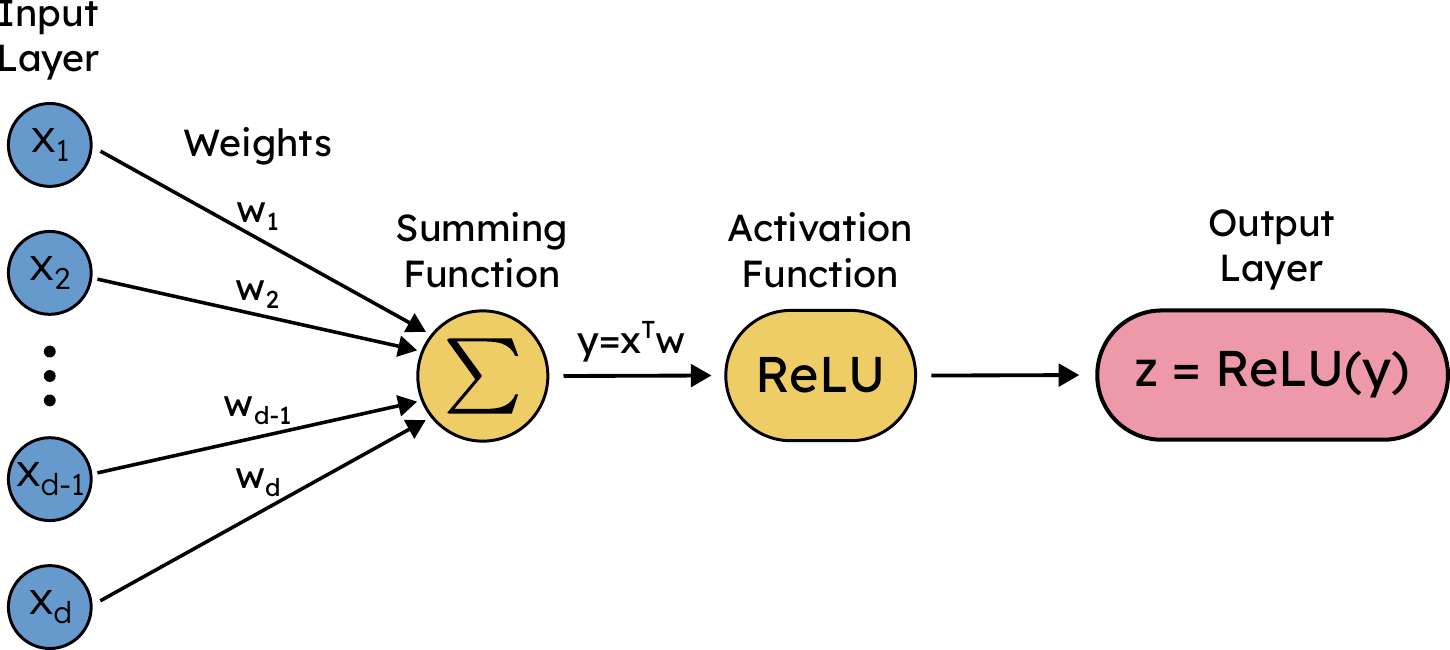}
    \caption{Example of a single layer neural network with ReLU activation function.\label{FigureReLU}}
\end{figure}

The authors of \cite{Vasic} introduced  noncompetitive CRNs that can chemically implement RReLU  neural networks, see Figure \ref{FigureReLU}, with  activation function $f$ given by the ReLU function $f(y)=\max\{0,y\}$. Such NN structures are relevant in molecular networks,  e.g. 
for signaling cascades, see  \cite{Hellingwerf}.
 It has been observed that such linear rectifier neurons exhibit good learning performances
and can reproduce naturally observed sparse data, see, e.g., \cite{bengio}. 
RReLU nets form a special class of ReLU NN that have rational weights with
$w_i = \pm p_i/l_i$, $i=1,\cdots , d$, for integers $p_i$ and $l_i$. 
 To allow the rewriting of NN in terms of noncompetitive CRNs,  the authors of \cite{Vasic} adopt the so-called \emph{dual-rail representation} to encode 
positive and negative parts of
neural net input variables, by introducing CRN species $X^+$ and $X^-$.  When  $X=x \in \mathbb{Z}$, $x= x^+-x^-$ is the difference  of two states of the related CRN
where both $x^+$ and $x^-$ count the number of molecules of species $X^+$ and $X^-$.
The species set is $\mathcal{S}=\{1,\ldots,d,N=d+1\}$
corresponding to the species symbols $X_1^+,\ldots,X_d^+$ and $Y^+$.
Following \cite{Vasic}, RReLU neural networks can be expressed  as a noncompetitive CRN using for all $i \in \{1,...,d\}$, the reactions:
$$
        l_i X_i^+ \longrightarrow p_i Y^+,\ \ 
    l_i X_i^- \longrightarrow p_i Y^-,
$$
when $w_i >0 $,
and
$$
        l_i X_i^+ \longrightarrow p_i Y^-,\ \ 
    l_i X_i^- \longrightarrow p_i Y^+,
$$
when $w_i <0 $. The output of the NN is obtained by applying the ReLU function on $\sum_i w_i x_i^\pm (0)$, for an input $x(0)$. The CRN implementation of this sum is obtained, e.g, by repeated executions of the reactions $l_i X_i^\pm \longrightarrow p_i Y^\pm$ starting with  $x_i^\pm(0)$ units of species $X_i^\pm$. During each reaction, stacks of $l_i$ units of species $X_i^\pm$ are converted into stacks of $p_i$ units of species $Y^\pm$, so that, repeating such reactions as many times as possible, the number of $Y^\pm$ units is increased by
$[x_i^\pm (0)/l_i]p_i$, where $[\cdot]$ is the integer part, leaving $x_i^\pm(0)-[x_i^\pm (0)/l_i]$ units of species $X_i^\pm$. One can thus see each term of the NN sum  $w_i x_i^\pm(0)= p_i/l_i x_i^\pm(0)$ as
$[x_i^\pm (0)/l_i]p_i$. Similar conversion process lead to the CRN implementation 
of the ReLU function $f(y)=\max\{0,y\}$ as the difference $z^+-z^-$ of the abundances of species $Z^\pm$ of the following noncompetitive CRN
\begin{align*}\label{ReLU}
    &Y^+ \longrightarrow M + Z^+, \\ 
    &Y^- + M \longrightarrow Z^-.
\end{align*}

 \subsection{The Abelian sandpile model (ASM) as noncompetitive CRN\label{s.SandA}}

 We show that well-known ASMs can be seen as special examples of noncompetitive CRNs. The study of the algebraical and dynamical properties of the ASM
leads to the notion of (finite) Abelian networks (AN) \cite{bond2016abelian1,bond2016abelian2,bond2016abelian3}. We will recall the main mathematical results of (finite) ANs theory in Section 
\ref{s.AbelianNet}. 

 \subsubsection{The conservative ASM}
 The Abelian sandpile model was originally introduced in physics as a model for avalanches \cite{Bak,Dhar}, and has since been extensively studied in mathematics  and  physics
 \cite{Biggs2,meester2003abelian}. Given an directed connected graph of node set $\mathcal{S}$, 
 we write $j \sim i$ to indicate that the nodes $i$ and $j$ are nearest neighbors.
 We associate a reaction to each node $i\in\mathcal{S}$ of source complex
 $\nu_i = outdeg(i) e_i=\Gamma^r_{.,i}$, where $outdeg(i)$ denotes the out-degree of $i$, 
 and of product complex $\nu'_i = \sum_{j\sim i} e_j=\Gamma^p_{.,i}$. 
In our CRN interpretation of the model, nodes of the graph correspond to species of the CRN, which is noncompetitive with $N=d=p$.
  In this setting, the $U_i$
 can be identified with singletons $\{i\}$.
The source complex $\nu_i$ is of type 1a, while the product complex $\nu'_i$ is of type 2 when $outdeg(i)\ge 2$, and of type 1a otherwise.  When $outdeg(i)\ge 2$ for all $i\in \mathcal{S}$, paths in the graph of complexes reduce to single edges.
Let $\Gamma$ denote the stoichiometric matrix, which in this particular case, satisfies $\Gamma^T = L$, where $L$ is the graph  Laplacian, with ${\rm dim}({\mathcal T})={\rm rank}(L)=N-1$.

\subsubsection{Dissipative ASM\label{s.SandB}}

Most models of ASM use a dissipative version in which, for example, nodes of a $d$-dimensional lattice that are located at its boundary redistribute sand to the outside. This corresponds to considering the reduced Laplacian with open boundary conditions. This implies the existence of sink species in the corresponding CRN.
 
 
 \subsubsection{Toppling networks\label{s.SandC}  }
 
 Toppling networks \cite{Gabrielov,Biggs}   are similar to the ASM with $U_i=\{i\}$, for $i\in \mathcal{S}$, with source complex
 $\nu_i =\Gamma_{ii}^r e_i =\Gamma^r_{.,i}$ and product complex $\nu'_i = \sum_{j\sim i} e_j=\Gamma^p_{.,i}$, for positive stoichiometric coefficients  $\Gamma_{ii}^r \in\N_+$. 
 Given a state $x\in \N^N$, the reaction $\Gamma_{ii}^r e_i\to \nu'_i$ leads to the new state
 $x+\nu'_i-\Gamma_{ii}^r e_i$.

 \subsubsection{Generalized toppling networks \label{s.gSand}. }
We extend toppling networks, while still assuming that $U_i=\{i\}$, for $i\in \mathcal{S}$, and that source complexes satisfy $\nu_i =\Gamma_{ii}^r e_i =\Gamma^r_{.,i}$. The product complexes are now allowed to be more general, and take the form $\nu'_i = \sum_{j\sim i} \Gamma^p_{ji}e_j=\Gamma^p_{.,i}$ where for all $j\ne i$, the non-null coefficients $\Gamma_{ji}^p$ are no longer restricted to be equal to $1$ but must satisfy $\Gamma_{ji}^p \le \Gamma_{jj}^r$.

 
\section{Markov chains associated with CRNs}\label{s.markov_chains}
\subsection{Reminders on Markov chains}
We recall here the notion of recurrence for both discrete-time and continuous-time Markov chains.
\begin{definition}
    Let $(X_t)_{t\in \mathbb{R}}$ be a CTMC (resp. $(X_n)_{n\in \mathbb{N}}$ be a discrete-time Markov chain) on a countable state space $\mathcal{S}$. For $i \in \mathcal{S}$, we define the return probability $m_i$ as follows:
\[
m_i = \mathbb{P}(X_t = i \text{ for some } t > 0 \mid X_0 = i) 
\quad 
\text{(resp. } m_i = \mathbb{P}(X_n = i \text{ for some } n \geq 1 \mid X_0 = i)\text{)}.
\]
    If $m_i=1$, the state $i$ is said to be \emph{recurrent}; if $m_i<1$, the state $i$ is said to be \emph{transient}.

    \end{definition}

This terminology reflects the long-term behavior of the process: If a state is recurrent, then the process is guaranteed to return to it infinitely often, almost surely. In the theory of Markov chains, recurrent states often appear as part of closed communicating classes, which the process cannot escape once entered. This justifies the name \emph{recurrent}, as the system keeps revisiting such states over time.
 \subsection{The usual CTMC}
 In most applications in systems biology and physics, the numbers of molecules of each species in a CRN are random processes whose time evolution is modeled by a Continuous-Time Markov chain (CTMC).
  Let 
 \begin{equation}\label{classic_chain}
     X(t)=(X_1(t),\cdots, X_N(t))\in\N^N,
 \end{equation}
  be the random vector representing the molecular counts of the various species at time $t$. The process $(X(t))_{t\ge 0}$ is a Markov chain with jumps, for reactions 
 $(\nu, \nu')\in\mathcal{E}$ with $l=\nu'-\nu$
 
 \begin{equation*}
     x \longrightarrow  x+l,  
 \end{equation*}
  where $x=(x_1,\cdots,x_N)$ is a state of the CRN.
 We refer the reader to \cite{anderson2015stochastic} for a comprehensive treatment of stochastic reaction networks.
 Such time-continuous stochastic chemical kinetics are widely used, e.g., in molecular biology for small molecular counts systems, and besides giving transition rates provide the times of reactions.
 Static states are absorbing states of the CTMC.
 The authors of \cite{doty2024computational} obtained results on absorption time for CTMC versions  of rate independent CRNs using mass-action transition rates.


\subsection{Sandpile Markov chain for noncompetitive CRNs\label{s.SandMC-CRN}}


The authors of \cite{chen2023rate} and \cite{Vasic} proposed to study the structural properties of CRNs using a simplified nondeterministic dynamical model
which was designed to isolate the effect of stoichiometry from the effect of transition rates such as the mass-action rates used in the usual CTMC. The model was used to explore the set of reachable states from static states 
$q\in \operatorname{Stat}\cH$
using only the CRN's stoichiometric structure.
Remark \ref{r.unique-static} and Theorem \ref{ThVasic} show that, for $i\in\mathcal{S}$,
the  stoichiometric compatibility class $(q+e_i +\mathcal{T})\cap\Z_{\ge 0}^N$ of  entirely execution bounded noncompetitive CRN
contains a unique element $q'$ of $\operatorname{Stat}\cH$, which is  uniquely defined given $q+e_i$.

\begin{definition}\label{Rmap}
Consider an entirely execution bounded noncompetitive CRN.
Given $i\in\mathcal{S}$, let $R_i:\ \operatorname{Stat}\cH \to\operatorname{Stat}\cH $ be the map 
defined by
$\{R_i(q)\} = (q+e_i +\mathcal{T})\cap\Z_{\ge 0}^N \cap \operatorname{Stat}\cH$
 and
$q+e_i \hookrightarrow R_i(q)$. 
\end{definition}

 As stated in the Introduction, CRNs of interest in biochemistry and systems biology are embedded in complex networks where the outputs of submodules feed other  CRNs. Given $q\in \operatorname{Stat}\cH$, we model the response of the CRN to such perturbations by considering the effect of adding a molecule from a randomly chosen species to static state $q$,
      taking inspiration from the literature on the Abelian sandpile model \cite{klivans2018mathematics,Biggs2,bond2016abelian3,meester2003abelian}; see also the examples given in (\ref{s.SandA}, \ref{s.SandB},\ref{s.SandC}). \\
\begin{definition}\label{d.CRNsandpile}[The CRN sandpile Markov chain]
  The state space of the CRN sandpile Markov chain is the set of static states ${\rm Stat}\ \cH$, and its transition  mechanism is given by
$q_{n+1}=R_{i_{n+1}}(q_n)$ where $i_1,\ i_2\ldots$ are independent of law $\mu$, where
$\mu$ is a probability measure on $\mathcal{S}$.  
\end{definition}

We will give more information on this Markov chain in Section 
\ref{s.sandpile-generalized} when focusing on the class of generalized toppling networks defined in  Section \ref{s.gSand}.
The resulting Markov chains have recurrent states that will appear infinitely often along the orbits of the dynamics, while the other static states will not persist in the long run. Besides the plausible relevance of recurrent static states in systems biology, one can also see their importance in looped and recurrent neural networks  since, following \cite{Vasic}, one can translate ReLU neural networks as noncompetitive CRNs, see Section \ref{SingleLayer}.


    \section{Abelian networks (AN)\label{s.AbelianNet}}
The purpose of this section is to review the fundamental definitions and results of Abelian network theory \cite{bond2016abelian1,bond2016abelian2, bond2016abelian3} that will serve as tools for the remainder of the paper. We then show how noncompetitive execution bounded CRNs can be expressed within this framework.
\subsection{Basic mathematical notions from AN theory\label{s.BasicAN}}
Let $G=(V,E)$ be a directed graph. A network on $G$ is a set of processors $(\mathcal{P}_v)_{v\in V}$ associated with each vertex $v$. Each processor $\mathcal{P}_v$ contains an input alphabet $\mathcal{A}_v$ and a state space $\mathcal{Q}_v$. \\

The time evolution  of the processor states is governed by transition functions
 $T_v: \mathcal{A}_v \times \mathcal{Q}_v \rightarrow \mathcal{Q}_v$  and  message-passing functions $T_{(v,v')}: \mathcal{A}_v \times \mathcal{Q}_v \rightarrow \mathcal{A}^*_{v'}$, for each $(v,v')\in E$, where $\mathcal{A}^*_{v'}$ denotes the free monoid of all finite words over the alphabet $\mathcal{A}_{v'}$. 
 Suppose that processor $\mathcal{P}_v$ receives an input $a\in \mathcal{A}_v$ while in state $q\in \mathcal{Q}_v $. Then it updates its internal state $q$
 to the new state $T_v(a,q)\in \mathcal{Q}_v  $ and, for each edge $(v,v')\in E$, sends $T_{(v,v')}(a,q)\in\mathcal{A}_{v'}  $ to its neighbouring processors $\mathcal{P}_{v'}$ .

A processor $\mathcal{P}_v$ is called abelian if permuting the input letters does not change the resulting state of the processor, nor its outputs, up to permutation. 
For a word $w \in \mathcal{A}^*_v$, we denote by $|w| \in  \mathbb{N}^{\mathcal{A}_v}$ the vector counting the number of occurrences of each letter of $\mathcal{A}_v$ in $w$. Then the processor $\mathcal{P}_v$ is abelian if for any words $w,w' \in \mathcal{A}_v^*$ such that $\lvert w \rvert= \lvert w' \rvert$, for all $q \in \mathcal{Q}_v$ and any $v\to v' \in E$, we have:
\begin{equation}\label{e.abelianness}
    T_v(w,q) = T_v(w',q) \quad \mbox{ and } \quad \lvert T_{(v,v')}(w,q)  \rvert= \lvert T_{(v,v')}(w',q) \rvert
\end{equation}.
\begin{definition}\cite{bond2016abelian1}
An Abelian network (AN) $\mathcal{N}$ on a graph $G=(V,E)$ is a collection of abelian processors  $(\mathcal{P}_v)_{v \in V}$.
The total alphabet and state space are respectively $\mathcal{A} = \cup_{v \in V} \mathcal{A}_v$ and $\mathcal{Q} = \prod_{v \in V} \mathcal{Q}_v$ .
The \textit{state} of an Abelian network $\mathcal{N}$ is denoted $x.q$, where $q \in \mathcal{Q}$ describes the internal state and $x \in \mathbb{N}^{\mathcal{A}}$ is a vector counting the number of letters waiting to be processed.
\end{definition}
For $v \in V$ and $a \in \mathcal{A}_v$, consider the map $t^v_a: \mathcal{Q} \to \mathcal{Q}$:
$$
t^v_a(q)_u = \left\{
    \begin{array}{ll}
        T_v(a,q_v) & \mbox{if } u=v, \\
        q_u & \mbox{otherwise.}
    \end{array}
\right.
$$
Following \cite{bond2016abelian2}, one can then define mappings $\pi_a$, $a\in\mathcal{A}$, that act on states in three steps : it modify the internal state $q$ to $t^v_a q$
where $v$ is the unique node such that $a\in \mathcal{A}_v$, decrement $x_a$ by one and increment each $x_b$ by the number of $b$ letters passed by 
$T_{(v,v')}$ with $b\in\mathcal{A}_{v'}$.
\begin{definition}\label{ExecutionAbelian}
AN executions are words $w=a_1\ldots a_r\in \mathcal{A}^{*}$ detailing the order in which the letters are processed. 
Let $\pi_w(x.q)$ be the result of executing $w$ starting from $x.q$, with
$\pi_w = \pi_{a_1}\circ \ldots \circ \pi_{a_r}$.
The AN execution is legal if, at each step, the letter to be processed is available, that is, its count is strictly positive at that step. The AN execution is complete if it removes all letters from the network.
\end{definition}

\begin{example}[ASM and toppling  networks]\label{toppling}
Consider  the ASM of Section \ref{s.SandA}; see \cite{bond2016abelian1} for more details. Given some graph $G=(V,E)$, the ASM corresponds to an Abelian network where, for each $i\in V$, $\mathcal{A}_i = \{i\}$,  $\mathcal{Q}_i =\{0,1,\ldots,outdeg(i)-1\}$, and  where the  transition functions are
$$T_i(a,q) = (q+1)\ {\rm mod}\ outdeg(i),$$
and, for each edge $(i,j)\in E$,
$$
T_{(i,j)}(a,q)= \left\{
    \begin{array}{ll}
        \varepsilon & \mbox{if } q< outdeg(i)-1, \\
         j & \mbox{if } q= outdeg(i)-1,
    \end{array}
\right.$$
where $\varepsilon \in \mathcal{A}^*$ denotes the empty word, indicating that no message is sent.  Toppling networks \cite{bond2016abelian1} are generalization of such Abelian networks, which have the same transitions and message passing functions but with thresholds $r_v$
at vertex $v$ that can be different of its outdegree. Notice that the model proposed in Section \ref{s.gSand} is also a generalization of toppling networks.
\end{example}
The remaining part of this section provides an overview of the relevant framework from Abelian networks \cite{bond2016abelian1,bond2016abelian2,bond2016abelian3}. We will later reinterpret these results in the context of noncompetitive CRNs.\\

An Abelian network {\bf halts on all inputs} if, for all $x\in \mathbb{N}^{\mathcal{A}}$ and $q \in \mathcal{Q}$, the state $x.q$ admits a complete legal AN execution. In this case the {\bf odometer} is defined as 
$$[x.q] = |w| ,$$
where $w$ is any complete legal AN execution.\\

Given $x\in \mathbb{N}^{\mathcal{A}}$,  let $\pi_x$ denote $\pi_w$ for a word $w$ such that $\vert w\vert = x$ with $\pi_x(x.q)= x'. q'$; set (See (\ref{local_action}) from the Appendix)
\begin{equation}
   x \triangleright q = \pi_x (x.q)\in \mathbb{N}^{\mathcal{A}}\ {\rm x}\ \mathcal{Q}.
\end{equation}

This is called the \textit{local action}, since each processor processes only letters added at its own vertex, not those passed from other processors. 
Starting from state $q\in\mathcal{Q}$, $x \triangleright q $ is obtained by adding $x_i$ letters $i\in \mathcal{A}$ and by processing each letter once.
This distinguishes it from the \textit{global action} defined as: 
\begin{equation}\label{global_action}x \triangleright \triangleright q = q'\in \mathcal{Q}\end{equation}
which denotes the final (processor) state  obtained by processing the word $x$,  for a complete and legal AN execution $w$ such that $\pi_w(x.q)= 0. q'$.
The final state $q_v^r$ of each processor is obtained as
$$q_v^r = \Big(\prod_{a\in\mathcal{A}_v}(t_a^v)^{[x.q]_a}\Big) q_v.$$
For $a \in \mathcal{A}$, consider the map
 $\tau_a: \mathcal{Q} \to \mathcal{Q}$:
 \begin{equation}\label{BasicMap}
\tau_a (q) = 1_a \triangleright \triangleright q. 
\end{equation}
The set $M_v = \left\langle t_a^v \right\rangle_{a \in \mathcal{A}_v}$ is a submonoid of $\text{End}(\mathcal{Q}_v)$ called the transition monoid of processor $ \mathcal{P}_v$ and the set $M=\left\langle\tau_a\right\rangle_{a \in \mathcal{A}} \subset \text{End}(\mathcal{Q})$ is a global monoid. Let $t_v:\ \N^{\mathcal{A}_v}\to M_v$ be the monoid homorphism that sends the basis element $1_a$ to the  $t_a^v$. One can show (see \cite{bond2016abelian2}) that this extends to an action 
$\Z^{\mathcal{A}_v}\ {\rm x}\ e_v Q_v \to e_v Q_v$, where $e_v$ is the minimal indempotent of the local monoid $M_v$. Let $K_v$ be the kernel of this (group) action, and set
\begin{equation}\label{e.kernel}
    K = \prod_{v \in V} K_v \subset \mathbb{Z}^{\mathcal{A}}
\end{equation}
which is the {\bf total kernel} (see Definition \ref{TotalKernel}). 

Let $e$ be the mimimal indempotent of the global monoid $M$. A state 
$q\in\mathcal{Q}$ is {\bf recurrent} when $q=eq$. The set of recurrent states is denoted by $\text{Rec }\mathcal{N}  $, see Definition \ref{d.alg_recurrent}. 
Such states appear as recurrent states of  Markov chains associated with AN, see Section \ref{s.SandMC-AN}.

The {\bf production matrix}  $P$ (Definition \ref{d.prodmatrix})  has entries $P_{ab}$ representing the average number of letters $a$ produced when processing the letter $b$. $L$ denotes the {\bf reduced Laplacian} , and 
$D$ the diagonal matrix whose entries are the firing thresholds of the vertices (see Definition \ref{reduced Laplacian}).
More details are given in the Appendix, and full algebraical details can be found in \cite{bond2016abelian1,bond2016abelian2,bond2016abelian3}.\\
\vspace{1mm}

To find which  states are recurrent,   
Levine \textit{et al.} \cite{bond2016abelian3} generalized Dhar's burning algorithm to any finite irreducible abelian network $\mathcal{N}$ (see Definition \ref{d.irreducibility} and \ref{d.irreducibility}). The algorithm looks for so-called \emph{burning elements} (see Definition \ref{d.burning_element}), using Lemma \ref{l.procedure}. Such an element returns the network to its initial state if and only if the state is recurrent (Definition \ref{d.alg_recurrent}). This gives  a simple test to determine if a state is recurrent (Theorem \ref{t.burning_algo}). Here is a short description of the burning algorithm for ANs, see Appendix for more details. \\

\begin{algorithm}[H]
\caption{Burning Algorithm for ANs  }
\label{alg:burning}
\begin{algorithmic}[1]\label{burning_algorithm}
\State \textbf{Input:} a state  $q\in\mathcal{Q}$.
\textbf{Output:} Whether $q$ is  recurrent.
\State \textbf{Step 1: Find a burning element $y$.}
\State Initialize $y \gets 1_{\{1,...,N\}}$  
\While{$Ly \not\ge 0_N$}
    \State Choose an index $a \in A$ such that $(Ly)_a < 0$
    \State Increase $y_a \gets y_a + 1$
\EndWhile
\end{algorithmic}

\vspace{3mm}

\begin{algorithmic}[1]
\State \textbf{Step 2: Test if  $q\in\mathcal{Q}$} is recurrent.
\State Compute $k \gets Dy$
\If{$(I - P)k \triangleright\triangleright q = q$}
    \State \textbf{return} ``$q$ is recurrent``
\Else 
    \State \textbf{return} ``$q$ is not  recurrent''
\EndIf
\end{algorithmic}

\end{algorithm}
We follow this algorithm in the examples of Section \ref{s.examples}.

\subsection{Sandpile Markov chain for Abelian networks\label{s.SandMC-AN}}

We follow \cite{bond2016abelian3,meester2003abelian} by considering AN sandpile Markov chain 
\begin{definition}\label{d.ANsandpile}[AN sandpile Markov chain]
  The AN sandpile Markov chain   $(q_n)_{n\geq 1}$, $q_n \in \mathcal{Q}$, 
  is defined by the transition mechanism
  \begin{equation}\label{next_state_markov}
        q_{n+1} = 1_{a_{n+1}} \triangleright \triangleright q_n = \tau_{a_{n+1}}(q_n),
    \end{equation}
where the sequence $(a_n)_{n\ge 1}$ is i.i.d. from a distribution $\mu$ on $\mathcal{A}$.
\end{definition}
Intuitively, a letter $a_{n+1}$ is chosen at random in $\mathcal{A}$ and the unique processor $\mathcal{P}_v$ with $a_{n+1} \in \mathcal{A}_v$ processes this letter to arrive at the new state
$1_{a_{n+1}} \triangleright \triangleright q_n$.
The theory of Abelian sandpile \cite{meester2003abelian} and Abelian networks \cite{bond2016abelian1,bond2016abelian2,bond2016abelian3}, shows that, under some conditions on $\mu$, the stationary distribution of this Markov chain is uniform  on the set $\text{Rec }\mathcal{N}$, see   Lemma \ref{l.algebraically_recurrent}
of the Appendix.
 $\text{Rec }\mathcal{N}  $ is in a one to one relation with the {\bf  critical group} $\text{Crit }\mathcal{N} $ (see Definition \ref{CriticalGroup}) of the AN which is such that
 \begin{equation}\label{e.crit}
    \text{Crit }\mathcal{N} \simeq \mathbb{Z}^\mathcal{A} / (I-P)K,
\end{equation}
see Theorem \ref{critical_group_2}.
This permits to give a formula for the number of recurrent states that involves the determinant of the reduced Laplacian matrix (Theorem \ref{t.number_recurrent_states}).
One  can also get information on the typical number of reactions involved in a single step of the Markov chain:
Let $q$ be a uniform random element of $\text{Rec }\mathcal{N} $, Theorem 3.7 of  \cite{bond2016abelian3} shows that, for all $x \in \mathbb{N}^{\mathcal{A}}$, the mean value of the odometer is given by
$$\mathbb{E}[x.q] = (I - P)^{-1}x,$$
where $\mathbb{E}$ denotes the mathematical expectation with respect to the uniform measure on $\text{Rec }\mathcal{N} $.
More details are provided in the Appendix.

\section{Noncompetitive CRNs and AN\label{s.NonCompetAbelianNet}}
We assume that $U_0=\emptyset$  and show that  execution bounded noncompetitive CRNs can be associated with Abelian Networks that halt on all inputs and use this fact to derive interesting and useful properties.
When $U_0=\emptyset$, each source complex $\nu_k$ is associated in a one  to one correspondence to some $U_k$, $k=1,\ldots, d=p$, with
${\rm supp}(\nu_k)=U_k$, see Section \ref{s.structural}, where $\nu_k$ is  the $k^{th}$ column of the reactant matrix $\Gamma^r$.
We associate a processor $\mathcal{P}_k$ to each source complex $\nu_k$. Recall that complexes $\nu_k\in \mathcal{C}$ are written as non-negative integer linear combinations of species, represented as vectors in $\Z_{\geq 0}^N$, with
$$
\nu_k =  \sum_{i=1}^N \Gamma^r_{ik} e_i.
$$
By noncompetitivity, this source complex is the reactant in a unique reaction $ \nu_k \to \nu'_k$, where:
$$
\nu'_k =  \sum_{i=1}^N \Gamma^p_{ik} e_i.
$$
The alphabet of processor $\mathcal{P}_k$  is $\mathcal{A}_k = U_k $ and the different alphabets are finite and disjoint by construction.

The corresponding AN state space   $\mathcal{Q}_k \subset \mathbb{N}^{N}$ of processor $\mathcal{P}_k$ is the following subset of the set of static states
\begin{equation}\label{StateSpace}
\mathcal{Q}_k = \{\sum_{j\in U_k} \gamma_{jk} e_j,\ \gamma_{jk}\in\N,\ \exists i\in U_k \hbox{ with }\gamma_{ik}< \Gamma^r_{ik}\},
\end{equation}
and the total state space is given by 
$$ \mathcal{Q}=\mathop{\oplus}\limits_{k=1\ldots p} \mathcal{Q}_k = {\rm Stat}\ \cH .$$ 
The following transition function $T_{U_k}: \mathcal{A}_{k} \times \mathcal{Q}_{k} \rightarrow \mathcal{Q}_{k}$ is associated to each processor $\mathcal{P}_k$. When being in state $q^k\in \mathcal{Q}_k$ and processing $j\in U_k$, the transition function of processor $\mathcal{P}_k$ is given by:
 \begin{equation}\label{Transition}
 T_{U_k}(j,q^k) = \left\{
    \begin{array}{ll}
           
    q^k +e_j -\nu_k & \text{if}  \quad q^k_j +1 = \Gamma^r_{jk} \text{ and } \forall i\in U_k \setminus \{j\}:  q^k_i \ge \Gamma^r_{ik}, \\
    q^k +e_j & \mbox{ otherwise} ,
    \end{array}
\right.
\end{equation}
We also define, for each pair $(U_l,U_k)$ with ${\rm supp}(\nu'_l)\cap \mathcal{A}_k \neq \emptyset$, a  message-passing function $T_{(U_l,U_k)}: \mathcal{A}_{l} \times \mathcal{Q}_{l} \rightarrow \mathcal{A}_{k}^*$. When being in state $q^l \in \mathcal{Q}_l$ and processing $j\in U_l$, the message sent from processor $\mathcal{P}_l$ to processor $\mathcal{P}_k$ is given by:

\begin{equation}\label{MessagePassing1}
T_{(U_l,U_k)}(j,q^l) = \left\{
    \begin{array}{ll}
     \prod_{i\in U_k} i^{\Gamma_{il}^p}  & \text{if}  \quad q^l_j +1 = \Gamma^r_{jl} \text{ and } \forall i\in
    
    U_k \setminus \{j\}: \ q^l_i \ge \Gamma^r_{il},\\
          \epsilon  & \mbox{otherwise},
          \end{array}
\right. 
\end{equation}
where, for $i\in U_k$, $i^{\Gamma_{il}^p}$ is the word of $\mathcal{A}_{k}^*$
composed of $\Gamma_{il}^p$ letters $i$
and $\epsilon$ denotes the empty word.
The above construction associates a natural AN to an arbitrary noncompetitive CRN.
The general theory  developed in \cite{bond2016abelian1,bond2016abelian2,bond2016abelian3} for finite AN
 (see Sections \ref{s.BasicAN} and \ref{BasicsAbelianNetwork} for an overview)
 cannot be applied as such  since the state spaces $\mathcal{Q}_k$ are infinite for noncompetitive CRNs involving multi-species source complexes, see e.g. the multi-species   implementation  of the ReLU function of Section
\ref{ReLUCRN}.  Section \ref{s.single_species} focus on the class of generalized toppling networks which was considered in Example \ref{s.gSand}, and show that the related CRN and AN sandpile Markov chains defined in Sections \ref{s.SandMC-CRN} and \ref{s.SandMC-AN} are identical.

\subsection{Generalized toppling networks}\label{s.single_species}
We consider the CRNs defined in Example \ref{s.gSand}, which are generalizations of toppling networks, see e.g. \cite{bond2016abelian1}.
We assume again that $U_0=\emptyset$ and that the CRN is entirely execution bounded. In this setting, $N=d=p$.
We associate a processor $\mathcal{P}_k$ to each source complexes $\nu_k \in \mathcal{C}_S$, which are of the form $\nu_k= \Gamma_{kk}^re_k$. By noncompetitivity, this source complex is the reactant in a unique reaction $ \nu_k \to \nu'_k$, where the product complexes $\nu'_k\in \mathcal{C}$ are written as non-negative integer linear combinations of species, represented as vectors in $\Z_{\geq 0}^N$, with
$$
\nu'_k =  \sum_{\substack{i=1 \\ i \ne k}}^{N} \Gamma^p_{ik} e_i.
$$
We recall that we imposed $ \Gamma_{ik}^p \leq \Gamma_{ii}^r $ for all $ i,k \in \{1,...N\}$.\\
The alphabet of processor $\mathcal{P}_k$ is $\mathcal{A}_k = \{k\} $ and the different alphabets are finite and disjoint by construction.
The corresponding state space   $\mathcal{Q}_k \subset \mathbb{N}^{N}$ of processor $\mathcal{P}_k$ is given by

\begin{equation}\label{StateSpaceNew}
\mathcal{Q}_k =\{\gamma_{kk}e_k,\ \gamma_{kk}\in\N,\ \gamma_{kk} < \Gamma_{kk}^r\},
\end{equation}

and the total state space is given by $ \mathcal{Q}=\mathop{\oplus}\limits_{k=1,\ldots, d} \mathcal{Q}_k= {\rm Stat}\ \cH$, which is finite. 
 The following transition function $T_{U_k}: \mathcal{A}_{k} \times \mathcal{Q}_{k} \rightarrow \mathcal{Q}_{k}$ is associated to each processor $\mathcal{P}_k$. When being in state $q^k\in \mathcal{Q}_k$ and processing $k$, the transition function of processor $\mathcal{P}_k$ is given by:
 \begin{equation}\label{TransitionNew}
 T_{k}(q^k) = \left\{
    \begin{array}{ll}
           q^k +e_k- \Gamma_{kk}^r e_k & \text{if}  \quad q^k_k = \Gamma_{kk}^r -1,  \\
    q^k +e_k & \mbox{ otherwise} .
    \end{array}
\right.
\end{equation}
We also define, for each pair $(l,k)$ with ${\rm supp}(\nu'_l)\cap \mathcal{A}_k \neq \emptyset$, and $k \neq l$ a  message-passing function $T_{(l,k)}: \mathcal{A}_{l} \times \mathcal{Q}_{l} \rightarrow \mathcal{A}_{k}^*$. When being in state $q^l \in \mathcal{Q}_l$ and processing species $l$, the message sent from processor $\mathcal{P}_l$ to processor $\mathcal{P}_k$ is given by:

\begin{equation}\label{MessagePassing}
T_{(l,k)}(q^l) = \left\{
    \begin{array}{ll}
    k^{\Gamma^p_{kl}}  & \text{if}  \quad q^l_l =\Gamma_{ll}^r -1,\\
          \epsilon  & \mbox{otherwise}.
          \end{array}
\right. 
\end{equation}

\subsection{Sandpile Markov chain for generalized toppling networks}\label{s.sandpile-generalized}

Let $(q_n)_{n\ge 0}$, $q_n\in \operatorname{Stat}\cH$, be the CRN sandpile Markov chain which has been considered in Section \ref{s.SandMC-CRN}, Definition \ref{d.CRNsandpile}. Given that $q_n = q$, 
$q_{n+1}$
is obtained by
choosing a species $i_{n+1}\in\mathcal{S}$ according to a probability measure $\mu$ on $\mathcal{S}$, and then letting the CRN evolve starting from $q+e_{i_{n+1}}$ until a static state $q_{n+1}=q'\in \operatorname{Stat}\cH$, 
with $q_{n+1}=R_{i_{n+1}}(q_n)$
is obtained, see Definition \ref{Rmap}.

We show below that, for generalized toppling networks, the maps $R_i$ 
correspond to the maps $\tau_i$ (see (\ref{BasicMap})) that characterize the transition mechanism
 $$   q_{n+1} = 1_{i_{n+1}} \triangleright \triangleright q_n =\tau_{i_{n+1}}(q_n),$$
of the  chain defined in Section \ref{s.SandMC-AN},
see (\ref{next_state_markov}). We  prove in Lemma \ref{l.map} 
 that $R_i(q)\equiv 1_{i} \triangleright \triangleright q =\tau_i(q)$, $\forall i\in\mathcal{S}$ and $\forall q\in \operatorname{Stat}\cH = \mathcal{Q}$. Both CRN and AN sandpile Markov chains have therefore the same transition mechanism and live on the same state space, and are  thus identical. \\

Consider a finite CRN execution ${\mathcal F}_{qq'}=(c^0,\ldots,c^m)$ where $c^0 = q+e_i$ and $c^m=q'\in \operatorname{Stat}\cH$. If $m>1$, then
 $c^l \not\in\operatorname{Stat}\cH$, $1\le l<m$.  Each step $c^k \to c^{k+1}$ of ${\mathcal F}_{qq'}$
corresponds to a reaction $\nu_{j_k}'-\nu_{j_k}$ for some $j_k\in \mathcal{S}$, so that
$$c^l = c^0 +\sum_{k=1}^ l(\nu_{j_k}'-\nu_{j_k}).$$
We focus on CRNs that are entirely execution bounded generalized toppling networks by linking CRN  and AN executions.

\begin{lemma}\label{l.map}
Consider an entirely execution bounded generalized toppling network. Let $q\in \operatorname{Stat}\cH$. Let 
$c=q+e_i$  for some $i\in\mathcal{S}$. Let $q'\in \operatorname{Stat}\cH$ be the unique static state given by
$q'=R_i(q)$ with $q+e_i \hookrightarrow q'$. 
Consider a finite CRN execution ${\mathcal F}_{qq'}=(c^0,\ldots,c^m)$ where $c^0 = q+e_i$ and $c^m=q'\in \operatorname{Stat}\cH$, with
$$c^l = c^0 +\sum_{k=1}^ l(\nu_{j_k}'-\nu_{j_k}),\ l=1,\ldots,m,$$
and $c^l \not\in \operatorname{Stat}\cH$, $l<m$.
There is a sequence of legal AN executions $w_l$, $l=1,\ldots, m$, with
$\pi_{w_l}(1_i . q)= x^l.q^l$ such that $x^l+q^l=c^l$,
$l=1,\ldots,m$.
Moreover,
$$1_i \triangleright \triangleright q =R_i(q).$$
\end{lemma}

Recall that
$1_{i} \triangleright \triangleright q$ is the processor state of $\pi_w(1_{i}.q)$ for a complete and legal AN execution $w$ where, for given $i\in\mathcal{S}$, the map $\pi_i$ 
acts by modifying the internal state $q$
to $t_i^i(q)$ such that $t_i^i(q)_i=T_i(q_i e_i)$ and $t_i^i(q)_j = q_j$, $j\ne i$,
decrementing $(1_{i})_i$ by one and incrementing each $(1_{i})_j$, $j\ne i$, by the number of $j$ letters passed by $T_{(i,j)}$.

\begin{proof}

{\bf I. Assume that $q+e_i \in \operatorname{Stat}\cH$}.  In this situation, $T_i(q_i e_i)=q_i e_i +e_i =(q_i+1)e_i$,
so that $t_i^i(q)=q'$.
Using the fact that no letters are passed to other processors, the new AN state is
$0. q'$ and it follows that $\pi_w(1_{i}. q)=0. q'$, so that
$$q' = 1_{i} \triangleright \triangleright q,$$
 for the complete and legal AN execution
 $w=i$.\\

{\bf II. Assume that $q+e_i \not\in \operatorname{Stat}\cH$, and $m=1$}. In this case $c^1=q'\in \operatorname{Stat}\cH$. 
The reaction takes $q+e_i$ to the static state $c^1=q+e_i-\nu_i +\nu_i'$, with
$c^1_i =0$ and $c^1_j = q_j+\Gamma_{ji}^p$, $j\ne i$.
One needs to show that
$0.c^1=\pi_w(1_{i} . q)$ for a complete and legal AN execution $w$. First notice that
$$\pi_i(1_{i}. q)=(\sum_{j\ne i}\Gamma_{ji}^p e_j) . t_i^i(q) = \nu_i'.t_i^i(q),$$
with  $t_i^i(q)_i =0$ and $t_i^i(q)_j= q_j$, $j\ne i$, so that
$t_i^i(q)=q-q_i e_i$.
Using the fact that
$q_j +\Gamma_{ji}^p < \Gamma_{jj} ^r$ since $q'\in \operatorname{Stat}\cH$, we arrive at
$$\pi_k^{\Gamma_{ki}^p}\Big(\sum_{j\ne i}(\Gamma_{ji}^p e_j) .( q-q_i e_i)\Big)=
(\sum_{j\ne k, i}\Gamma_{ji}^p e_j).(q-q_i e_i +\Gamma_{ki}^p e_k),\ k\ne i,$$
and thus
$$\prod_{k\ne i}\pi_k^{\Gamma_{ki}^p}\Big((\sum_{j\ne i}\Gamma_{ji}^p e_j) . (q-q_i e_i)\Big)=
0 . (q-q_i e_i +\sum_{k\ne i}\Gamma_{ki}^p e_k)= 0. q',
$$
and it follows that $\pi_w(1_{i}. q)= 0.c^1 = 0. q'$, where $w=\prod_{k\ne i}k^{\Gamma_{ki}^p}i$, and therefore $1_i \triangleright \triangleright q =q'$.\\

{\bf III. Assume that $q+e_i \not\in \operatorname{Stat}\cH$, and $m>1$}.
 In this case,
$$\pi_i(1_i . q)=\nu_i' . (q+e_i-\nu_i),$$
is an AN state of the form $x.s$ with $x+s=q+e_i+\nu_i'-\nu_i=c^1$. The sum of the two components of the AN state gives thus the CRN state $c^1$  after reaction $\nu_i \to \nu_i'$ starting from $q+e_i$, with $c^1_i =0$.
$m>1$ so that the next reaction occurs at $j_2\ne i$, with
 $q_{j_2}+\Gamma_{j_2i}^p\ge \Gamma_{j_2j_2} ^r$. Applying $\Gamma_{j_2i}^p$ times the map $\pi_{j_2}$  
 to  $\nu_i' . (q+e_i-\nu_i) $  will 
1) provoke the action of $t_{j_2}^{j_2}$, 2) increment the internal state $q+e_i-\nu_i$ by $\Gamma_{j_2i}^p e_{j_2}$, and 3) decrement $\nu_i'$ by $\Gamma_{j_2i}^p e_{j_2}$,
so that the new AN state will by
$$\pi_{j_2}^{\Gamma_{j_2i}^p} (\nu_i' . (q+e_i-\nu_i))=(\nu_i'-\Gamma_{j_2i}^p e_{j_2}+\nu_{j_2}').(q+e_i-\nu_i +\Gamma_{j_2i}^p e_{j_2}-\nu_{j_2}).$$
The sum of its two AN components is $q+e_i+\nu_i'-\nu_i +\nu_{j_2}'-\nu_{j_2}=c^2$ which is the state of the CRN execution  after two steps. In 1), the fact that there is only one  application of $t_{j_2}^{j_2}$ is a consequence of the constraint $\Gamma_{j_2i}^p \le\Gamma_{j_2 j_2}^r$.\\

We have seen that the first statement holds  true when $l=1$ and $l=2$. 
Assume  that
there is a sequence of legal AN executions $w_l$, $l=1,\ldots, k$, with
$\pi_{w_l}(1_i . q)= x^l.q^l$ such that $x^l+q^l=c^l$. We show that the same property holds for $k+1\le m$. The $(k+1)^{th}$ reaction leading to $c_{k+1}$ occurs at some $j_{k+1}\in \mathcal{S} $. Hence, using the relation
$q^k + x^k =c^k$, we have that $x^k_{j_{k+1}}+q^k_{j_{k+1}} \ge \Gamma_{j_{k+1}j_{k+1}}^r$, with
$\Gamma_{j_{k+1}j_{k+1}}^r -q_{j_{k+1}}^k \ge 0$. Then
$$\pi_{j_{k+1}}^{\Gamma_{j_{k+1}j_{k+1}}^r -q_{j_{k+1}}^k}
(x^k . q^k)= x^{k+1}. q^{k+1},$$
where
$x^{k+1}=x^k -(\Gamma_{j_{k+1}j_{k+1}}^r -q_{j_{k+1}})e_{j_{k+1}}+\nu_{j_{k+1}}'$
and $q^{k+1}=q^k +(\Gamma_{j_{k+1}j_{k+1}}^r -q_{j_{k+1}})e_{j_{k+1}} -\nu_{j_{k+1}}$
are such that $x^{k+1}+q^{k+1}=c^{k+1}$ as required.\\

The last statement is obtained as follows: we have seen that there  is a legal execution $w_m$
such that $\pi_{w_m}(1_i . q)= x^m . q^m$ with
$x^m + q^m = c^m = q'\in\operatorname{Stat}\cH$. Hence
$x_i^m + q_i^m < \Gamma_{ii}^r,\ \forall i\in\mathcal{S}$, so that
$$\pi_i^{x_i^m}(x^m . q^m)=(x^m - x_i e_i).(q^m + x_i e_i),\ \forall i\in\mathcal{S}.$$
Setting $w_0 = \prod_{i\in\mathcal{S}} i^{x_i^m} $ and $w_f = w_0 w_m$, we finally arrive at $\pi_{w_f}(1_i .q)= 0. c^m$ for a complete and legal execution $w_f$, proving the last statement.
\end{proof}

\begin{proposition}
    The collection of processors $\mathcal{N}=(\mathcal{P}_k)$ where ${\nu_k \in \mathcal{C}_S}$, 
    of an entirely execution bounded generalized toppling network
    on the directed graph ${\mathcal G}=({\mathcal C},{\mathcal E})$ forms a finite irreducible Abelian network that halts on all inputs.
\end{proposition}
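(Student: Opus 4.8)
The plan is to establish the four claims packaged in the statement: that $\mathcal{N}$ is a well-defined Abelian network, that it is finite, that it is irreducible, and that it halts on all inputs. The first three are structural and short; the genuine work lies in the halting property, which I expect to be the main obstacle.

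For Abelianness and finiteness I would first check that (\ref{TransitionNew}) and (\ref{MessagePassing}) are well defined, i.e. that $T_k$ maps $\mathcal{Q}_k$ into itself (the firing branch decrements the counter by the threshold $\Gamma^r_{kk}$ after incrementing, returning it to $0$ as in the modular toppling dynamics of Example~\ref{toppling}) and that $T_{(l,k)}$ emits a word over $\mathcal{A}_k$, namely $\Gamma^p_{kl}$ copies of the letter $k$, which is legitimate since $\Gamma^p_{kl}\le\Gamma^r_{kk}$. Abelianness is then immediate: each alphabet $\mathcal{A}_k=\{k\}$ is a singleton, so any two words of $\mathcal{A}_k^{*}$ with the same letter count are literally equal, and both identities required in (\ref{e.abelianness}) hold tautologically. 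Finiteness is equally direct, since each $\mathcal{Q}_k=\{0,\dots,\Gamma^r_{kk}\}$ is finite and $\mathcal{A}=\bigcup_k\{k\}$ has $N=d$ letters, so that $\mathcal{Q}=\bigoplus_k\mathcal{Q}_k$ and $\mathcal{A}$ are finite.

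For irreducibility I would use that, for a toppling-type network, the condition amounts to strong connectivity of the message-passing digraph, whose arcs $l\to k$ are present exactly when $T_{(l,k)}$ can be non-empty, i.e. when $\Gamma^p_{kl}>0$, equivalently $k\in{\rm supp}(\nu'_l)$. Since a processor $l$ can always be forced to fire by feeding it enough copies of the letter $l$, it can indeed deliver a message to each such $k$; hence this digraph encodes the reachability structure carried by $\mathcal{G}=(\mathcal{C},\mathcal{E})$, and irreducibility follows once $\mathcal{G}$ is strongly connected.

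The hard part is the halting claim, and here I would set up the correspondence between the network dynamics and the CRN reactions. Writing a network state as $x.q$ with $x\in\mathbb{N}^{\mathcal{A}}$ the waiting letters and $q$ the internal counters, associate the CRN configuration $c=\sum_k (x_k+q_k)e_k$. Processing a letter $k$ below threshold leaves $c$ fixed (a letter merely migrates from $x_k$ into $q_k$), while a firing of $k$ shifts $c$ by exactly the reaction vector $l_k=\nu'_k-\nu_k$. Invoking execution boundedness---the standing hypothesis of this section---together with Theorem~\ref{condition_bounded}, I obtain a linear potential $\psi$ with $\psi(\nu'_k)-\psi(\nu_k)\le-\epsilon<0$ for all $k$ and some $\epsilon>0$. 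Then $\Phi(x.q):=\psi(c)$ is constant along non-firing steps and drops by at least $\epsilon$ at each firing, so every legal execution performs at most $\Phi(x.q)/\epsilon$ firings. As each firing emits only finitely many new letters and the initial supply is finite, the total number of letters ever processed is finite; the greedy legal execution therefore terminates, and it can only halt at $x=0$, i.e. in a complete legal execution, which proves halting on all inputs. The subtlety to flag is precisely this reliance on execution boundedness: without it a generalized toppling network may support a conservative cycle (for instance $\Gamma^r_{kk}e_k\to\Gamma^r_{kk}e_{k'}$ together with the reverse reaction) and fail to halt, so the hypothesis cannot be dispensed with.
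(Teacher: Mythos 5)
Your treatment of abelianness and finiteness matches the paper's: the singleton alphabets $\mathcal{A}_k=\{k\}$ make the two conditions of (\ref{e.abelianness}) tautological, and the state spaces $\mathcal{Q}_k=\{0,\dots,\Gamma^r_{kk}\}$ and the alphabet are finite. The two remaining points diverge from the paper, one of them in a way that leaves a gap. For irreducibility you reduce to strong connectivity of the message-passing digraph and conclude that irreducibility ``follows once $\mathcal{G}$ is strongly connected'' --- but strong connectivity is not a hypothesis of the proposition, so as written your argument only proves a conditional statement. The notion the paper actually uses (and the only one it defines, in the appendix) is \emph{local} irreducibility: each local monoid action $M_k\times\mathcal{Q}_k\to\mathcal{Q}_k$ has a single equivalence class. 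The paper verifies this directly: since processor $\mathcal{P}_k$ has the single letter $k$ and its counter cycles through all of $\{0,\dots,\Gamma^r_{kk}\}$ under repeated insertion of that letter, for any $q_k,q'_k\in\mathcal{Q}_k$ there is an $n$ with $n k\triangleright q_k=q'_k$. Your set-up gives this for free, so the fix is to replace the strong-connectivity reduction by this one-line local argument rather than importing an assumption the statement does not make.

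For halting, you and the paper take genuinely different routes. The paper simply invokes Lemma~\ref{l.halting}, whose proof goes through the production matrix and the spectral-radius criterion of \cite{bond2016abelian2}. You instead build a Lyapunov function directly: pulling back the linear potential $\psi$ guaranteed by Theorem~\ref{condition_bounded} to network states via $c=\sum_k(x_k+q_k)e_k$, observing that sub-threshold processing preserves $c$ while each firing shifts $c$ by the reaction vector and hence decreases $\psi(c)$ by at least $\epsilon>0$, so that the number of firings, and therefore the number of letters ever emitted and processed, is finite. This is correct (with the closing remark, which you make, that a maximal legal execution that cannot continue must have emptied $x$ and is therefore complete), more self-contained than the paper's citation, and it makes transparent the dependence on execution boundedness --- a hypothesis the paper also relies on but which is carried by the section title rather than restated in the proposition, so flagging it explicitly is a genuine improvement.
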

\begin{proof}
    Since for any $\nu_k \in \mathcal{C}_S$, the alphabet $\mathcal{A}_k $ is constituted of a single species $k$, any two words $w, w'$ in $\mathcal{A}_k^* $ such that $\vert w\vert = \vert w' \vert $ are equal i.e. $w = w'$. The two conditions of (\ref{e.abelianness}) are trivially satisfied and thus each processor is abelian.
    To each processor $\mathcal{P}_k $ is associated a monoid action $M_k \times \mathcal{Q}_k \rightarrow \mathcal{Q}_k$, called local action (\ref{local_action}), which is irreducible because for any $q_k, q'_k \in \mathcal{Q}_k$ there exists $n \in \{0,..,\Gamma^r_{kk}\}$ such that $n k \triangleright q_k = q'_k$. We conclude that each processor is irreducible and thus that $\mathcal{N}$ is irreducible.
    Furthermore $\mathcal{N}$ is finite because both the state spaces and the alphabets are finite.
    Finally since the CRN is execution bounded, one can deduce using Lemma \ref{l.halting} that $\mathcal{N}$ halts on all inputs. This completes the proof.
\end{proof}
For generalized toppling networks,
the kernels of the group actions $\mathbb{Z}^{{\mathcal{A}}_k} \times e_{\nu_k}\mathcal{Q}_{k} \to e_{\nu_k}\mathcal{Q}_{k}$ associated to processor $\mathcal{P}_k$ (see Def. \ref{TotalKernel}) are the additive subgroups of $\mathbb{Z}$ generated by $\Gamma^r_{k,k}$ i.e.:
$$K_{k} =\{\lambda \Gamma^r_{k,k} |\lambda \in \mathbb{Z}\},$$
and the total kernel is
$K = \prod\limits_{1\le k\le p} K_k$. \\

AN theory introduces so-called {\bf production matrices} that permit to give formula for average values of odometers for uniformly distributed recurrent AN states (see the Appendix).  We show below that, for generalized toppling networks, the spectral properties of such production provide information on CRN and AN execution boundedness.

\begin{proposition}\label{p.production}
    The production matrix of an entirely execution bounded generalized toppling network $\mathcal{N}$ does not depend on the choice of the recurrent state $q \in \mathcal{Q}$ and is given by :
    \begin{equation}\label{production_matrix}
     P_{ij} = 
\begin{cases}
\displaystyle  \frac{\Gamma^p_{ij}}{\Gamma^r_{jj}} & \text{if } \Gamma^p_{ij} > 0, \\
0 & \text{otherwise.}
\end{cases} , \quad \forall i,j \in \{1,...,d\}.
\end{equation}
\end{proposition}

\begin{proof}

    Let $j \in \{1,..d\}$. The related production matrix is obtained by considering
    $k\triangleright q   $ for fixed recurrent states $q$ and $k\in K$. The local action $k\triangleright q =\pi_k(k.q)$
    is obtained by adding $k_j$ letters  for each letter type  $ j\in\mathcal{S}$, and by processing each letter once without processing transferred letters, see (\ref{local_action}) from the Appendix. For given $j\in\mathcal{S}$, the effect of processing it 
    $k_j =\Gamma_{jj}^r$ times using the map $\pi_j$ is to provoke  a unique reaction at $j$. Following the algebraic definition of a production matrix (see (\ref{d.prodmatrix}) of the Appendix), we consider the local action of $\Gamma_{jj}^r e_j$ on $q$ to get that
    $$\Gamma_{jj}^r e_j \triangleright q = \sum_{\substack{i=1 \\ i \ne j}}^{N} \Gamma^p_{ij} e_i .q  $$
    Using Definition \ref{d.prodmatrix}, we deduce that
     $P_q(\Gamma_{jj}^r e_j) = \sum\limits_{\substack{i=1 \\ i \ne j}}^{N} \Gamma^p_{ij} e_i$, 
      with
     $\Gamma_{jj}^r e_j \in K$ from construction. 
     We therefore obtain from 
     linearity that $P_q(e_j) = \sum\limits_{\substack{i=1 \\ i \ne j}}^{N}\frac{ \Gamma^p_{ij} }{\Gamma^r_{jj}}e_i$.
     The AN $\mathcal{N}$ being locally irreducible, the production matrix does not depend on the choice of the recurrent state $q \in \mathcal{Q}$.
\end{proof}

\begin{lemma}\label{l.halting}
For generalized toppling networks, the three following assertions are equivalent: 
\begin{enumerate}
    \item  The CRN  $\cH=(\cS,\cC,\cR)$ is entirely execution bounded.
    \item The corresponding Abelian network $\mathcal{N}$ halts on all inputs.
    \item The moduli of the eigenvalues of the associated production matrix $P$ are strictly smaller than one.
\end{enumerate}
\end{lemma}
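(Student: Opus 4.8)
The plan is to make condition (3) the hub: establish (1)$\iff$(3) by an elementary Perron--Frobenius argument and (2)$\iff$(3) by invoking the halting criterion of Abelian network theory, so that transitivity yields all three equivalences. The first task is to pin down the production matrix explicitly. For a generalized toppling network each processor $\mathcal{P}_k$ fires once per $\Gamma^r_{kk}$ letters it processes and then emits $\Gamma^p_{ik}$ letters to processor $\mathcal{P}_i$, so the average number of letters $i$ produced per letter $k$ processed is $\Gamma^p_{ik}/\Gamma^r_{kk}$; hence $P=\Gamma^p D^{-1}$ with $D=\mathrm{diag}(\Gamma^r_{11},\dots,\Gamma^r_{NN})$ the threshold matrix. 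Because $\nu_k=\Gamma^r_{kk}e_k$ one has $\Gamma^r=D$, so in fact $P=\Gamma^p(\Gamma^r)^{-1}$ and the reduced Laplacian is $L=(I-P)D=\Gamma^r-\Gamma^p=-\Gamma$, which serves as a consistency check against the definitions used in the burning algorithm.

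For (1)$\iff$(3) I would start from Theorem~\ref{condition_bounded}: the CRN is execution bounded iff there is a linear potential $\psi(a)=\sum_i v_i a_i$ with $v\ge 0$ and $\psi(\nu'_k)-\psi(\nu_k)<0$ for every reaction $k$. Writing this out for $\nu_k=\Gamma^r_{kk}e_k$ and $\nu'_k=\sum_{i\ne k}\Gamma^p_{ik}e_i$ turns the family of inequalities into the single vector condition $\sum_i\Gamma^p_{ik}v_i<\Gamma^r_{kk}v_k$ for all $k$, i.e.\ $P^\top v<v$. So execution boundedness is exactly the existence of a nonnegative subinvariant vector for $P^\top$. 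The key observation is that $P$ is entrywise nonnegative, hence $P^\top v\ge 0$; if some coordinate $v_i$ vanished, the $i$-th inequality would read $0\le(P^\top v)_i<v_i=0$, a contradiction, so any such $v$ is automatically \emph{strictly} positive. A strictly positive $v$ with $P^\top v<v$ gives, by the Collatz--Wielandt bound, $\rho(P^\top)\le\max_i(P^\top v)_i/v_i<1$; conversely $\rho(P)<1$ lets one take $v=(I-P^\top)^{-1}\mathbf{1}=\sum_{k\ge 0}(P^\top)^k\mathbf{1}\ge\mathbf{1}>0$, which satisfies $(I-P^\top)v=\mathbf{1}>0$. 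Since $P$ and $P^\top$ share spectra, $\rho(P^\top)<1\iff\rho(P)<1$, giving (1)$\iff$(3).

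For (2)$\iff$(3) I would rely on Abelian network theory: the construction preceding this lemma shows $\mathcal{N}$ is a finite irreducible Abelian network, and for such networks the halting criterion of \cite{bond2016abelian2} states that the network halts on all inputs if and only if the spectral radius of its production matrix is strictly less than one. Alternatively, and more self-containedly, one can prove (1)$\iff$(2) directly: because every state space $\mathcal{Q}_k$ is finite, a processor absorbs only finitely many letters before firing, so any infinite legal execution of $\mathcal{N}$ must contain infinitely many firings, and each firing of $\mathcal{P}_k$ is exactly one occurrence of the reaction $\nu_k\to\nu'_k$; matching legal executions of $\mathcal{N}$ with reaction sequences of $\cH$ then shows that $\mathcal{N}$ halts on all inputs iff $\cH$ admits no infinite execution.

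The part requiring the most care -- the main obstacle -- is bookkeeping rather than conceptual: correctly orienting $P$ (and tracking the transpose relating it to the stoichiometric matrices and to the reduced Laplacian of the burning algorithm) so that $\psi(\nu'_k)-\psi(\nu_k)<0$ lines up with $P^\top v<v$, and matching our conventions to those of \cite{bond2016abelian2} when invoking the halting criterion. The genuinely useful analytic step is the remark that a \emph{nonnegative} subinvariant vector is forced to be \emph{strictly} positive, which is exactly what lets the elementary Collatz--Wielandt estimate replace any appeal to irreducibility in the (1)$\iff$(3) direction.
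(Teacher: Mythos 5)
Your proof is correct, and for the core equivalence $(1)\Leftrightarrow(3)$ it takes a genuinely different and more self-contained route than the paper. Both you and the paper dispatch $(2)\Leftrightarrow(3)$ by citing the halting criterion (Theorem 5.6 of \cite{bond2016abelian2}), and your identifications $P=\Gamma^p(\Gamma^r)^{-1}$ and $L=(I-P)D=\Gamma^r-\Gamma^p=-\Gamma$ agree with the paper's formula (\ref{production_matrix}) and with both worked examples. Where you diverge is $(1)\Leftrightarrow(3)$: the paper proves $(1)\Rightarrow(3)$ by contradiction, following the proof of Theorem 5.6 of \cite{bond2016abelian2} through a Perron eigenvector, Lemma 4.5 of that reference, and the action of the production map on the total kernel $K$, and proves $(3)\Rightarrow(1)$ via the Farkas-type characterization of \cite{doty2024computational} (nonexistence of an integer $u\ge 0$ with $\Gamma u\ge 0$). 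You instead reduce the linear-potential condition of Theorem \ref{condition_bounded} directly to $P^\top v<v$ with $v\ge 0$, observe that nonnegativity of $P$ forces any such $v$ to be strictly positive, and close both directions with the Collatz--Wielandt bound and the Neumann series $v=(I-P^\top)^{-1}\mathbf{1}$. This buys elementarity and removes any dependence on irreducibility or on the auxiliary lemmas of \cite{bond2016abelian2} and \cite{doty2024computational}; what it gives up is the explicit link to the combinatorial objects (integer firing vectors, the kernel $K$) that the paper's route surfaces and that reappear in the burning algorithm. Your alternative direct sketch of $(1)\Leftrightarrow(2)$ is looser (the matching of legal executions of $\mathcal{N}$ with reaction sequences of $\cH$ deserves a precise statement), but it is offered only as a redundancy and the main argument does not rely on it.
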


\begin{proof}
   The equivalence of  statements (2) and (3)  follows from Theorem 5.6 of \cite{bond2016abelian2}.\\
   
    We next prove that $(1) \Rightarrow (3)$.    Let $P$ be the production matrix of $\mathcal{N}$. Let $\lambda$ be its Perron-Frobenius eigenvalue of non-negative eigenvector $x$. Suppose, for contradiction, that $\lambda \geq 1$.
    We follow the proof of Theorem 5.6 in \cite{bond2016abelian2}. There exists a vector $y \in \mathbb{Q}^{\mathcal{A}}$ such that $x \leq y \leq \lambda x$, so that $Py \geq Px = \lambda x \geq y$ which leads to  $Pny \geq Pnx = \la nx \geq ny$ for $n\in\N$. 
    By Lemma 4.5 of \cite{bond2016abelian2}, we can find $n \geq 1$ with $z=ny \in K$,
    that is $z_j = \lambda_j \Gamma_{jj}^r$, $\lambda_j \in\N_+$, $j\in\mathcal{S}$. Hence $z=\sum_j \lambda_j \nu_j$. We next relate the AN and CRN frameworks. First notice that
    $$ (Pz)_i = \sum_j P_{ij}\lambda_j \Gamma_{jj}^r = \sum_j \frac{\Gamma^p_{ij}}{\Gamma^r_{jj}}\lambda_j \Gamma_{jj}^r =\sum_j \Gamma^p_{ij}\lambda_j ,$$
    so that $Pz = \sum_j \lambda_j \nu_j'$, 
    where we have considered the generalized toppling network reactions $\nu_j \to \nu_j'$. The above computations show that
    $$Pz=\sum_j \lambda_j \nu_j' \ge \sum_j \lambda_j \nu_j=z.$$
    Let $\psi$ be a non-negative linear function of the form $\psi(a)=\sum_i v_i a_i$, with $v_i \ge 0$ $\forall i\in\mathcal{S}$. Then
    $\psi(Pz)\ge \psi(z) $, so that $\sum_j \lambda_j (\psi(\nu_j')-\psi(\nu_j))\ge 0$, and there exists $j\in\mathcal{S}$ with
    $\psi(\nu_j')\ge \psi(\nu_j)$, implying that $\psi$ is not a linear potential function (see Definition \ref{d.linear_potential}). The CRN has therefore no linear potential function,
    which is, according to
     Theorem \ref{condition_bounded}, in contradiction with the hypothesis of CRN execution boundedness. Therefore necessarily $\lambda < 1$. \\
 
    We now prove that $(3) \Rightarrow (1)$. The proof is inspired from  the proof of Theorem 6.5 of \cite{doty2024computational}. Suppose that all eigenvalues of $P$ have modulus strictly smaller than $1$.\\
    We first show that there exists no non-negative integer vector $u\in \mathbb{N}^N$ such that $\Gamma u \geq 0$, where $\Gamma$ is the CRN stoichiometric matrix. Assume by sake of contradiction, the existence of such an $u$ with $\Gamma u \geq 0$. Let $z = (z_i)_{1 \leq i \leq N} $ be such that $z_i = \Gamma^r_{ii} u_i$. Then $z \geq 0$, $z \in K$, and 
    \begin{equation*}
        (Pz)_i = \sum_j \Gamma^p_{jj}u_j \geq \sum_j \Gamma^r_{ij}u_j = \Gamma^r_{ii}u_i =(z)_i, \quad \forall i \in \{1,\dots, N\},
    \end{equation*}
    where the inequality is a consequence of the assumption $\Gamma u \geq u$. We deduce from  Collatz-Wielandt formula \cite{collatz} that the Perron-Frobenius eigenvalue of $P$ is larger or equal to $1$, which is a contradiction. So there exists no non-negative integer vector $u\in \mathbb{N}^N$ such that $\Gamma u \geq 0$.\\
     By Corollary 6.4 of \cite{doty2024computational}, there is an integer vector $v \geq 0$ such that $v^T \Gamma  < 0 $. We consider the linear function $\psi: \mathbb{R}^N_{\geq 0} \to \mathbb{R}_{\geq 0}$ given by $\psi(x) = v^Tx$. 
     Using the fact that the reaction vectors $\nu_j'-\nu_j$ correspond to the columns of the stoichiometric matrix, we deduce that
     $\psi(\nu_j')-\psi(\nu_j) <0$, for all $j\in\mathcal{S}$.  $\psi$ is therefore a linear potential function, and the CRN is entirely  execution bounded from Theorem \ref{condition_bounded}.
    
\end{proof}

 Theorem \ref{t.critical} and Theorem \ref{t.number_recurrent_states} from AN theory hold for any finite irreducible AN that halts an all inputs, so that they apply in particular to generalized toppling networks. In this setting, the critical group is isomorphic to $\mathbb{Z}^\mathcal{A} / (I-P)K$ and contains $\mathrm{det}(L)$ elements, where $L$ is the reduced Laplacian, see Definition \ref{reduced Laplacian}.

Considering again the Markov chain of transition mechanism (\ref{next_state_markov}), and an input distribution with adequate support (for instance the uniform distribution on $\mathcal{A}$),  Lemma \ref{l.algebraically_recurrent} and Theorem \ref{t.StationarySand} from the Appendix imply  that the recurrent states of the CRN sandpile Markov chain coincide with the  recurrent states of the AN, and that the stationary distribution is the uniform distribution on $\text{Rec } \mathcal{N}$.

Next Section focuses on the length of AN executions needed to perform a single step
$1_i \triangleright \triangleright q $, $i\in\mathcal{S}$,
when $q$ is uniform on $\text{Rec } \mathcal{N}$
 for dissipative systems.

\subsubsection{Expected time to halt for dissipative generalized toppling networks}\label{ExampleStochasticNew} 
We say that the CRN is \emph{dissipative} when 
  $\sum\limits_i \Gamma^p_{ij}\le  \Gamma_{jj}^r$, for all $j \in \{1,...,d\}$ with a strict inequality for at least some $j$.
  In this  case,

              $$
\sum_{i}P_{ij}=\sum_i \frac{\Gamma^p_{ij}}{\Gamma^r_{jj}}
              =\frac{1}{\Gamma^r_{jj}}\sum_i \Gamma^p_{ij} \le 1,$$
            so that the matrix $P^T$   is sub-stochastic. Hence, the spectral radius of $P$ is  smaller than 1, and strictly smaller than 1 when   $\sum\limits_i \Gamma^p_{ij} <  \Gamma_{jj}^r$ for some $j\in   \{1,...,d\}$.
It follows from Lemma~\ref{l.halting} that the CRN is entirely execution bounded. \\

Consider the enlarged state space
$\mathcal{W}= \mathcal{S}\cup \{\emptyset\}$, and the $(d+1) \times (d+1)$ matrix $S$ with coefficients:
\begin{equation}
    S_{ji}  = 
\begin{cases}
(P^T)_{ji}  & \text{if } i,j \in \{1,...,d\}, \\
1-\sum_{i\in\mathcal{S}}S_{ji} & \text{if } j \in \{1,...,d\}  \text{ and } i= \emptyset,\\
0 & \text{otherwise},
\end{cases}
\end{equation}

From construction, the restriction of $S$ on $\mathcal{S}$ is just $P^T$.
The matrix $S$ is thus stochastic and defines a Markov chain $(Y_n)_{n\ge 0}$ of state space
$\mathcal{W}$, and absorbing state $\emptyset$. The related fundamental matrix $F$  with indexes from $\mathcal{S}$
which describes non-absorption in $\{\emptyset\}$ is $F=(1-P^T)^{-1}$, with
$$F = \sum_{k\ge 0}(P^T)^k,$$
so that, for $j\in\mathcal{S}$, 
\begin{eqnarray*}
F_{ji} &=&\sum_{k\ge 1} (P^T)^k_{ji} =\sum_{k\ge 1}  (S^k)_{ji} =
    \sum_{k\ge 1}  \mathbb{P}(Y_l\ne \emptyset,\ l\le k;\ Y_k= i\vert Y_0 =j)\\
    &=&  \sum_{k\ge 1}  \mathbb{P}(\tau > k;\ Y_k= i\vert Y_0 =j),\ j\ne i,\\
\end{eqnarray*}
where $\tau=\min\{n\ge 1;\ Y_n =\emptyset\}$  is the absorption time.\\
Recall that the odometer $[x.q]$ is 
$\vert w\vert$ for a complete and legal execution $w$ of $x.q$, that is, it gives the number of letters of each type processed during the execution
of $w$.
Theorem \ref{expected_numbers_letters}
gives that the mean value of the odometer with respect to uniform random elements $q$ from $Rec\ \mathcal{N}$ is:
$$\mathbb{E}[x.q]^T = x^T (I - P^T)^{-1} = x^T F.$$
\\

Set $\vert\vert x\vert\vert =\sum\limits_i x_i$, and consider the probability measure $\nu_x$ on $\mathcal{S}$ defined by $\nu_x(i)=x_i/\vert\vert x\vert\vert$. 
Standard Markov chain theory \cite{grinstead2006grinstead} shows then that the average total number of processed letters  is:
\begin{equation}\label{AverageMove}
\sum_i \mathbb{E}[x.q]^T_i  =  \vert\vert x\vert\vert\  \nu_x F (1,\ldots,1)^T = \vert\vert x\vert\vert\
\E_{\nu_x}(\tau),
\end{equation}
where $\nu_x$ is considered as a line probability vector.
Hence, the total number of iterations to move the random state $x.q$ for a uniform $q$ of 
$Rec\ \mathcal{N}$ toward the state $x \triangleright \triangleright q$ is the total number of letters to be processed $\vert\vert x\vert\vert$
multiplied by the mean absorption time $\E_{\nu_x}(\tau)$, when $Y_0$ is distributed according to the probability measure $\nu_x$.\\

Consider the AN sandpile Markov chain  of transition mechanism
$q_{n+1} = 1_{i_n} \triangleright \triangleright q_n$.
  Equation (\ref{AverageMove}) shows that
the typical AN execution length  needed for computing a step $q_n\to q_{n+1}$ of the stationary chain is given by $\E_{\delta_{i_n}}(\tau)$, where
$\delta_{i_n}$ is the point measure centered at $i_n\in\mathcal{S}$.

\subsection{A special case of multi-species source complexes\label{s.multiple_species}} 

As discussed earlier, a general problem with multi-species source complexes, is the accumulation of molecules of some species. This can lead to an infinite state space, and the previous theory of finite Abelian networks does not apply anymore.\\ In this subsection, we avoid this problem by considering only complexes of type 1a (see Definition \ref{ComplexTypes}), and a slightly modified Markov chain, called the batch version of the sandpile Markov chain, see  (\ref{Batch}). For this Markov chain to solve the problem of infinite state space, we need all complexes to have the same number of molecules of each source species i.e. for all $k, k' \in \{1,...,d\}$, $\nu_k = \beta_k e_{U_k}$ and  $\nu'_{k'} = \beta'_{k'} e_{U_{k'}}$ for some choice of $\beta_k$ and $\beta'_{k'}$. Any reaction is thus of the form:
\begin{equation}
\beta_k e_{U_k} \to \beta'_{k'} e_{U_{k'}} 
\end{equation}

As stated before there is a single reaction having $e_{U_k}$ as a source complex.
\subsubsection{Batch version of the sandpile Markov chain}
One natural objective is to visit the static space as proposed in
\cite{chen2023rate} and \cite{Vasic} using a reasonable stochastic process. This can be obtained by using the following
transition mechanism:
\begin{equation}\label{Batch}
q_{n+1} = 1_{U_{i_n}} \triangleright \triangleright q_n,
\end{equation}
where the sequence $U_{i_n}$ is i.i.d. in $\{U_1,...,U_p\}.$ Set for convenience (see (\ref{BasicMap})):
\begin{equation}\label{BasicMap2}
\tau_U (q) = 1_U \triangleright \triangleright q.
\end{equation}
Lemma 2.2 of \cite{bond2016abelian3} shows that for all $k\in \{1,...,p\}$:
$$\tau_{U_k} = \prod_{j\in U_{k}}\tau_j.$$
We associate to each source complex a processor $\mathcal{P}_{U_k}$. The alphabet of this processor is $\mathcal{A}_k = \{U_k\}$.
The state space of the processor depends on the initial state $\eta \in \mathbb{N}^N$, we indeed have that:
$$\mathcal{Q}^k(\eta) = \{\eta -\chi_k e_{U_k},...,\eta - e_{U_k},\eta, \eta + e_{U_k},...,\eta + \epsilon_k e_{U_k}\}$$ where $\chi_k = \min\limits_j \{\eta_j  \}$ and $\epsilon_k = \max\limits_j\{\Gamma^r_{j,k}-\eta_j \} $. \\
Note that two different initial conditions lead to the same state space if and only if they are in the same discrete stoichiometric compatibility class.\\
For each $U_k$, there are $\# U_k \times (\beta_k -1)   $ distinct discrete stoichiometric compatibility classes since the relevant quantity is the net balance of molecules within $U_k$. Consequently, the total number of different discrete stoichiometric compatibility classes is given by:  
\begin{equation}\label{nb_compatibility_classes}
    \prod\limits_ {{k=1,\cdots, p}} \# U_k\times(\beta_k -1)
\end{equation} 
We define the transition and message passing functions similarly as in the single-species case (\ref{TransitionNew},\ref{MessagePassing}) but letters are now constituted of one molecule of each species of $U_k$. Once the initial state is fixed, the local action is irreducible and the associated Markov chain of transition mechanism given by (\ref{Batch}) possesses a unique recurrent class, and thus one can prove that:
\begin{proposition}
    Given an initial condition, the collection of processors $\mathcal{N}=(\mathcal{P}_{U_k{}})_{k=1,\cdots, p}$ on the directed graph ${\mathcal G}=({\mathcal C},{\mathcal E})$ forms a finite irreducible Abelian network that halts on all inputs.
\end{proposition}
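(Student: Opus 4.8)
The plan is to verify the four defining requirements of a finite irreducible Abelian network that halts on all inputs, following the same skeleton as the proof for generalized toppling networks in Section~\ref{s.single_species}, the one genuinely new feature being that the state space becomes finite only \emph{after} the initial condition $\eta$ has been fixed.

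First I would dispatch \textbf{abelianness}: each processor $\mathcal{P}_{U_k}$ has the single-letter alphabet $\mathcal{A}_k=\{U_k\}$, so any two words $w,w'\in\mathcal{A}_k^{*}$ with $|w|=|w'|$ coincide, and both conditions of~(\ref{e.abelianness}) hold trivially. Next, \textbf{irreducibility}: once $\eta$ is fixed, the local state space $\mathcal{Q}^k(\eta)=\{\eta-\chi_k e_{U_k},\dots,\eta+\epsilon_k e_{U_k}\}$ is a finite list on which the batch analogue of the transition rule~(\ref{TransitionNew}) acts as a cyclic increment by $e_{U_k}$, resetting through a reaction when the firing threshold of $U_k$ is reached. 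Hence for any $q_k,q'_k\in\mathcal{Q}^k(\eta)$ there is an integer $n$ with $n\,U_k\triangleright q_k=q'_k$, so each local action $M_k\times\mathcal{Q}_k\to\mathcal{Q}_k$ is irreducible and therefore so is $\mathcal{N}$.

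The crux, and the main obstacle, is \textbf{finiteness}. For a generic multi-species source complex the set $\mathcal{Q}_k$ is infinite, because repeatedly adding letters can pile up molecules of one species without ever reaching the threshold vector $\Gamma^r_{\cdot,k}$. The entire point of the batch hypothesis—every reaction has the form $\beta_k e_{U_k}\to\beta'_{k'}e_{U_{k'}}$ and every letter adds one molecule of each species in $U_k$—is precisely that the molecular counts inside each block $U_k$ stay balanced under both message passing and letter addition. I would make this rigorous by showing that the dynamics started at $\eta$ never leaves the discrete stoichiometric compatibility class of $\eta$, which pins each processor to the finite set $\mathcal{Q}^k(\eta)$; together with finitely many processors and single-letter alphabets, this yields a finite network. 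Finally, for \textbf{halting on all inputs}, I would collapse each block $U_k$ to a single super-species, so that the batch network becomes a generalized toppling network with firing thresholds $\beta_k$ and the reasoning of Lemma~\ref{l.halting} applies: execution boundedness of the underlying CRN is equivalent to the spectral radius of the production matrix being strictly below one, which is in turn equivalent to $\mathcal{N}$ halting on all inputs. The delicate part to get right is the batch-confinement argument, since species accumulation is the only mechanism through which infinite executions could otherwise arise.
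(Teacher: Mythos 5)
Your proposal is correct and follows exactly the route the paper intends: the paper in fact states this proposition without a written proof, relying on the single-species argument of Section~\ref{s.single_species} together with the observation that fixing the initial condition makes each local state space $\mathcal{Q}^k(\eta)$ finite and the local action cyclic, which is precisely what you spell out (including the reduction of the halting question to Lemma~\ref{l.halting} by collapsing each block $U_k$ to a super-species, as the paper itself does in Example~\ref{multispecies_example}). One small correction to your finiteness step: batch additions $1_{U_k}$ are in general not elements of the discrete stoichiometric subspace $\mathcal{T}'$, so the dynamics \emph{does} leave the compatibility class of $\eta$; the invariant that actually confines each processor to $\mathcal{Q}^k(\eta)$ is the one you state first, namely that every operation touching block $U_k$ (letter addition, incoming message, or firing) shifts all species of $U_k$ by the same amount, so the local state always has the form $\eta\vert_{U_k}+m\,e_{U_k}$ for an integer $m$ bounded by the firing threshold.
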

Therefore once an initial condition has been fixed, all the previous theoretical results concerning group structure and recurrent states, can be obtained in a similar manner as before (see Example \ref{multispecies_example}).
In particular we can follow \cite{meester2003abelian}, and derive some interesting properties of the batch version of the  sandpile Markov chain. \\
 Let $\mathcal{R}_e$ be the set of recurrent states of the Markov chain and $\mathcal{M}_s$ be the subset of the global monoid $\mathcal{M}$ given by:
$$\mathcal{M}_s=\{\prod_k \tau_{U_k}^{n_k},\ n_k\in \N\},$$
acting on the set of static states, which is an abelian sub-monoid of $\mathcal{M}$.

\begin{lemma}\label{SubMonoid}
The following assertions are extensions of the classical results for the ASM (Proposition 3.1 of \cite{meester2003abelian}):
\begin{itemize}
 \item{} $\mathcal{R}_e$   is closed under the action of $\mathcal{M}_s$.
 \item{} Let $\eta\in\mathcal{R}_e$. There exists $n(\eta)=(n_1(\eta), ...,n_N(\eta)) \in \N^N$ such that $\prod_k \tau_{U_k}^{n_k(\eta)} \eta =\eta$,
 \item{} Let $\eta\in\mathcal{R}_e$, and let $\mathcal{R}_e(\eta)$ be the recurrent class containing $\eta$. Let
 $$A = \{\xi \in \mathcal{R}_e;\ \prod_k \tau_{U_k}^{n_k(\eta)}\xi =\xi\}.$$
 Then $A = \mathcal{R}_e(\eta)$.
 \item{} Let $\eta\in\mathcal{R}_e$. The restriction of $\mathcal{M}_s$ to $\mathcal{R}_e(\eta)$  is an abelian group of neutral element
 $$e(\eta)=\prod_k  \tau_{U_k}^{n_k(\eta)},$$
 \item{} The inverse of any $\tau_{U_k} $ is
 $$\tau_{U_k}^{-1}= \tau_{U_k}^{n_k(\eta)-1}\prod_{l\ne k}\tau_{U_l}^{n_l(\eta)}.$$
 \item{} Assume that $\Gamma_{jk}^p  \equiv 0$, when $j\in U_k$, $\forall k$, and that $\Gamma_{.,k}^r=\gamma_k 1_{U_k}$ for constants $\gamma_k\in\N$,   $\forall k$. The following closure relation holds true: for all k
 \begin{equation}\label{Closure}
 e=\tau_{U_k}^{\gamma_k}\prod_{i\not\in U_k}\tau_i^{-\Gamma_{ik}^p}=\prod_{j\in U_k}\tau_j^{\Gamma_{jk}^r}\prod_{i\not\in U_k}\tau_i^{-\Gamma_{ik}^p},
 \end{equation}
 acts as neutral element for $\mathcal{M}_s$ acting on $\mathcal{R}_e$. Given $\eta\in\mathcal{R}_e$, the restriction of $e$ on   $\mathcal{R}_e(\eta)$ is $e(\eta)$.
\end{itemize}
\end{lemma}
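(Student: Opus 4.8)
The plan is to follow the classical treatment of the Abelian sandpile model (Proposition 3.1 of \cite{meester2003abelian}) and transpose each assertion to the batch sandpile Markov chain, using the finiteness and irreducibility already established for this Markov chain once an initial condition is fixed. Throughout, the crucial structural facts are that $\mathcal{M}_s$ is an abelian monoid (commutativity is inherited from the Abelian property of the network, and the product structure $\tau_{U_k}=\prod_{j\in U_k}\tau_j$ recalled before the statement) and that the recurrent set $\mathcal{R}_e$ is finite. First I would prove closure: since each $\tau_{U_k}$ is a measurable map driving the Markov chain, and $\mathcal{R}_e$ is exactly the set of states reachable infinitely often, applying $\tau_{U_k}$ to a recurrent state $\eta$ yields a state that is again visited infinitely often along the chain, hence $\tau_{U_k}\eta\in\mathcal{R}_e$; iterating gives closure under all of $\mathcal{M}_s$.

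Next I would establish the existence of a return exponent $n(\eta)$. Because $\mathcal{R}_e$ is finite and each $\tau_{U_k}$ maps $\mathcal{R}_e(\eta)$ into itself, the powers $\tau_{U_k}^{m}$ restricted to $\mathcal{R}_e(\eta)$ take finitely many values, so some power fixes $\eta$; combining the commuting generators via a common-multiple argument produces a single vector $n(\eta)\in\N^N$ with $\prod_k\tau_{U_k}^{n_k(\eta)}\eta=\eta$. For the characterization $A=\mathcal{R}_e(\eta)$, one inclusion is immediate (every element of the recurrent class is fixed by the same return map, by irreducibility within the class and commutativity), and the reverse inclusion follows by showing that any $\xi$ fixed by $\prod_k\tau_{U_k}^{n_k(\eta)}$ must be recurrent and communicate with $\eta$. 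The group structure is then the standard consequence: the restriction of the commuting maps $\tau_{U_k}$ to the finite set $\mathcal{R}_e(\eta)$ generates a finite abelian monoid in which every element has an inverse (each $\tau_{U_k}$ acts as a bijection on $\mathcal{R}_e(\eta)$, as it permutes a finite set it maps onto itself), hence a group; its neutral element is $e(\eta)=\prod_k\tau_{U_k}^{n_k(\eta)}$, and the explicit inverse formula $\tau_{U_k}^{-1}=\tau_{U_k}^{n_k(\eta)-1}\prod_{l\ne k}\tau_{U_l}^{n_l(\eta)}$ is verified by multiplying out using commutativity and the defining relation $e(\eta)=\prod_l\tau_{U_l}^{n_l(\eta)}$.

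The final closure relation \eqref{Closure} requires the extra hypotheses $\Gamma^p_{jk}\equiv 0$ for $j\in U_k$ and $\Gamma^r_{.,k}=\gamma_k\mathbf{1}_{U_k}$. Under these assumptions, processing one full batch letter $U_k$ exactly $\gamma_k$ times at a site triggers the reaction $\nu_k\to\nu'_k$ once, consuming the $\gamma_k\mathbf{1}_{U_k}$ molecules and emitting $\Gamma^p_{ik}$ units to each species $i\notin U_k$; this is precisely encoded algebraically as $\tau_{U_k}^{\gamma_k}=\prod_{i\notin U_k}\tau_i^{\Gamma^p_{ik}}$ as operators on $\mathcal{R}_e$, so that $\tau_{U_k}^{\gamma_k}\prod_{i\notin U_k}\tau_i^{-\Gamma^p_{ik}}$ acts as the identity. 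Rewriting $\tau_{U_k}^{\gamma_k}=\prod_{j\in U_k}\tau_j^{\Gamma^r_{jk}}$ using $\gamma_k\mathbf{1}_{U_k}=\Gamma^r_{.,k}$ gives the stated form. I expect the main obstacle to be making the operator identity $\tau_{U_k}^{\gamma_k}=\prod_{i\notin U_k}\tau_i^{\Gamma^p_{ik}}$ rigorous on all of $\mathcal{R}_e$ rather than on a single class: one must check that the firing-based bookkeeping of emitted letters is globally consistent and that the inverses $\tau_i^{-\Gamma^p_{ik}}$ are well defined as elements of the group, which again relies on restricting to each finite recurrent class $\mathcal{R}_e(\eta)$ where the group structure has just been established, and then observing that $e(\eta)$ and $e$ coincide there.
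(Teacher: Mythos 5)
Your overall architecture coincides with the paper's: closure of $\mathcal{R}_e$ follows because every move of the batch chain is some $\tau_{U_k}$; $A=\mathcal{R}_e(\eta)$ is obtained by showing $A$ is nonempty and closed under the abelian monoid $\mathcal{M}_s$ (hence under the chain); the group structure, neutral element $e(\eta)$ and the inverse formula come from commutativity on the finite recurrent class; and the closure relation \eqref{Closure} is exactly the paper's argument that the local action determines the global action (Lemma~2.3 of \cite{bond2016abelian3}): firing processor $k$ once emits $\Gamma^p_{ik}$ letters $i$ for $i\notin U_k$ and, under the hypothesis $\Gamma^p_{jk}\equiv 0$ for $j\in U_k$, returns the processor to its initial state, giving $\prod_{j\in U_k}\tau_j^{\Gamma^r_{jk}}=\prod_{i\notin U_k}\tau_i^{\Gamma^p_{ik}}$ and hence \eqref{Closure}. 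Your worry about making that operator identity rigorous is resolved precisely by that cited lemma, which the paper invokes.

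The one step that does not go through as written is your derivation of the return exponent $n(\eta)$. From ``$\tau_{U_k}$ maps the finite set $\mathcal{R}_e(\eta)$ into itself, so the powers take finitely many values'' you conclude that some power of $\tau_{U_k}$ fixes $\eta$; this is false for a general self-map of a finite set (orbits are only \emph{eventually} periodic, so $\eta$ need not lie on a cycle), and it holds here only because $\tau_{U_k}$ is a bijection of $\mathcal{R}_e(\eta)$ --- a fact you invoke only later and justify only by asserting surjectivity. The paper avoids this circularity entirely: recurrence of $\eta$ for the batch chain means there is a return path $\eta\to\cdots\to\eta$ whose steps are generators $\tau_{U_l}$, and commutativity collects them into $\prod_k\tau_{U_k}^{n_k(\eta)}\eta=\eta$ directly. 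Either adopt that argument, or first prove surjectivity of $\tau_{U_k}$ on the recurrent class (for $\zeta$ in the class, $\tau_{U_k}\zeta$ is again in the class, so some $g\in\mathcal{M}_s$ returns it to $\zeta$, whence $\zeta=g\,\tau_{U_k}\zeta=\tau_{U_k}(g\zeta)$) before appealing to periodicity.
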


\begin{proof}
Assume that $\eta\in\mathcal{R}_e$, and let $g\in\mathcal{M}_s$. Noticing that the Markov chain moves are of  the form
$\eta \to \tau_{U_k}\eta$ for some $k$, we see that $g\eta \in \mathcal{R}_e$, so that $\mathcal{R}_e$ is closed under the action of $\mathcal{M}_s$.
Concerning the second assertion, $\eta\in\mathcal{R}_e$ implies that there is a Markov chain path, that is a composition of operators
$\tau_{U_k}$ taking $\eta$ back to itself. For the third point, $A\ne\emptyset$ since $\eta\in A$.
Suppose that $\xi\in A$, and consider the action of some $\prod_k \tau_{U_k}^{m_k}\in \mathcal{M}_s$ on $\xi$: 
$$
\prod_k \tau_{U_k}^{m_k}\xi = \prod_k \tau_{U_k}^{m_k} \prod_k \tau_{U_k}^{n_k(\eta)} \xi 
                            = \prod_k \tau_{U_k}^{n_k(\eta)}\prod_k \tau_{U_k}^{m_k}\xi,$$
showing that $\prod_k \tau_{U_k}^{m_k}\xi\in A$.
$A$ is therefore closed under the action of $\mathcal{M}_s$, and also for the action of the Markov chain. Hence $A$ contains $\mathcal{R}_e(\eta)$
and therefore $A = \mathcal{R}_e(\eta)$.
The preceding argument also shows that
$$e(\eta)= \prod_k \tau_{U_k}^{n_k(\eta)}$$
acts on $\mathcal{R}_e(\eta)$ as a neutral element. The inverse of $\tau_{U_k}$ is obtained from direct computation using commutativity. 
Concerning the last point,
 Lemma 2.3  of 
\cite{bond2016abelian3} states that if $x \triangleright q = y.r$, then $x\triangleright\triangleright q = y\triangleright\triangleright r  $, so that
$$\tau_{U_k}^{\gamma_k}=\prod_{j\in U_k}\tau_j^{\Gamma_{jk}^r} =\prod_{i\not\in U_k}\tau_i^{\Gamma_{ik}^p}\in \mathcal{M}_s,$$
giving the result.
\end{proof}

\section{Examples of noncompetitive CRNs}\label{s.examples}

\begin{example}\label{exemple1}
     The following noncompetitive CRN, taken from \cite{Vasic}, implement the simplest dissipative ASM on a graph described by three nodes $1,2$ and $3$
     on a circle, where $3$ models the open boundary where sand can be evacuated
     (see Section~\ref{s.SandB}):
          \begin{align*}
    2e_1 &\longrightarrow e_2 + e_3, \\
    2e_2 &\longrightarrow e_1 +e_3,
\end{align*}

In this example, $U_0=\{3\}$, $U_1=\{1\}$  and $U_2=\{2\}$. All complexes are of type 1a.
Following Remark~\ref{r.sink}, species $3$ will be considered as the sink since it only appears as a product and does not participate to the dynamical evolution of the abundances of 
species $1$ and $2$.
We therefore focus on the following noncompetitive CRN:

\begin{align*}
    2e_1 &\longrightarrow e_2 , \\
    2e_2 &\longrightarrow e_1,
\end{align*}
so that $U_0 =\emptyset$, $U_1=\{1\}$  and $U_2=\{2\}$, with all complexes of type 1a. This CRN is a  generalized toppling network that
coincides with the introductory  Example \ref{e.intro}.  
The static states are $q_1 = (0,0)^T,q_2 = (0,1)^T,q_3 = (1,0)^T$ and $q_4 =(1,1)^T$. 
This CRN is dissipative (see Section~\ref{ExampleStochasticNew}), and following (\ref{production_matrix}), the sub-stochastic production matrix is given by:
$$P = \begin{pmatrix}
    0 & 1/2 \\
    1/2 & 0
\end{pmatrix}.$$

The spectral radius being strictly smaller than 1, Lemma~\ref{l.halting} ensures that this CRN is execution bounded. The kernel of the group action is $K = 2\mathbb{Z}\times2\mathbb{Z}$ (\ref{e.kernel}).\\
The reduced Laplacian (see Definition \ref{reduced Laplacian}) is : 
$$L = \begin{pmatrix}
    2 & -1 \\
    -1 & 2
\end{pmatrix}$$ with determinant $det(L) =3$. By Theorem \ref{t.number_recurrent_states}, there are thus $3$ recurrent states and $3$ elements in the abelian group $\text{Crit } \mathcal{N}$.\\
We now follow the burning algorithm for ANs (see Algorithm \ref{alg:burning}). We first find the burning element $y = (1,1)^T$, and we obtain that $k = Dy =(2,2)^T \in K$ which clearly satisfies $k\geq 1$ and $Pk\leq k$. We then apply the second step of the algorithm to find the recurrent states.\\
One can check that for $i \in \{2,3,4\}$, we have $(I-P)k \triangleright \triangleright q_i = q_i$, whereas this does not hold for $q_1$. Using Theorem \ref{t.burning_algo}, this proves that the recurrent states are $q_2,q_3$ and $q_4$ (see Figure \ref{f.intro_example}). The sandpile Markov chain of Section \ref{s.SandMC-AN} with a uniform input distribution, will thus converge toward a uniform distribution on the recurrent states $q_2$, $q_3$ and $q_4$, according to Theorem \ref{t.StationarySand}.
The elements of the critical group are the maps $\tau_1$, $\tau_2$ and $\tau_1 \tau_2$ (see (\ref{BasicMap})) and are in one-to-one correspondence with recurrent states.

Let $q$ be a random recurrent element, and choose for instance $x = (3,3)^T$. Then following Theorem \ref{expected_numbers_letters}:  $$\mathbb{E}[x.q] = (I-P)^{-1}x = \begin{pmatrix}
    4/3 & 2/3 \\
    2/3 & 4/3
\end{pmatrix} \begin{pmatrix}
    3 \\ 3
\end{pmatrix} = \begin{pmatrix}
    6 \\ 6
\end{pmatrix}$$
 meaning that $6$ molecules of each species will be processed in average if the AN starts at state $x.q$.

\end{example}
\begin{example}\label{second_example}
        Consider another noncompetitive CRN which describes a generalized toppling network:
        \begin{align*}
    2e_1 &\longrightarrow 2e_2+e_3, \\
    2e_2 &\longrightarrow e_1 \\
    3 e_3 &\longrightarrow e_2
\end{align*}

There is no sink and all complexes are of type 1a with $U_0 =\emptyset$, $U_1=\{1\}$, $U_2=\{2\}$, and $U_3=\{3\}$. We apply the methodology developed in Section~\ref{s.single_species}. \\ The production matrix is:
$$P = \begin{pmatrix}
    0 & \frac{1}{2} & 0 \\
     1 & 0 & \frac{1}{3} \\
      \frac{1}{2} & 0 & 0 
\end{pmatrix}$$
The spectral radius being strictly smaller than 1, Lemma~\ref{l.halting} gives that  this CRN is execution bounded.  The kernel of the group action is $K = 2\mathbb{Z}\times2\mathbb{Z}\times3\mathbb{Z}$.\\
There are $12$ static states and as before our goal is to determine the recurrent states, since the stationary distribution sandpile Markov chain will only be concentrated at these states. \\
The reduced Laplacian of the system is:
$$L = \begin{pmatrix}
    2 & -1 & 0 \\
     -2 & 2 & -1 \\
      -1 & 0 & 3 
\end{pmatrix}$$ with determinant equals to $5$ indicating that there are $5$ recurrent states among the $12$ states that are static. Equivalently there are $5$ elements in the critical group, since there is a bijection between the critical group $\text{Crit } \mathcal{N}$ and the set of recurrent states $\text{Rec }\mathcal{N}$. 
To find these recurrent states, we follow the burning algorithm (Algorithm \ref{alg:burning}). We first find that $y = (1,2,1)^T $ is a burning element and it follows that $k = Dy = (2,4,3)^T \in K$ and, satisfies $k\geq 1$ and $Pk\leq k$. Then we apply the second step of the burning algorithm, and we check that $q_1 = (1,1,0)^T$, $q_2 = (1,0,1)^T$, $q_3 = (1,1,1)^T$, $q_4 = (1,0,2)^T$ and $q_5 = (1,1,2)^T$ are the only static states that verify $(I-P)k \triangleright \triangleright q_i = q_i$, and therefore: $$\text{Rec }\mathcal{N} = \{q_1,q_2,q_3,q_4,q_5\}$$
Figure \ref{f.second_example} illustrates, that once the sandpile Markov chain enters the set of recurrent states $\text{Rec }\mathcal{N}$, it will stay in it forever.
Furthermore with a uniform input distribution, this Markov chain will converges toward a uniform distribution on $\text{Rec }\mathcal{N}$.
\begin{figure}[h]\label{f.second_example}
    \centering
    \includegraphics [width=12cm]{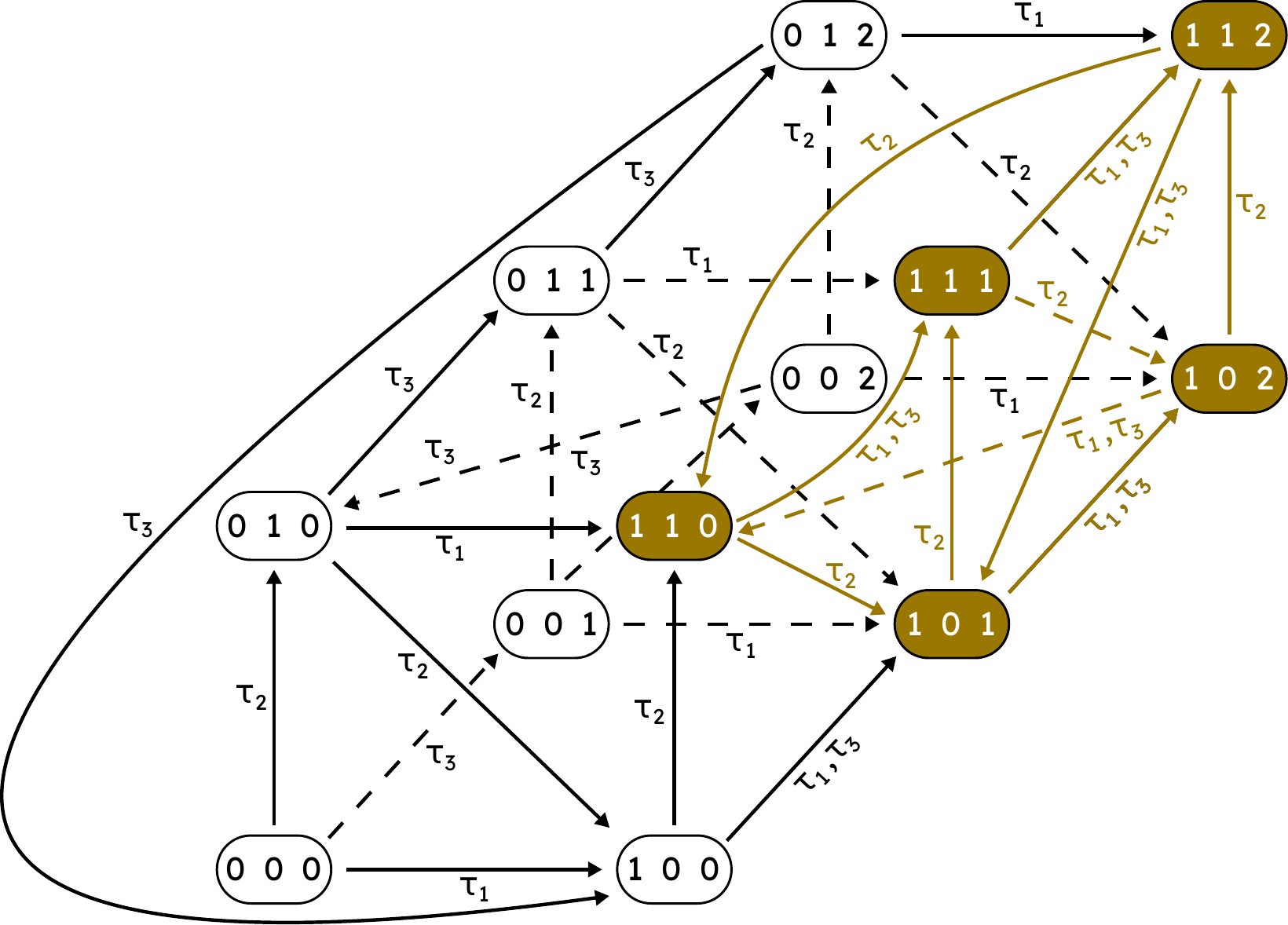}
    \caption{
    The transition diagram of the sandpile Markov chain associated with the noncompetitive CRN of reactions
    $2e_1 \longrightarrow 2e_2+e_3$,
    $2e_2 \longrightarrow e_1$ and
    $3 e_3 \longrightarrow e_2$ of Example \ref{second_example}. 
    We show the results of  critical group generators (\ref{BasicMap}) on transitions. The five recurrent static states among the twelve static states are shaded brown.
    }
\end{figure}
\end{example}

\begin{example}\label{multispecies_example}
    Now introduce an example of a noncompetitive CRN falling in the range of Section \ref{s.multiple_species} :
        \begin{align*}
    &2e_1 \longrightarrow e_2+e_3, \\
    &2e_2+2e_3 \longrightarrow e_1 
\end{align*}

We remark that $U_0 = \emptyset$ , $U_1 = \{1\}$ and $U_2=\{2,3\}$. All complexes are of type 1a and writes $\nu_k = \beta_k e_{U_i}$. We can then write : 
    \begin{align*}
    &2e_{U_1} \longrightarrow e_{U_2} \\
    &2e_{U_2} \longrightarrow e_{U_1} 
\end{align*}
We retrieve Example \ref{exemple1}, where for all $k \in \{1,2\}$, $e_{k}$ is replaced by $e_{U_k}$. The production matrix, the kernel or the reduced Laplacian matrix are given in Example \ref{exemple1}. 
Using (\ref{nb_compatibility_classes}), we find that there are $6$ distinct stoichiometric compatibility classes for which the the state space and thus the recurrent states will be different. \\
Let note $\eta_i$ for $i \in \{1,...,6\}$, be an initial state associated to the discrete stoichiometric compatibility class $\mathcal T + \eta_i$ . We choose $\eta_1 = (0,0,0)^T$, $\eta_2 = (0,1,0)^T$, $\eta_3 = (0,0,1)^T$, $\eta_4 = (1,0,0)^T$, $\eta_5 = (1,1,0)^T$ and $\eta_6 = (1,0,1)^T$. These states don't communicate for the batch Markov chain given in (\ref{Batch}), and therefore choosing them as initial states leads to different state spaces. 
We follow Example \ref{exemple1}, and deduce that for all $i \in \{1,...,6\}$ the states $\eta_i+ e_{U_1}$, $\eta_i+ e_{U_2}$ and $\eta_i+ e_{U_1}+ e_{U_2}$ are recurrent but not $\eta_i$.

\end{example}

\begin{example}\label{ExempleTopplingNN}

Consider the noncompetitive CRN associated with RReLU NN of Section \ref{ReLUCRN}, where we assume for simplicity that the weights $w_k =p_k/l_k$, $k=1,\ldots,d$
are positive. The species set is $\mathcal{S}=\{1,\ldots,d+1\}$, $N=d+1$,
corresponding to the species symbols $X_1^+,\ldots,X_d^+$ and $Y^+$,
and the set of single-species reactions is given by
$$
        l_k X_k^+ \longrightarrow p_k Y^+,
\ k=1,\ldots, d.$$
 We just focus on the linear part of the NN omitting the ReLU function.
Note that $U_0 =\{Y^+\}$ for the sink species $Y^+$, so that the main assumption
$U_0 =\emptyset $ of Section \ref{s.single_species} is not satisfied.
We define the processor state spaces  to be
$$\mathcal{Q}_k =\{\gamma_{kk}e_k;\ \gamma_{kk}\in\N,\ \gamma_{kk}< \Gamma_{kk}^r = l_k\},$$
$k=1,\ldots,d$, and,  set $\mathcal{Q}_N = \{\gamma_{NN}e_N;\ \gamma_{NN}\in\N\}$, for processor $\mathcal{P}_N$ associated to species $N$.
The global state space $\mathcal{Q}=\mathop{\oplus}\limits_{k=1,\ldots, d+1} \mathcal{Q}_k$ is therefore infinite.
The transition functions are given as (\ref{TransitionNew}) when $k\le d$ and set
$T_N(q^N)= q^N +e_N$, $q^N\in\mathcal{Q}_N$. The message passing functions $T_{(k,N)}$ from processor
$\mathcal{P}_k$ to processor $\mathcal{P}_N$ is 
\begin{equation*}
T_{(k,N)}(q^k) = \left\{
    \begin{array}{ll}
    N^{p_k}  & \text{if}  \quad q^k_k =l_k -1,\\
          \epsilon  & \mbox{otherwise},
          \end{array}.\right. 
\end{equation*}
$k=1,\ldots,d$.\\

Recall from Section \ref{ReLUCRN} that each term  $w_k x_k^+(0)$ of  the NN weighted sum corresponds to an increase of
$[x_k^+(0)/l_k]p_k$ units from CRN species  $Y^+$, that are considered as transferred letters through the message passing function $T_{(k,N)}$. The total number of processed letters in the AN version of the CRN might thus corresponds to the NN output sum $W=\sum_{k=1}^d w_k x_k^+(0)$. One might thus guess that the $N$th component of the  odometer $[x(0)^+ .q]$ is related to $W$.
Taking inspiration from the results of Proposition \ref{p.production},  consider the matrix
$P_{kN}=0$, $k=1,\ldots,d+1$, $P_{kj}=0$, $k,j=1,\ldots,d$, $P_{Nk}=p_k/l_k$, $k=1,\ldots,d$. This matrix is such that
$(P^T)^n \equiv 0$, $n\ge 2$, so that from Neumann series, $(I-P^T)^{-1}=I +P^T$. The formula for the mean odometer given in Theorem \ref{expected_numbers_letters} 
suggests to consider
$$(x_1^+(0),x_2^+(0),\ldots,x_d^+(0),0)(I-P^T)^{-1}_N = \sum_{k=1}^d w_k x_k^+(0)=W,$$
which leads to a correct result even if 
the  basic assumptions of Proposition \ref{p.production} and Theorem \ref{expected_numbers_letters} are not satisfied.

\end{example}

\section{Discussion}\label{s.discussion}
We have established that rate-independent noncompetitive CRNs can be associated with Abelian Networks.This connection brings to light several structural and dynamical similarities between these two frameworks, and  allows the use of  powerful algebraic tools developed for finite ANs—such as critical groups, production matrices, and burning algorithms— for studying execution bounded noncompetitive CRNs. 
However, this theory only applies when the state space of the CRN is finite, which is no longer the case with multi-species source complexes, since nothing prevents the Markov chain from accumulating arbitrarily large quantities of certain molecular species.  This phenomenon of unbounded accumulation also appears in non-unary ANs such as, e.g., in the oil-water model considered in   \cite{candellero2017oil},
which is concerned with
 scaling-limit and aggregation behaviors rather than algebraic invariants, and thus addresses a substantially different set of questions than those considered here. Developing these extensions represents an exciting direction for future research.

\section{Appendix}\label{BasicsAbelianNetwork}

 We give more details concerning the algebraic aspects of AN that were briefly described in Section \ref{s.AbelianNet}

For a set $\mathcal{Y}$, we define $\text{End}(\mathcal{Y})$ to be the monoid of all set maps $\mathcal{Y} \to \mathcal{Y}$, with the composition operation.\\
Given $x\in \mathbb{N}^{\mathcal{A}}$,  let $\pi_x$ denote $\pi_w$ for a word $w$ such that $\vert w\vert = x$ with $\pi_x(x.q)$. Then 
\begin{equation}\label{local_action}
x \triangleright q = \pi_x (x.q)\in \mathbb{N}^{\mathcal{A}}\ {\rm x}\ \mathcal{Q},
\end{equation}
 is called the \textit{local action}, since each processor processes only letters added at its own vertex, not those passed from other processors. This distinguishes it from the \textit{global action} defined as: 
\begin{equation}x \triangleright \triangleright q = q'\end{equation}
which denotes the final processor state obtained by processing the word $x$, that is, $\pi_w(x.q)= 0.q'$ for a complete and legal execution $w$.

For $a \in \mathcal{A}$, consider the map
 $\tau_a: \mathcal{Q} \to \mathcal{Q}$:
 \begin{equation}
\tau_a (q) = 1_a \triangleright \triangleright q. 
\end{equation}
The set $M_v = \left\langle t_a \right\rangle_{a \in \mathcal{A}_v}$ is a submonoid of $\text{End}(\mathcal{Q}_v)$ called the transition monoid of processor $ \mathcal{P}_v$ and the set $M=\left\langle\tau_a\right\rangle_{a \in \mathcal{A}} \subset \text{End}(\mathcal{Q})$ is a global monoid. 
\begin{definition}\label{d.finite_AN}
    An abelian network $\mathcal{N}=(\mathcal{P}_v)_{v \in V}$ is finite if both $V$ is finite and if each processor $\mathcal{P}_v$ has finite alphabet $\mathcal{A}_v$ and state space $\mathcal{Q}_v$.
\end{definition}
\begin{definition}\label{d.irreducibility}
    An Abelian network $\mathcal{N}=(\mathcal{P}_v)_{v \in V}$ is locally irreducible, if each processor is irreducible i.e. if the associated monoid action $ \tau : M_v \times \mathcal{Q}_v \rightarrow \mathcal{Q}_v$ admits a unique equivalence class (see \cite{bond2016abelian2}, section 2.1). Levine et al. proved that local irreducibility implies global irreducibility \cite{bond2016abelian3}.
\end{definition}
The \textit{minimal idempotent} of $M$ is denoted $e$ and is the unique idempotent accessible from every element of $M$ (i.e. the unique $e \in M$ such that $ee=e$ and $e \in Mm $ for all $m \in M$ ).  For a distribution $\mu$ on $\mathcal{A}$, we denote $M_\mu$ the submonoid of 
$M$ generated by $\{\tau_a: \mu(a)>0\}$.

\begin{definition}\label{CriticalGroup}
    We denote by $\text{Crit } \mathcal{N}$   the \textit{critical group} of an Abelian network $\mathcal{N}$. This is the group $eM$.
\end{definition}
\begin{definition} \label{d.alg_recurrent}
We denote by $\text{Rec } \mathcal{N}$ the set of recurrent states of $\mathcal{N}$, where a state $q \in \mathcal{Q}$ is  recurrent if it satisfies $eq = q$ for the group action $M \times Q \to Q$.
\end{definition}
\begin{proposition}
    For a finite irreducible Abelian networks $\mathcal{N}$ that halts on all inputs, the action of $\text{Crit } \mathcal{N}$ on $\text{Rec } \mathcal{N}$ is a bijection, and thus :
    $$\#\text{Crit } \mathcal{N} = \#\text{Rec } \mathcal{N}$$
\end{proposition}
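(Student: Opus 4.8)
The plan is to show that the group $eM=\text{Crit }\mathcal{N}$ acts on $\text{Rec }\mathcal{N}$ simply transitively, i.e.\ that for any fixed base point $q_0\in\text{Rec }\mathcal{N}$ the orbit map $g\mapsto g\,q_0$ is a bijection $eM\to\text{Rec }\mathcal{N}$; the equality of cardinalities then follows at once. Throughout I use that $M$ is a finite commutative monoid — finiteness because $\mathcal{Q}$ and $\mathcal{A}$ are finite and $\mathcal{N}$ halts, commutativity because the processors are abelian — and that $eM$ is a group with identity $e$ (the minimal ideal of a finite commutative monoid; see \cite{bond2016abelian2}). I also record that $\text{Rec }\mathcal{N}=e\mathcal{Q}$: indeed $eq=q$ forces $q\in e\mathcal{Q}$, and conversely $q=ep$ gives $eq=e^2p=ep=q$. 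In particular $e$ acts as the identity on $\text{Rec }\mathcal{N}$, and since every $g\in eM$ has an inverse $g^{-1}\in eM$ with $gg^{-1}=e$, each $g$ sends recurrent states to recurrent states and restricts to a permutation of $\text{Rec }\mathcal{N}$, so the action is well defined and by bijections.

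The first key step is a faithfulness lemma: every $g\in eM$ is determined by its restriction to $\text{Rec }\mathcal{N}$. The point is the identity $g(eq)=g(q)$ for all $q\in\mathcal{Q}$. Writing $g=em$ and using commutativity together with $e^2=e$, one computes $m(eq)=(me)(q)=(em)(q)=e(mq)$, whence $g(eq)=e\big(m(eq)\big)=e\big(e(mq)\big)=e(mq)=g(q)$. Since $eq\in\text{Rec }\mathcal{N}$, this shows $g$ is completely determined on all of $\mathcal{Q}$ by its values on $\text{Rec }\mathcal{N}$. Because $eM\subset\text{End}(\mathcal{Q})$ consists of genuinely distinct maps, two elements of $eM$ agreeing on $\text{Rec }\mathcal{N}$ must coincide; equivalently, the homomorphism $eM\to\text{Sym}(\text{Rec }\mathcal{N})$ is injective.

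With faithfulness in hand, transitivity and freeness finish the argument. For transitivity, irreducibility of $\mathcal{N}$ guarantees that the recurrent states form a single communicating class for the $M$-action, so for any $q'\in\text{Rec }\mathcal{N}$ there is $m\in M$ with $m\,q_0=q'$; then $(em)\,q_0=e(m\,q_0)=e\,q'=q'$ with $em\in eM$, so the orbit map is surjective. For freeness, suppose $f\in eM$ fixes $q_0$. Using transitivity, write an arbitrary recurrent state as $q'=(em')\,q_0$; commutativity of $eM$ then gives $f\,q'=f(em')\,q_0=(em')\,f\,q_0=(em')\,q_0=q'$, so $f$ fixes every recurrent state. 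By the faithfulness lemma $f=e$, the identity of $eM$. Hence the orbit map is injective, and therefore bijective, yielding $\texttt{\#}\text{Crit }\mathcal{N}=\texttt{\#}\text{Rec }\mathcal{N}$.

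I expect the transitivity step to be the main obstacle: it is the only place where the hypothesis of irreducibility (as opposed to mere finiteness and halting) enters, and it requires knowing that the recurrent states $e\mathcal{Q}$ constitute a single orbit under the monoid action — a structural fact about finite irreducible abelian networks that I would import from \cite{bond2016abelian2,bond2016abelian3} rather than reprove. The faithfulness lemma, by contrast, is an elementary consequence of commutativity and idempotency, and once it is available freeness is essentially automatic.
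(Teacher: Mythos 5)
The paper does not actually prove this proposition: it is stated in the Appendix as part of the review of Abelian network theory and is imported from Bond--Levine \cite{bond2016abelian3} (it is essentially their Lemma 3.9/Theorem 3.10 on the free transitive action of the critical group on recurrent states). Your argument is therefore compared against the source rather than against anything in this paper, and it is a correct reconstruction of the standard proof: the identification $\text{Rec }\mathcal{N}=e\mathcal{Q}$, the faithfulness computation $g(eq)=g(q)$ from commutativity and idempotency, and the deduction of freeness from faithfulness plus transitivity are all sound, and the conclusion that the orbit map $g\mapsto gq_0$ is a bijection $eM\to\text{Rec }\mathcal{N}$ gives exactly the stated equality of cardinalities. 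The one point worth flagging is the transitivity step you yourself identify as the obstacle: the paper only defines \emph{local} irreducibility, and the hypothesis ``irreducible'' in the proposition must be read in the Bond--Levine sense (a unique accessibility class for the global monoid action, equivalently local irreducibility together with irreducibility of the production matrix). Granting that definition, transitivity of $eM$ on $e\mathcal{Q}$ is not hard to derive rather than import: irreducibility gives $m,m'\in M$ with $mq_0=m'q'$, hence $emq_0=em'q'$, and multiplying by the inverse of $em'$ in the group $eM$ yields an element of $eM$ sending $q_0$ to $eq'=q'$. With that half-line added your proof is self-contained modulo the standard facts about minimal idempotents of finite commutative monoids, which is more than the paper offers.
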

We can similarly define $e_v$ as the minimal idempotent of the local monoid $M_v$ and say that a state $q\in \mathcal{Q}$ is \textit{locally recurrent} if $q_v \in e_v\mathcal{Q}_v$ for all $v \in V$. \\
We now define the \textit{total kernel} and the \textit{production matrix}, which will be used to describe the critical group.
\begin{definition}\label{TotalKernel}
    The total kernel is defined as:
$$K = \prod_{v \in V} K_v \subset \mathbb{Z}^{\mathcal{A}},$$

where $K_v$ is the kernel of the group action $\mathbb{Z}^{\mathcal{A}_v} \times e_v \mathcal{Q}_v \to e_v \mathcal{Q}_v.$
\end{definition}
\begin{definition}\label{d.prodmatrix}
Given a state $q \in \mathcal{Q}$, the \textit{production map} $P_q$ is the linear map $P_q: K \cap \mathbb{N}^{\mathcal{A}} \to \mathbb{N}^{\mathcal{A}}$ defined by $k\triangleright q = P_q(k).q$. This map extends to a group homomorphism $K \to \mathbb{Z}^{\mathcal{A}}$, which does not depend on $q$, if the Abelian network is locally irreducible.
The \textit{production matrix} $P$ is the matrix of this group homomorphism.
\end{definition}
Intuitively, the production matrix of a finite AN $\mathcal{N}$ is a matrix whose entry $P_{ab}$ represents the average number of letters $a$ produced when processing the letter $b$.
\begin{theorem}\label{t.critical}[\cite{bond2016abelian3},Theorem 3.11]
The critical group then has the algebraic representation:
\begin{equation}\label{critical_group_2}
    \text{Crit }\mathcal{N} \simeq \mathbb{Z}^\mathcal{A} / (I-P)K
\end{equation}
\end{theorem}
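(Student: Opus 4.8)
The plan is to realize $\text{Crit }\mathcal{N}=eM$ as an explicit quotient of $\mathbb{Z}^{\mathcal{A}}$ by building a surjective group homomorphism $\phi:\mathbb{Z}^{\mathcal{A}}\to eM$ and then identifying its kernel with $(I-P)K$. \emph{Step 1 (the homomorphism).} Since $\mathcal{N}$ is finite and halts on all inputs, $M=\langle\tau_a\rangle_{a\in\mathcal{A}}$ is a finite commutative monoid; writing $\tau_x:=\prod_{a\in\mathcal{A}}\tau_a^{x_a}$ for $x\in\mathbb{N}^{\mathcal{A}}$ (well defined as the $\tau_a$ commute), the composite $\mathbb{N}^{\mathcal{A}}\to M\to eM$, $x\mapsto e\tau_x$, is a monoid homomorphism into the finite abelian group $eM$ with identity $e$, and hence extends uniquely to a group homomorphism $\phi:\mathbb{Z}^{\mathcal{A}}\to eM$. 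It is surjective because every element of $eM$ has the form $e\tau_x=\phi(x)$ for some $x\in\mathbb{N}^{\mathcal{A}}$, so $\phi$ induces a surjection $\mathbb{Z}^{\mathcal{A}}/\ker\phi\twoheadrightarrow\text{Crit }\mathcal{N}$ and it remains to show $\ker\phi=(I-P)K$.

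\emph{Step 2 (easy inclusion).} Let $k\in K\cap\mathbb{N}^{\mathcal{A}}$. For any recurrent $q$ the definition of the production matrix gives $k\triangleright q=(Pk).q$, and Lemma 2.3 of \cite{bond2016abelian3} ($x\triangleright q=y.r\Rightarrow x\triangleright\triangleright q=y\triangleright\triangleright r$) upgrades this to $\tau_k(q)=k\triangleright\triangleright q=(Pk)\triangleright\triangleright q=\tau_{Pk}(q)$. Since $\text{Rec }\mathcal{N}$ is closed under $M$ and $eM$ acts freely, hence faithfully, on $\text{Rec }\mathcal{N}$, the elements $e\tau_k$ and $e\tau_{Pk}$ coincide, i.e. $\phi((I-P)k)=0$. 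Extending to all of $K$ by linearity yields $(I-P)K\subseteq\ker\phi$, so $\phi$ descends to a surjection $\bar\phi:\mathbb{Z}^{\mathcal{A}}/(I-P)K\twoheadrightarrow\text{Crit }\mathcal{N}$.

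\emph{Step 3 (hard inclusion).} Let $x\in\ker\phi$. Because $(I-P)K\subseteq\ker\phi$ and, as $\rho(P)<1$ for a halting network, $(I-P)^{-1}=\sum_{j\ge0}P^{j}\ge0$, we may add a suitable element of $(I-P)K$ to reduce to the case $x\in\mathbb{N}^{\mathcal{A}}$. Fix a recurrent $q$; since $e\tau_x=e$ acts as the identity on $\text{Rec }\mathcal{N}$ we have $x\triangleright\triangleright q=q$, so a complete legal execution of $x.q$ returns to $q$ with no letters left. Let $n=[x.q]\in\mathbb{N}^{\mathcal{A}}$ be the odometer. Each processor $\mathcal{P}_v$ processes $n|_{\mathcal{A}_v}$ letters and returns to its recurrent internal state $q_v\in e_v\mathcal{Q}_v$; by the definition of $K_v$ as the kernel of the action on $e_v\mathcal{Q}_v$, this forces $n\in K$. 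Finally, counting letters: every processed letter is either one of the $x$ initial letters or a message produced during the run, and none survive, so $n=x+(\text{total produced})$; and since $n\in K$ with $q$ recurrent, the total production equals $Pn$. Hence $n=x+Pn$, i.e. $x=(I-P)n\in(I-P)K$. Thus $\ker\phi\subseteq(I-P)K$, so $\bar\phi$ is an isomorphism and (\ref{critical_group_2}) follows.

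The main obstacle is Step 3, and within it the two bookkeeping facts that drive the argument: that the odometer $n$ lands in the total kernel $K$ (which uses halting, so the odometer is finite, together with the definition of each $K_v$ through the local recurrent states $e_v\mathcal{Q}_v$), and that the total number of messages produced over such a closed execution equals $Pn$ (which uses abelianness of each processor to make the per-vertex message count depend only on $n|_{\mathcal{A}_v}$, via (\ref{e.abelianness}), and the local-to-global Lemma 2.3 to pass from $n\triangleright q=(Pn).q$ to the full execution). The reduction to $x\ge0$ at the start of Step 3 likewise deserves care and rests on $(I-P)^{-1}\ge0$, which holds precisely because the network halts.
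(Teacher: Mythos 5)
The paper does not prove this statement: Theorem \ref{t.critical} is quoted directly from \cite{bond2016abelian3} (their Theorem 3.11) and used as an imported tool, so there is no in-paper proof to compare yours against. Judged on its own, your argument is a sound reconstruction of the standard one: the surjection $\phi:\mathbb{Z}^{\mathcal{A}}\twoheadrightarrow eM$, the inclusion $(I-P)K\subseteq\ker\phi$ via the production map together with the local-to-global lemma, and the reverse inclusion via the odometer of a closed complete execution. Two steps deserve tightening. First, the reduction to $x\ge 0$ does not follow from $(I-P)^{-1}\ge 0$ alone (that gives $u\ge (I-P)^{-1}v$ whenever $(I-P)u\ge v$, not the converse); what you actually need is that $(I-P)K$ contains arbitrarily large positive vectors, which holds because $I-P$ is invertible over $\mathbb{Q}$ and $K$ has finite index in $\mathbb{Z}^{\mathcal{A}}$, so $(I-P)K$ is a finite-index subgroup and contains $m\mathbb{Z}^{\mathcal{A}}$ for some $m$. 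Second, in deducing $n\in K$ you pass from ``$n|_{\mathcal{A}_v}$ fixes $q_v$'' to ``$n|_{\mathcal{A}_v}\in K_v$''; the kernel of a group action is in general strictly smaller than the stabilizer of one point, and the step requires that the quotient group $\mathbb{Z}^{\mathcal{A}_v}/K_v$ act freely on $e_v\mathcal{Q}_v$. That freeness is a consequence of local irreducibility of a finite abelian processor (the recurrent local states form a torsor for the local group), not of the definition of $K_v$ itself, and should be invoked explicitly. With those two repairs the proof is complete and matches the structure of the argument in \cite{bond2016abelian3}.
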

Now define the reduced Laplacian matrix which will be useful to count the number of elements in $ \text{Crit }\mathcal{N}$ or equivalently the number of recurrent states.
\begin{definition}\label{reduced Laplacian}
    The reduced Laplacian matrix of a finite irreducible Abelian networks is defined as 
    $$L = (I-P)D$$
    where $D$ is the diagonal matrix with diagonal elements $d_a = \min_{n}\{n1_a \in K\}$.
\end{definition}

\begin{theorem}\label{t.number_recurrent_states}[\cite{bond2016abelian3},Theorem 3.18]
    \begin{equation*}
        \#\text{Crit } \mathcal{N} = \#\text{Rec } \mathcal{N} = \frac{det(L)}{\iota}
    \end{equation*}
    where $\iota$ is the index of $D\mathbb{Z}^{\mathcal{A}}$ as a subgroup of $K$.
\end{theorem}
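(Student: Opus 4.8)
The plan is to compute $\texttt{\#}\text{Crit }\mathcal{N}$ as a lattice index and read off the formula through the multiplicativity of indices along a chain of full-rank sublattices of $\mathbb{Z}^{\mathcal{A}}$. Since the bijection between $\text{Crit }\mathcal{N}$ and $\text{Rec }\mathcal{N}$ (the Proposition preceding Definition \ref{TotalKernel}) already gives $\texttt{\#}\text{Crit }\mathcal{N}=\texttt{\#}\text{Rec }\mathcal{N}$, it suffices to treat the critical group. By Theorem \ref{t.critical} one has $\text{Crit }\mathcal{N}\simeq \mathbb{Z}^{\mathcal{A}}/(I-P)K$, hence $\texttt{\#}\text{Crit }\mathcal{N}=[\mathbb{Z}^{\mathcal{A}}:(I-P)K]$, and the whole problem reduces to comparing the index of $(I-P)K$ with that of $L\mathbb{Z}^{\mathcal{A}}$.

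First I would collect the three structural facts that make the chain well defined. Since $\mathcal{N}$ halts on all inputs, the spectral radius of $P$ is strictly smaller than $1$ (the equivalence recalled in Lemma \ref{l.halting}, originally Theorem 5.6 of \cite{bond2016abelian2}), so $I-P$ is invertible; moreover, writing $\det(I-P)=\prod_i(1-\lambda_i)$ over the eigenvalues $\lambda_i$ of $P$ and pairing complex conjugates (each real $\lambda_i$ satisfies $1-\lambda_i>0$, each conjugate pair contributes $|1-\lambda|^2>0$) shows $\det(I-P)>0$. Next, $D\mathbb{Z}^{\mathcal{A}}\subseteq K$, because each generator $d_a 1_a$ lies in $K$ by the very definition $d_a=\min_n\{n1_a\in K\}$ (Definition \ref{reduced Laplacian}). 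Finally, $(I-P)K\subseteq\mathbb{Z}^{\mathcal{A}}$: for $k\in K\cap\mathbb{N}^{\mathcal{A}}$ the production map sends $k$ to the integer vector $Pk$ counting the letters produced (Definition \ref{d.prodmatrix}), so $(I-P)k=k-Pk\in\mathbb{Z}^{\mathcal{A}}$, and this extends to all of $K$ by linearity. Together with $(I-P)D=L$ these yield the chain
$$L\mathbb{Z}^{\mathcal{A}}=(I-P)D\mathbb{Z}^{\mathcal{A}}\ \subseteq\ (I-P)K\ \subseteq\ \mathbb{Z}^{\mathcal{A}}.$$

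The computation then proceeds by the tower law for indices along this chain:
$$[\mathbb{Z}^{\mathcal{A}}:L\mathbb{Z}^{\mathcal{A}}]=[\mathbb{Z}^{\mathcal{A}}:(I-P)K]\cdot[(I-P)K:L\mathbb{Z}^{\mathcal{A}}].$$
The left-hand side equals $\det L$ by the Smith normal form formula for the index of the image of a full-rank integer matrix (here $\det L=\det(I-P)\det D>0$, both factors being positive). The first factor on the right is exactly $\texttt{\#}\text{Crit }\mathcal{N}$. For the second factor I would use that $I-P$ is an injective $\mathbb{Q}$-linear map, hence restricts to a group isomorphism $K\cong (I-P)K$ carrying $D\mathbb{Z}^{\mathcal{A}}$ onto $(I-P)D\mathbb{Z}^{\mathcal{A}}=L\mathbb{Z}^{\mathcal{A}}$; this gives $[(I-P)K:L\mathbb{Z}^{\mathcal{A}}]=[K:D\mathbb{Z}^{\mathcal{A}}]=\iota$. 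Substituting yields $\det L=\texttt{\#}\text{Crit }\mathcal{N}\cdot\iota$, i.e. $\texttt{\#}\text{Crit }\mathcal{N}=\det(L)/\iota$, and the bijection with $\text{Rec }\mathcal{N}$ completes the statement.

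The main obstacle is the integrality and well-definedness underlying the chain, rather than the final index arithmetic. Concretely, one must ensure that the production map is genuinely well defined on $K$ and independent of the reference state $q$ (which requires local irreducibility, as in Definition \ref{d.prodmatrix}) and that it takes integer values on $K$, so that $(I-P)K$ is an honest sublattice of $\mathbb{Z}^{\mathcal{A}}$ and not merely a subgroup of $\mathbb{Q}^{\mathcal{A}}$; both hinge on interpreting $Pk$ as an actual count of produced letters. Once this is secured, the positivity of $\det L$ and the identity $[(I-P)K:L\mathbb{Z}^{\mathcal{A}}]=[K:D\mathbb{Z}^{\mathcal{A}}]$ follow routinely from the invertibility of $I-P$.
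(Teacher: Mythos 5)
Your proof is correct, and it is essentially the standard argument for this result: the paper itself gives no proof (the theorem is quoted from Theorem 3.18 of the cited Bond--Levine reference), and the argument there is exactly your index chain $L\mathbb{Z}^{\mathcal{A}}=(I-P)D\mathbb{Z}^{\mathcal{A}}\subseteq (I-P)K\subseteq \mathbb{Z}^{\mathcal{A}}$ combined with $\text{Crit }\mathcal{N}\simeq \mathbb{Z}^{\mathcal{A}}/(I-P)K$ and the tower law. Your attention to the integrality of $(I-P)K$ (hence of $L$) and to the positivity of $\det(I-P)$ covers the only points where care is needed.
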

In the following, we state that these  recurrent states, are exactly the recurrent states of a well-chosen Markov chain, and give an algorithm to find them.

\subsection{The Sandpile Type Markov chain for Abelian networks.}

We follow \cite{bond2016abelian3,meester2003abelian} by considering a Markov chain $(q_n)_{n\geq 1}$, $q_n \in \mathcal{Q}$,
 of
 transition mechanism 
$$
        q_{n+1} = 1_{a_n} \triangleright \triangleright q_n = \tau_{a_n}(q_n),
        $$
where the sequence $(a_n)_{n\ge 1}$ is i.i.d. from a distribution $\mu$ on $\mathcal{A}$.
Intuitively, a letter $a_n$ is chosen at random in $\mathcal{A}$ and the unique processor $\mathcal{P}_v$ with $a_n \in \mathcal{A}_v$ processes this letter to arrive at the new state
$1_{a_n} \triangleright \triangleright q_n$.

\begin{proposition}\label{p.accessibility}
A state $q$ is recurrent for the Markov chain (\ref{next_state_markov}) if and only if $q$ is accessible from $mq$ for all $m \in M_{\mu}$.
\end{proposition}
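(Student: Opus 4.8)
The plan is to reduce the statement to the standard characterization of recurrence for a finite Markov chain: a state $q$ is recurrent if and only if its communicating class is closed, i.e.\ every state accessible from $q$ can in turn access $q$. The entire argument then rests on identifying the set of states accessible from $q$ under the dynamics (\ref{next_state_markov}) with the monoid orbit $\{m q : m \in M_\mu\}$.

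First I would establish this orbit description. A single step of the chain sends $q$ to $\tau_a(q)$ with probability $\mu(a)$, so a state is reachable from $q$ in exactly $k$ steps precisely when it equals $\tau_{a_1}\circ\cdots\circ\tau_{a_k}(q)$ for letters $a_1,\dots,a_k$ with $\mu(a_i)>0$. Since each generator $\tau_a$ with $\mu(a)>0$ is applied with positive probability, every finite composition of these generators is realized with positive probability along the chain, and conversely every positive-probability path is such a composition. Taking the union over $k\ge 0$ (the case $k=0$ recovering $q$ itself via the identity of the monoid), the set of states accessible from $q$ is exactly $\{m q : m \in M_\mu\}$, where $M_\mu=\langle \tau_a : \mu(a)>0\rangle$. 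The abelian structure is not strictly needed for this identification, but it guarantees that the orbit is cleanly described by the commutative monoid $M_\mu$ rather than by unordered words.

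With this identification the two implications are immediate. For the forward direction, assume $q$ is recurrent and fix $m\in M_\mu$; then $mq$ is accessible from $q$, and since a recurrent state must be reachable from any state it can reach, $q$ is accessible from $mq$. For the converse, suppose $q$ is accessible from $mq$ for every $m\in M_\mu$. Because the states accessible from $q$ are precisely the $mq$, this hypothesis says that every state accessible from $q$ can access $q$; hence the communicating class of $q$ is closed, and in a finite state space a closed communicating class is recurrent, so $q$ is recurrent.

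The argument is short, and the only point demanding genuine care is the appeal to finiteness: the implication ``closed communicating class $\Rightarrow$ recurrent'' can fail for infinite chains, so I would make explicit that the state space $\mathcal{Q}=\prod_{v\in V}\mathcal{Q}_v$ is finite, as it is for the finite Abelian networks under consideration here. The remaining subtlety I expect is bookkeeping around the identity element of $M_\mu$, namely ensuring that ``accessible from $q$'' includes $q$ itself and that the orbit $\{mq : m\in M_\mu\}$ coincides exactly, with no spurious or missing states, with the Markov-chain-accessible set.
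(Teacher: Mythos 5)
Your argument is correct. The paper itself states Proposition \ref{p.accessibility} without proof (it is imported from the Abelian networks literature, essentially Lemma~3.4 of Bond--Levine), so there is no in-paper proof to diverge from; your route --- identifying the set of states accessible from $q$ under (\ref{next_state_markov}) with the orbit $\{mq : m\in M_\mu\}$ and then invoking the standard characterization ``recurrent $\Leftrightarrow$ every accessible state can access back'' --- is exactly the intended one. The two points you flag are indeed the only ones that need care, and you handle both: the identity of $M_\mu$ accounts for $q$ being accessible from itself, and the converse direction genuinely requires finiteness of $\mathcal{Q}$ (the forward direction holds for any countable chain, since an accessible state with no return path forces the return probability below $1$). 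The finiteness hypothesis is satisfied in every setting where the paper applies the proposition (generalized toppling networks and the batch construction of \S\ref{s.multiple_species}, where $\mathcal{Q}=\prod_v \mathcal{Q}_v$ is finite), so your caveat is consistent with the paper rather than a restriction of its claim; it would be worth stating explicitly that the proposition, as used, presupposes a finite Abelian network.
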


    Following the theory of Abelian sandpile \cite{meester2003abelian} and Abelian networks \cite{bond2016abelian1,bond2016abelian2,bond2016abelian3}, we derive some nice properties, in particular that the stationary distribution of this Markov chain is uniform  on the set of recurrent states.
    
    \begin{definition}\label{d.adequate-support}
       A distribution $\mu$ on $\mathcal{A}$ is said to have adequate support if $e_\mu M_\mu = eM$ .
    \end{definition}
The following lemma gives a one-to-one correspondence between recurrent states of the sandpile Markov chain and Rec $\mathcal{N}$.
\begin{lemma} (\cite{bond2016abelian3}, Lemma 3.5) \label{l.algebraically_recurrent}
    If $\mu$ has adequate support, the set of recurrent states  Rec $\mathcal{N}$  coincide with the set of recurrent states for the sandpile Markov chain (\ref{next_state_markov}).
\end{lemma}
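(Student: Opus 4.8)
The plan is to derive the lemma from the accessibility characterization of probabilistic recurrence in Proposition~\ref{p.accessibility}, combined with the group structure of the critical group and the adequate-support hypothesis. The first preliminary fact I would record is that, because the network is abelian, the maps $\tau_a$ commute (this is the foundational abelian property of \cite{bond2016abelian1,bond2016abelian2}), so the global monoid $M$ is commutative and its minimal ideal is a finite abelian group; the standard theory identifies this group with $eM = \text{Crit }\mathcal{N}$, with identity the minimal idempotent $e$. The same statements apply to the submonoid $M_\mu$ with minimal idempotent $e_\mu$ and group $e_\mu M_\mu$. Adequate support, $e_\mu M_\mu = eM$, then forces $e_\mu = e$: two groups that are equal as subsets of $\text{End}(\mathcal{Q})$ have the same (unique) identity. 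This identification is what lets me pass freely between the combinatorics of $M_\mu$, which governs the Markov chain, and the algebra of $eM$, which governs algebraic recurrence.

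For the inclusion ``algebraically recurrent $\Rightarrow$ recurrent'', I would start from $eq = q$ and verify the accessibility condition of Proposition~\ref{p.accessibility}. Fix $m \in M_\mu$. Since $e_\mu M_\mu$ is a group, the element $e_\mu m$ is invertible, and I may write its inverse as $e_\mu m'$ for some $m' \in M_\mu$; by commutativity and idempotency this gives $e_\mu m' m = e_\mu$. Applying this identity to $q$ and using $e_\mu q = eq = q$ yields $(e_\mu m')(mq) = q$, so $q$ is reached from $mq$ by the action of $e_\mu m' \in M_\mu$, i.e. by a composition of maps $\tau_a$ with $\mu(a) > 0$. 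Hence $q$ is accessible from $mq$ for every $m \in M_\mu$, and Proposition~\ref{p.accessibility} delivers recurrence.

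For the reverse inclusion I would specialize the accessibility condition. If $q$ is recurrent, then by Proposition~\ref{p.accessibility} it is accessible from $mq$ for all $m \in M_\mu$; choosing $m = e_\mu \in M_\mu$ produces some $n \in M_\mu$ with $n e_\mu q = q$. Applying $e_\mu$ to both sides and collapsing $e_\mu n e_\mu = n e_\mu$ by idempotency and commutativity gives $e_\mu q = n e_\mu q = q$; since $e_\mu = e$ this reads $eq = q$, so $q$ is algebraically recurrent.

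The step I expect to be the main obstacle is the clean identification $e_\mu = e$ and, more broadly, ensuring that the purely algebraic manipulations carried out inside the group $eM$ translate into genuine Markov-chain accessibility (a positive-probability path) rather than mere equality of monoid elements. This rests squarely on the commutativity of $M$ and on the fact that every element of $M_\mu$ is a finite product of generators $\tau_a$ with $\mu(a)>0$; I would invoke both explicitly, since without commutativity neither the inverse construction nor the specialization $m = e_\mu$ would be available.
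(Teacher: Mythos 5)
Your proof is correct. Note that the paper itself gives no proof of this lemma: it is imported verbatim as Lemma 3.5 of \cite{bond2016abelian3}, and your argument --- identifying $e_\mu = e$ from adequate support, using invertibility in the group $e_\mu M_\mu$ to verify the accessibility criterion of Proposition~\ref{p.accessibility} in one direction, and specializing $m = e_\mu$ in the other --- is essentially the standard one from that reference, so there is nothing to reconcile.
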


The stationary distribution of this Markov chain is analogous to the one of the Abelian sandpile model \cite{meester2003abelian}.
\begin{theorem}\label{t.StationarySand}(\cite{bond2016abelian3}, Theorem 3.6)
    If $\mu$ has adequate support, the uniform distribution on $\text{Rec }\mathcal{N}$ is the unique stationary distribution of the sandpile Markov chain.
\end{theorem}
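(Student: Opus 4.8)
The plan is to reduce the statement to the elementary fact that an irreducible, doubly stochastic transition matrix on a finite set admits the uniform measure as its unique stationary distribution. First I would invoke Lemma~\ref{l.algebraically_recurrent}: under the adequate-support hypothesis the algebraically recurrent states coincide with the stochastically recurrent states of the chain~(\ref{next_state_markov}), so it suffices to analyze the restriction of the chain to the finite set $\text{Rec }\mathcal{N}$.

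The key structural step is to show that, for every $a$ with $\mu(a)>0$, the map $\tau_a$ restricts to a \emph{bijection} of $\text{Rec }\mathcal{N}$. For $q\in\text{Rec }\mathcal{N}$ we have $eq=q$, and since $M$ is commutative by the Abelian property, $e(\tau_a q)=(e\tau_a)q=(\tau_a e)q=\tau_a(eq)=\tau_a q$; hence $\tau_a q$ is again recurrent, and on $\text{Rec }\mathcal{N}$ the map $\tau_a$ agrees with $e\tau_a\in eM=\text{Crit }\mathcal{N}$. Since the critical group acts simply transitively on $\text{Rec }\mathcal{N}$ (the bijectivity proposition recalled above), each $e\tau_a$ acts as a bijection, and therefore so does $\tau_a|_{\text{Rec }\mathcal{N}}$.

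Granting bijectivity, the restricted transition kernel $K(q,q')=\sum_{a}\mu(a)\,\mathbf{1}[\tau_a q=q']$ is doubly stochastic on $\text{Rec }\mathcal{N}$: each row sums to $\sum_a\mu(a)=1$, while each column sums to $\sum_a\mu(a)\sum_{q}\mathbf{1}[\tau_a q=q']=\sum_a\mu(a)=1$, because every $\tau_a$ is a bijection and hence $q'$ has exactly one preimage. A doubly stochastic matrix on a finite set has the uniform distribution as a stationary measure, which gives existence.

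For uniqueness I would establish irreducibility on $\text{Rec }\mathcal{N}$: by Proposition~\ref{p.accessibility} together with the adequate-support condition $e_\mu M_\mu = eM$, every recurrent state is accessible from every other one, so $\text{Rec }\mathcal{N}$ is a single recurrent communicating class. An irreducible finite Markov chain has a unique stationary distribution, which must then be the uniform one. I expect the bijectivity step to be the main obstacle: it is precisely the passage from the a~priori merely endomorphic action of $\tau_a$ to the invertible action of $e\tau_a$ in the critical group, and it is here that the adequate-support hypothesis and the group structure of $\text{Crit }\mathcal{N}$ are indispensable.
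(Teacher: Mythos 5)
Your argument is correct, but note that the paper does not prove this theorem at all: it is imported verbatim from \cite{bond2016abelian3} (Theorem 3.6), and your reconstruction follows essentially the same route as that source — restrict to $\text{Rec }\mathcal{N}=e\mathcal{Q}$ via Lemma~\ref{l.algebraically_recurrent}, observe that each $\tau_a$ acts there as the invertible element $e\tau_a$ of the critical group so the restricted kernel is doubly stochastic, and get uniqueness from transitivity of the $e_\mu M_\mu=eM$ action. The only cosmetic imprecision is at the end: adequate support is needed for Lemma~\ref{l.algebraically_recurrent} and for irreducibility on $\text{Rec }\mathcal{N}$, whereas the bijectivity of $\tau_a$ on $e\mathcal{Q}$ holds for every $a\in\mathcal{A}$ regardless of $\mu$.
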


To find concretely which static states are recurrent   
Levine \textit{et al.} \cite{bond2016abelian3} generalized Dhar's burning algorithm to any finite irreducible Abelian network $\mathcal{N}$ that halts on all inputs.
\begin{definition}\label{d.burning_element}[\cite{bond2016abelian3}, Definition 5.1]
    A \textit{burning element} of an Abelian network $\mathcal{N}$ is a vector $\beta \in \mathbb{N}^{A}$ such that for $q \in \mathcal{Q}$ : 
    $$q \in \text{Rec } \mathcal{N} \Leftrightarrow \beta \triangleright \triangleright q = q$$
\end{definition}
Such a burning element always exists for a 
for a finite irreducible Abelian network $\mathcal{N}$ that halts on all inputs. The following procedure allows to find one.
\begin{lemma}\label{l.procedure}[\cite{bond2016abelian3}, Procedure 5.8]
    To find a burning element, we start with $y=1$. If $Ly\geq 0$, then we stop. Otherwise we choose some $a \in \mathcal{A}$ such that $(Ly)_a < 0$ and increase $y_a$ by $1$. Repeat until $Ly \geq 0$.
\end{lemma}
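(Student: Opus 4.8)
The plan is to prove the two assertions implicit in the statement: that the greedy procedure \emph{terminates}, and that on termination the vector $y$ yields a \emph{burning element} $\beta:=Ly$ in the sense of Definition~\ref{d.burning_element}. Since $\mathcal{N}$ is a finite irreducible abelian network that halts on all inputs, its production matrix satisfies $\rho(P)<1$ (Theorem~5.6 of \cite{bond2016abelian2}; see also Lemma~\ref{l.halting}). Hence $I-P$ is a nonsingular $M$-matrix with $(I-P)^{-1}=\sum_{k\ge0}P^{k}\ge0$, and because $D$ is a positive diagonal matrix the reduced Laplacian $L=(I-P)D$ (Definition~\ref{reduced Laplacian}) has nonpositive off-diagonal entries $L_{ab}=-P_{ab}d_{b}$ $(a\ne b)$, positive diagonal entries $L_{aa}=(1-P_{aa})d_{a}$, and inverse $L^{-1}=D^{-1}(I-P)^{-1}\ge0$.

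For termination I would argue by domination, exploiting the sign pattern of the $M$-matrix $L$. First fix an integer vector $\bar{y}\ge\mathbf{1}$ with $L\bar{y}\ge\mathbf{1}$; such a $\bar{y}$ exists because $L^{-1}\ge0$, e.g.\ $\bar{y}=mL^{-1}\mathbf{1}$ for an integer $m$ large enough to clear denominators and to dominate $\mathbf{1}$, which gives $L\bar{y}=m\mathbf{1}\ge\mathbf{1}$. I then claim the invariant $y\le\bar{y}$ is preserved throughout the run. It holds at the start since $y=\mathbf{1}\le\bar{y}$, and if $y\le\bar{y}$ while the procedure increments $y_{a}$ because $(Ly)_{a}<0$, then using $(L\bar{y})_{a}\ge0$, $L_{ab}\le0$ for $b\ne a$ and $\bar{y}_{b}\ge y_{b}$,
$$0<(L\bar{y})_{a}-(Ly)_{a}=L_{aa}(\bar{y}_{a}-y_{a})+\sum_{b\ne a}L_{ab}(\bar{y}_{b}-y_{b})\le L_{aa}(\bar{y}_{a}-y_{a}),$$
so $\bar{y}_{a}>y_{a}$ and hence $y_{a}+1\le\bar{y}_{a}$: the increment does not break $y\le\bar{y}$. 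As each step raises $\sum_{a}y_{a}$ by exactly one while $y$ stays bounded above by $\bar{y}$, the procedure halts after at most $\sum_{a}(\bar{y}_{a}-1)$ steps.

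It remains to check that the terminal $y$, which satisfies $Ly\ge0$, gives a burning element. Put $k:=Dy$. Because $y\ge\mathbf{1}$ and $d_{a}\ge1$ we have $k\ge\mathbf{1}$, and since each $d_{a}1_{a}\in K$ forces $D\mathbb{Z}^{\mathcal{A}}\subseteq K$ we have $k\in K$; moreover $\beta=Ly=(I-P)Dy=(I-P)k\in\mathbb{N}^{\mathcal{A}}$. These are precisely the conditions making $(I-P)k$ a burning element in a finite irreducible halting abelian network \cite{bond2016abelian3}: for recurrent $q$, the complete processing of $\beta$ has odometer exactly $k$, since if $u$ denotes the odometer then completeness forces $(I-P)u=\beta=(I-P)k$, hence $u=k$ as $I-P$ is invertible; and because $k\in K$ is in the kernel of the group action on recurrent internal states, the composite update $\prod_{a\in\mathcal{A}_{v}}(t^{v}_{a})^{k_{a}}$ fixes each $q_{v}$, so $\beta\triangleright\triangleright q=q$. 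The requirement $k\ge\mathbf{1}$ ensures every processor is activated, which is what lets the same bookkeeping break the identity on non-recurrent states and deliver the equivalence $q\in\text{Rec }\mathcal{N}\Leftrightarrow\beta\triangleright\triangleright q=q$.

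I expect the termination step to be the main obstacle: one must produce the explicit integer dominator $\bar{y}$ and use the nonpositivity of the off-diagonal entries of $L$ to guarantee that the greedy increments never overshoot it, so that the monotone, strictly increasing quantity $\sum_{a}y_{a}$ is bounded. Once that is in place, correctness is essentially the algebraic identity $\beta=(I-P)k$ with $k\in K$, $k\ge\mathbf{1}$ and $\beta\ge0$, combined with the odometer-and-period bookkeeping already supplied by the abelian-network framework of \cite{bond2016abelian3}.
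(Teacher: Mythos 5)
The paper does not actually prove this lemma: it is imported verbatim as Procedure 5.8 of \cite{bond2016abelian3}, so there is no in-paper argument to compare against. Judged on its own, your proof is correct and is, in substance, the standard argument from the abelian-network literature. The termination step is sound: since the network is finite, irreducible and halting, $\rho(P)<1$, so $L=(I-P)D$ is a nonsingular $M$-matrix with $L^{-1}=D^{-1}\sum_{k\ge0}P^{k}\ge 0$; your integer dominator $\bar y$ exists because $L$ has integer entries (so $L^{-1}$ is rational and $(I-P)^{-1}\mathbf{1}\ge\mathbf{1}$ makes $L^{-1}\mathbf{1}$ strictly positive), and the invariant $y\le\bar y$ is preserved precisely because the off-diagonal entries of $L$ are nonpositive and $L_{aa}=(1-P_{aa})d_a>0$ (using $P_{aa}\le\rho(P)<1$). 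The correctness step is then immediate: the terminal $y$ gives $k=Dy\in K$ with $k\ge\mathbf{1}$ and $Pk\le k$ (the latter being exactly $Ly\ge 0$), which are the hypotheses of Theorem~\ref{t.burning_algo}, so $\beta=Ly=(I-P)k$ is a burning element in the sense of Definition~\ref{d.burning_element}.

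One remark: your closing paragraph attempts to re-derive Theorem~\ref{t.burning_algo} itself (the odometer identity $(I-P)u=\beta$ and the claim that non-recurrent states fail the test). That part is only a sketch --- in particular the ``only if'' direction is asserted rather than proved --- but it is also superfluous, since the theorem is available as a cited black box and your reduction to it is already complete. If you keep the re-derivation, it would need the same care as Theorem~5.5 of \cite{bond2016abelian3}; if you drop it, the proof stands as is.
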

Once we have a burning element, one can easiliy verify which static states are recurrent.
\begin{theorem}\label{t.burning_algo}[\cite{bond2016abelian3}, Theorem 5.5]
  Let $k \in K$ be such that $k \geq 1$ and $Pk\leq k$. Then $q \in \mathcal{Q}$ is recurrent if and only if $(I-P)k \triangleright \triangleright q = q.$
\end{theorem}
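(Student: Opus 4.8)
The plan is to translate the statement into the algebra of the global monoid $M$ and its minimal idempotent $e$, using the two facts that characterize recurrence: $q$ is recurrent iff $eq=q$, and the critical group satisfies $eM\simeq\mathbb{Z}^{\mathcal A}/(I-P)K$ (Theorem \ref{t.critical}). Writing $\beta:=(I-P)k$, the hypothesis $Pk\le k$ guarantees $\beta\in\mathbb{N}^{\mathcal A}$, so the global action $\beta\triangleright\triangleright q$ is well defined and, by abelianness of the $\tau_a$, equals $\tau^{\beta}q$ where $\tau^{\beta}:=\prod_{a\in\mathcal A}\tau_a^{\beta_a}\in M$. Thus the condition $(I-P)k\triangleright\triangleright q=q$ is exactly $\tau^{\beta}q=q$, and the theorem becomes the equivalence $eq=q\iff\tau^{\beta}q=q$.

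For the forward implication I would use that $\beta=(I-P)k\in(I-P)K$ lies in the kernel of the quotient map defining the critical group, so its image in the group $eM$ is the identity; concretely $e\tau^{\beta}=e$ in $M$. If $q$ is recurrent, then $q=eq$, and by commutativity of $M$ one gets $\tau^{\beta}q=\tau^{\beta}(eq)=(\tau^{\beta}e)q=(e\tau^{\beta})q=eq=q$. This direction is short and invokes only Theorem \ref{t.critical}.

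The converse is the substantive part. Assuming $\tau^{\beta}q=q$, hence $\tau^{n\beta}q=q$ for every $n\ge1$, finiteness of $M$ forces the sequence $(\tau^{n\beta})_{n\ge1}$ to contain an idempotent $f:=\tau^{N\beta}$, so that $fq=q$; and $e\tau^{\beta}=e$ gives $ef=e$. It therefore suffices to prove $f=e$, for then $q=fq=eq$ and $q$ is recurrent. The remaining task is to show that the idempotent power of $\tau^{\beta}$ is the \emph{minimal} idempotent, i.e.\ that iterating the addition of $\beta$ drives the image into $\mathrm{Rec}\,\mathcal N=e\mathcal Q$. Here the hypotheses enter: since $\mathcal N$ halts on all inputs, the spectral radius of $P$ is strictly smaller than $1$ (Lemma \ref{l.halting}), so $P$ has no eigenvalue $1$, whence $Pk\ne k$ and $\beta$ is a nonzero nonnegative vector. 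I would then argue, using irreducibility together with halting, that repeatedly adding $\beta$ and stabilizing sends every internal state into the recurrent class — each round is a finite stabilization by halting, and irreducibility ensures every processor is eventually triggered, directly where $\beta_a>0$ and otherwise through the messages recorded by $P$. Consequently $\mathrm{Im}(f)=\tau^{N\beta}\mathcal Q\subseteq e\mathcal Q$ for $N$ large, giving $ef=f$; combined with $ef=e$ this yields $f=e$.

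The main obstacle is precisely this last step, the identity $f=e$. The delicate point is that $\beta=(I-P)k$ may have vanishing coordinates, so one cannot simply claim that every processor receives a fresh letter at each round; the argument must propagate firings to processors with $\beta_a=0$ through the production structure of $P$ (irreducibility), and it must use $\beta\ne0$, which is exactly where halting and the bound on the spectral radius are indispensable. Once $f=e$ is secured both implications close at once, and the conclusion is equivalent to $(I-P)k$ being a burning element in the sense of Definition \ref{d.burning_element}.
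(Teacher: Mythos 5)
The paper does not prove this statement; it is imported verbatim from the reference as a known result, so your attempt can only be judged on its own terms. Your reduction of the theorem to monoid algebra is the right frame: setting $\beta=(I-P)k\in\mathbb{N}^{\mathcal A}$, the forward implication does follow once one knows that the isomorphism of Theorem \ref{t.critical} is induced by $x\mapsto e\tau^x$, so that $\beta\in(I-P)K$ forces $e\tau^{\beta}=e$ and hence $\tau^{\beta}$ fixes every $q$ with $eq=q$. (Equivalently, one can argue directly from Definition \ref{d.prodmatrix}: for locally recurrent $q$ one has $k\triangleright q=(Pk).q$, hence $\tau^{k}q=\tau^{Pk}q$, and one cancels $\tau^{Pk}$ in the group $e M$ acting on $\mathrm{Rec}\,\mathcal N$.) That half is sound.

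The converse is where the proof is not closed, and you say so yourself. Reducing it to the identity $f=e$ for the idempotent power $f=(\tau^{\beta})^{\omega}$ is correct, but the argument you offer for $f=e$ — halting gives $\beta\neq 0$, and ``irreducibility ensures every processor is eventually triggered'' — is a heuristic, not a proof, and it is aimed at the wrong hypothesis. What you would actually need is that the support of $\beta$ is \emph{adequate} in the sense of the paper's definition ($e_\mu M_\mu=eM$ for $\mu$ supported on $\mathrm{supp}(\beta)$), and nothing you write establishes this; note that $\beta$ genuinely can have vanishing coordinates (in the paper's second example $\beta=Ly=(0,1,2)^T$), so ``every processor receives a letter'' fails at the first round and must be propagated through $P$, which is exactly the step left open. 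More tellingly, your argument never uses the hypothesis $k\geq 1$, only $\beta\neq 0$. That hypothesis is the engine of the burning test: the standard proof shows that if $(I-P)k\triangleright\triangleright q=q$ then the odometer of that execution is exactly $k$, so $k\geq 1$ certifies that every letter type is processed at least once, and this — via the accessibility characterization of recurrence (Proposition \ref{p.accessibility}) — is what certifies that $q$ is recurrent. A proof of this theorem that does not invoke $k\geq 1$ anywhere should be treated as incomplete until the substitute claim ($\mathrm{supp}((I-P)k)$ adequate for every admissible $k$) is itself proved, which is a nontrivial statement about the network rather than a consequence of $\beta\neq 0$.
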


Finally, a result concerning the expected number of letters processed before the network halts, is derived:
\begin{theorem}(\cite{bond2016abelian3}, Theorem 3.7)\label{expected_numbers_letters}
    Let $q$ be a uniform random element of $\text{Rec }\mathcal{N}$.
Then for all $x \in \mathbb{N}^{\mathcal{A}}$, we have
$$\mathbb{E}[x.q] = (I - P)^{-1}x,$$
where $\mathbb{E}$ denotes the mathematical expectation with respect to the uniform measure on $\text{Rec }\mathcal{N}$.
\end{theorem}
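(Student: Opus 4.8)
The plan is to show that the map $f(x):=\mathbb{E}_q[x.q]$, where $q$ is drawn from the uniform measure on $\text{Rec }\mathcal{N}$, is \emph{linear} in $x$, and then to pin down its matrix by testing it against the kernel $K$. Throughout I use that, since $\mathcal{N}$ halts on all inputs, the spectral radius of $P$ is strictly less than $1$ (the equivalence $(2)\Leftrightarrow(3)$ of Lemma~\ref{l.halting}), so $I-P$ is invertible and $(I-P)^{-1}=\sum_{n\ge 0}P^n$ is well defined.

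\emph{Step 1: $f$ is linear.} First I would record the abelian additivity of the odometer: for all $x,y\in\mathbb{N}^{\mathcal{A}}$ and every $q$,
$$[x+y.q] = [y.q] + [x.(y\triangleright\triangleright q)],$$
which holds because one complete legal execution of $(x+y).q$ is obtained by first running a complete legal execution of the $y$-part to reach the state $x.(y\triangleright\triangleright q)$ and then finishing; the abelian property guarantees that the total count $|w|$ is the same for every complete legal execution. Taking expectation over uniform $q$, the only thing to justify is that $y\triangleright\triangleright q$ is again uniform on $\text{Rec }\mathcal{N}$. This is exactly where the critical group enters: on recurrent states $\tau_y=\prod_{a}\tau_a^{y_a}$ acts as the element $e\tau_y\in eM=\text{Crit }\mathcal{N}$, and since $\text{Crit }\mathcal{N}$ acts on $\text{Rec }\mathcal{N}$ by a bijection, the pushforward of the uniform measure is uniform. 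Hence $\mathbb{E}_q[x.(y\triangleright\triangleright q)]=f(x)$ and we get $f(x+y)=f(x)+f(y)$. An additive map on the cone $\mathbb{N}^{\mathcal{A}}$ extends uniquely to a linear one, so $f(x)=\Lambda x$ for some $\Lambda\in\mathbb{Q}^{\mathcal{A}\times\mathcal{A}}$.

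\emph{Step 2: identify $\Lambda$ on the kernel.} Let $q$ be recurrent and $k\in K\cap\mathbb{N}^{\mathcal{A}}$. Because $q$ is locally recurrent and $k\in K$, the local action restores every internal state while emitting exactly $Pk$ new letters, so $k\triangleright q = Pk.q$ by Definition~\ref{d.prodmatrix}. The local action is thus a legal prefix of length $k$ taking $k.q$ to $Pk.q$; concatenating it with a complete legal execution of $Pk.q$ (which completes the global action, by Lemma~2.3 of \cite{bond2016abelian3}) yields a complete legal execution of $k.q$, and well-definedness of the odometer gives the deterministic identity
$$[k.q] = k + [Pk.q].$$
Taking expectation over uniform $q$ and applying the linearity from Step~1 (with $Pk$ a fixed vector) gives $\Lambda k = k + \Lambda Pk$, i.e. $\Lambda(I-P)k=k$. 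Since this holds for every $k\in K\cap\mathbb{N}^{\mathcal{A}}$ and $K$ has finite index in $\mathbb{Z}^{\mathcal{A}}$ (indeed $D\mathbb{Z}^{\mathcal{A}}\subseteq K$), such $k$ span $\mathbb{Q}^{\mathcal{A}}$, whence $\Lambda(I-P)=I$ as matrices and therefore $\Lambda=(I-P)^{-1}$, as claimed.

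The genuinely delicate part is Step~1, and I expect it to be the main obstacle: the odometer additivity $[x+y.q]=[y.q]+[x.(y\triangleright\triangleright q)]$ must be derived from the abelian property (all complete legal executions share the same letter count and the same final state), and the invariance of the uniform measure rests on the bijectivity of the $\text{Crit }\mathcal{N}$-action on $\text{Rec }\mathcal{N}$. Once linearity is secured, Step~2 is essentially formal, the only care needed being that the clean identity $k\triangleright q=Pk.q$ uses the \emph{local} recurrence of $q$, which is automatic since $q\in\text{Rec }\mathcal{N}$.
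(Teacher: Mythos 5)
The paper does not prove this statement: it is imported verbatim from Bond--Levine (\cite{bond2016abelian3}, Theorem 3.7), so there is no in-paper argument to compare against. Judged on its own, your proof is correct and follows essentially the same route as the cited source: additivity of the odometer $[x+y.q]=[y.q]+[x.(y\triangleright\triangleright q)]$ (valid by uniqueness of the odometer for halting abelian networks), invariance of the uniform measure on $\text{Rec }\mathcal{N}$ under $\tau_y$ (since $e\tau_y$ lies in the critical group, which acts bijectively on $\text{Rec }\mathcal{N}$), and the identity $[k.q]=k+[Pk.q]$ for $k\in K\cap\mathbb{N}^{\mathcal{A}}$, which pins down the linear map on a full-rank sublattice because $D\mathbb{Z}^{\mathcal{A}}\subseteq K$. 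The two small points you rightly flag are indeed the only places needing care: the identity $k\triangleright q=Pk.q$ requires $q$ to be locally recurrent (automatic for $q\in\text{Rec }\mathcal{N}$), and the closure of $\text{Rec }\mathcal{N}$ under $\tau_y$ follows from commutativity of the global monoid together with $eq=q$. Invertibility of $I-P$ is guaranteed because halting on all inputs forces the spectral radius of $P$ below one. I see no gap.
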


 \bibliographystyle{plain}

 \bibliography{biblio} 
\end{document}